\title[Self-distributivity, braces $\&$  the Yang-Baxter equation]
{Self-distributive structures, braces $\&$\\ the Yang-Baxter equation}
\author[Anastasia Doikou]{Anastasia Doikou}
\address[Anastasia Doikou] {Department of Mathematics, Heriot-Watt University,
Edinburgh EH14 4AS $\&$ Maxwell Institute for Mathematical Sciences, Edinburgh EH8 9BT, UK}
\email{a.doikou@hw.ac.uk}
 \newcolumntype{2}{D{.}{}{2.0}}
\newcommand{\hiddenpower}[2] { \ifnum \numexpr#2=1 #1 \else #1^#2 \fi }
\numberwithin{equation}{section}
\def\be{\begin{equation}}
\def\ee{\end{equation}}
\def\ba{\begin{eqnarray}}
\def\ea{\end{eqnarray}}
\newcommand{\cal}{\mathcal}
\newcounter{diff_order}
\newcounter{diff_power}
\newcommand{\rawdiff}[3]
{
	\setcounter{diff_order}{0}
	\clist_map_inline:nn{#3}{\stepcounter{diff_order}}
	
	\frac{\hiddenpower{#1}{\thediff_order} #2}
	{
		\def\old_var{DefaultValue}
		\setcounter{diff_power}{0}
		
		\clist_map_inline:nn{#3}
		{
			\def\new_var{##1}
			\ifnum \thediff_power=0
				\stepcounter{diff_power}
			\else
				\tl_if_eq:NNTF \new_var \old_var
				{\stepcounter{diff_power}}
				{
					#1 \hiddenpower{\old_var}{\thediff_power}
					\setcounter{diff_power}{1}
				}
			\fi

			\def\old_var{##1}
		}
		
		#1 \hiddenpower{\old_var}{\thediff_power}
	}
}
\def\Label#1{\label{#1}\ifmmode\llap{[#1] }\else 
  \marginpar{\smash{\hbox{\tiny [#1]}}}\fi} 
  \def\Label{\label} 
\newlength{\bibitemsep}\setlength{\bibitemsep}{.2\baselineskip plus .05\baselineskip minus .05\baselineskip}
\newlength{\bibparskip}\setlength{\bibparskip}{0pt}
\let\oldthebibliography\thebibliography
\renewcommand\thebibliography[1]{%
  \oldthebibliography{#1}%
  \setlength{\parskip}{\bibitemsep}%
  \setlength{\itemsep}{\bibparskip}%
}
\newtheorem{thm}{Theorem}[section]
\newtheorem{lemma}[thm]{Lemma}
\newtheorem{cor}[thm]{Corollary}
\newtheorem{pro}[thm]{Proposition}
\newtheorem{defn}[thm]{Definition}
\newtheorem{rem}[thm]{Remark}
\newtheorem{exa}[thm]{Example}
\newcommand{\id}{\operatorname{id}}
\newcommand{\blue}[1]{\textcolor{blue}{#1}}
\newenvironment{widegather }{\wideregion[-9mm]\gather}{\endgather\endwideregion}
\begin{document}
\vskip 0.8in

\hfill
 \begin{abstract} 
The theory of the set-theoretic Yang-Baxter equation is reviewed from a purely algebraic point of view. 
We recall certain algebraic structures called shelves, racks and quandles. These objects satisfy a {\it self-distributivity} condition and lead to solutions of the Yang-Baxter equation. The quantum algebra as well as the integrability associated to Baxterized involutive set-theoretic solutions is briefly discussed. 
We then present the theory of the universal algebras associated to rack and general set-theoretic solutions. We show that these are quasi-triangular Hopf algebras and we derive the universal set-theoretic Drinfel'd twist. It is shown that this is an admissible twist allowing the derivation of the universal set-theoretic ${\cal R}$-matrix.

\end{abstract}
\maketitle

\date{}




\section{Introduction} 

\noindent 
The Yang-Baxter equation (YBE) was first introduced in a purely physical context in \cite{Yang} as the main mathematical tool for the investigation of quantum systems with many particle interactions, and in \cite{Baxter} 
for the study of statistical model known as the anisotropic Heisenberg magnet. The idea of
set-theoretic solutions to the Yang-Baxter equation was suggested in early 90's by Drinfel'd \cite{Drin} and since then, set-theoretic solutions
have been extensively investigated primarily by means of representations of the braid group, but almost exclusively for the parameter free Yang-Baxter equation (see for instance \cite{EtScSo99, GuaVen, Ru05, Ru07}).
The investigation of set-theoretic solutions of the  Yang-Baxter equation and the associated algebraic structures is a highly active research field that has been particularly prolific, given that a significant number of related studies has been produced over the past
several years (see for instance \cite{Bachi}--\cite{BrzRybMer}  \cite{CaCaSt22}--\cite{Doikoup},\cite{GatMaj}--\cite{GatMaj2}, 
\cite{Pili, JesKub, Lebed, LebVen, chin, Smo1}).   
The study of the set-theoretic Yang-Baxter equation has produced numerous significant connections to distinct mathematical
areas, such as group theory,  algebraic number theory,  Hopf-Galois extensions, non-commutative rings,  knot theory, Hopf algebras and quantum groups,  universal algebras, groupoids, heaps and trusses, pointed Hopf algebras, Yetter-Drinfel'd modules and Nichols algebras (see for instance among \cite{Andru, braceh, Bachi, Brz:Lietruss, BrzRyb:con, BrzRybMer, EtScSo99}, 
\cite{Pili}-\cite{Jesp2},\cite{Jo82}-\cite{LebVen}, \cite{Ru07,Smo1}).
 Moreover,  interesting links  between the
set-theoretic Yang-Baxter equation and geometric crystals \cite{Crystals, Crystal2}, or soliton cellular automatons \cite{Cell, Cell2} have been shown.  Concrete connections with quantum spin-chain like systems were also made in \cite{DoiSmo2, DoiSmo1}.

We note that set-theoretic solution for the parametric Yang-Baxter equation (Yang-Baxter 
maps) have been primarily studied up to date only
in the context of classical discrete and continuous integrable systems connected also to the notion of
Darboux-B\"acklund transformation or the discrete zero curvature condition in the Lax pair formulation 
and the refactorization problem and soliton interactions (see for instance \cite{Adler, Papa,  Papa2, Veselov}, \cite{Abl, Cau, Tsu}).  The refactorization is also synonymous to the so called 
Bianchi's permutability, which describes the exchange of two consecutive B\"acklund transformations. Specifically, Figure 1 describes 
graphically the construction of say  a two-soliton solution for some integrable non-linear ODE or 
PDE via two consecutive  B\"acklund transformations and the use of Bianchi's permutability.

\begin{center}
\begin{eqnarray}
& &\text{\large $ \begin{tikzcd}[row sep = small, column sep = large, ampersand replacement = \&]
	\- \& w_1^{\phantom{1}} \arrow[dr, "\lambda_2", end anchor = {north west}] \& \- \\
	w_0 \arrow[ur, "\lambda_1", start anchor = {north east}] \arrow[dr, swap, "\lambda_2", start anchor = {south east}] \& \- \& w_{12} = w_{21} \\
	\- \& w_2^{\phantom{2}} \arrow[ur, swap, "\lambda_1", end anchor = {south west}] \& \-
\end{tikzcd} $}\nonumber \\
& & \qquad \mbox{1. Bianchi Permutability} \nonumber
 \label{lattice}
\end{eqnarray}
\end{center}
The set-theoretic Yang-Baxter equation may be also seen as a cube or 3D
consistency condition in classical integrable discrete systems (discrete time evolution), see Figure 2.

$ $

\begin{center}
\includegraphics{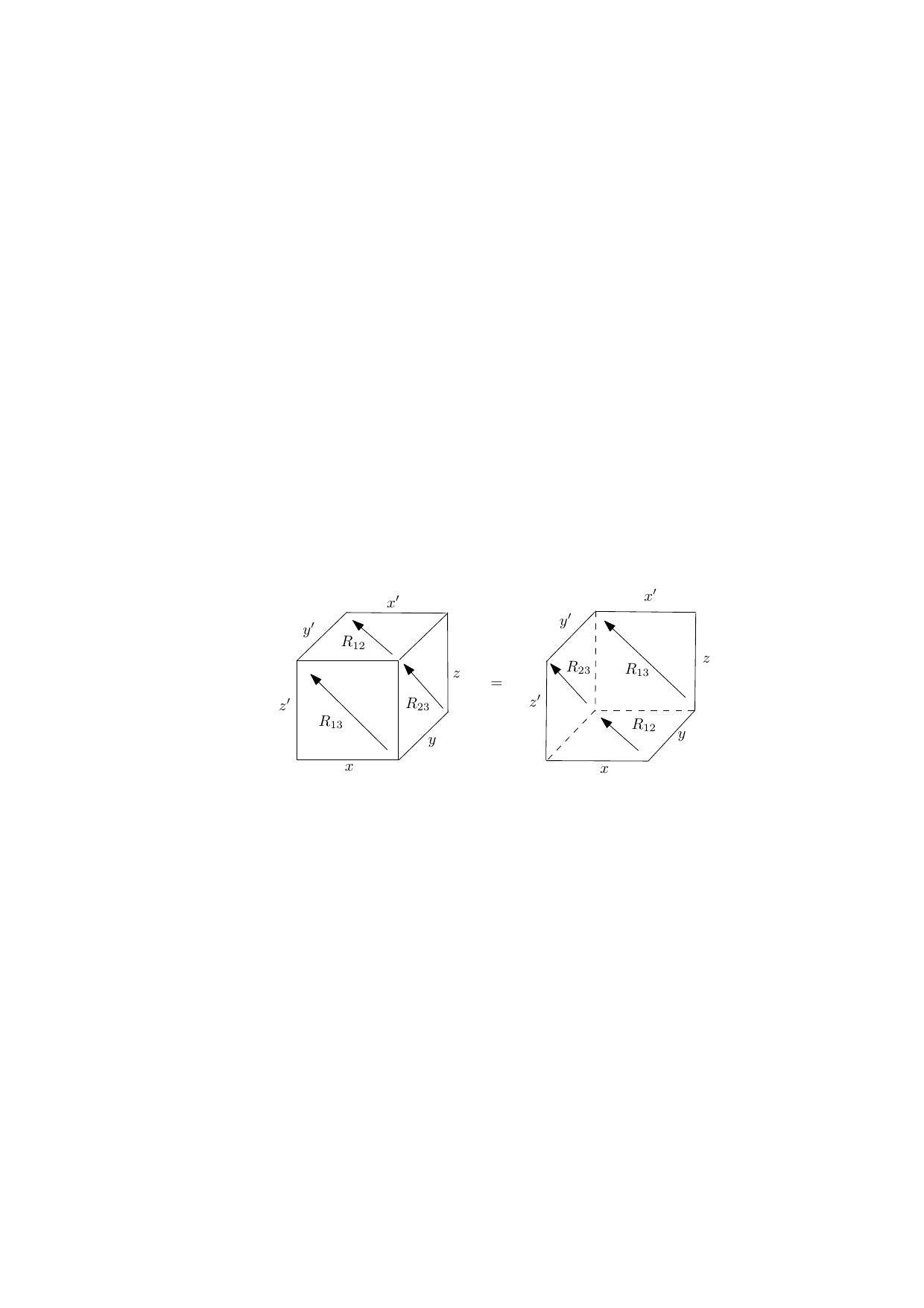}
\end{center}
\begin{center}
{ 2. The 3D consistency condition}
\end{center}
In \cite{Doikoup}  an entirely algebraic analysis for the parametric 
set-theoretic Yang-Baxter equation was undertaken and purely algebraic solutions were produced.
Earlier works on the algebraic structures as well as the associated admissible Drinfel'd twists of the non-parametric, set-theoretic Yang-Baxter equation provided a basic algebraic blueprint \cite {Doikoutw, DoGhVl, DoRySt}.

We introduce now the {\it parameter free} set-theoretic braid equation. Following \cite{Drin}, 
given a non-empty set $X$, a map $\check r:X\times X\to X\times X$ is said 
to be a \emph{set-theoretic solution of the braid equation}, if $\check r$ satisfies the \emph{braid identity}
\begin{equation}\label{eq:braid}
\left(\check r\times\id_X\right)
\left(\id_X\times\check r\right)
\left(\check r\times\id_X\right)
= 
\left(\id_X\times\check r\right)
\left(\check r\times\id_X\right)
\left(\id_X\times\check r\right).
\end{equation}
We call such a map $\check r$ simply a \emph{solution} and write $\left(X,\check r\right)$ to 
denote a solution $\check r$ on a set $X$. Besides, if we write $\check r\left(a,b\right) = \left(\sigma_a\left(b\right), 
\tau_b\left(a\right)\right)$, with $\sigma_a, \tau_a$ maps from $X$ into itself,  then $\check r$ is said to 
be \emph{left non-degenerate} if $\sigma_a$ is bijective for every $a\in X$, \emph{right non-degenerate} 
if $\tau_a$ is bijective for every $a\in X$, and \emph{non-degenerate} if $\check r$ is both left and right non-degenerate.
Furthermore, if $\check r$ is a solution such that $\check r^2=\id_{X \times X}$, then $\check r$ is said to be \emph{involutive}. \\
It is also worth recalling the connection between the set-theoretic braid equation (\ref{eq:braid}) and the 
set-theoretic Yang-Baxter equation (see also e.g. \cite{Jimbo1, Jimbo2}). 
We  introduce the map $r: X\times X\rightarrow X\times X,$ such that $r =  \check r\pi,$ 
where $\pi: X\times X\rightarrow X\times X$ 
is the flip map: $\pi(x,y) = (y,x).$ Hence, $r(y,x) =(\sigma_x(y), \tau_y(x)),$ and it satisfies 
the set-theoretic Yang-Baxter equation:
\begin{equation}
r_{12}\ r_{13}\  r_{23} = r_{23}\ r_{13}\ r_{12},  \label{YBE}
\end{equation} 
where we denote $r_{12}(y,x,w) = (\sigma_x(y), \tau_y(x),w),$ $r_{23}(w,y,x) = (w, \sigma_x(y), \tau_y(x))$ 
and \\  $r_{13}(y,w,x) = (\sigma_x(y), w,\tau_y(x)).$ If $\check r$ is involutive then $r$ satisfies $r_{12} r_{21} = \mbox{id}_{X\times X}$ and is called {\it reversible}.

After having introduced the set-theoretic braid or Yang-Baxter equation, which is the main mathematical object in the present work, we may state the
main aim of this paper, which is the review of basic algebraic structures associated to solutions of the set-theoretic Yang-Baxter equation. Specifically, the key objectives of each of the subsequent sections of this article are presented below.

\begin{itemize}

\item In Section 2 we present basic definitions and examples of fundamental algebraic structures 
associated to solutions of the set-theoretic Yang-Baxter equation. These structures are self distributive 
and are known as shelves, racks and quandles \cite{Jo82, Matv}. They also satisfy axioms
analogous to the Reidemeister moves used to manipulate knot diagrams and are associated to link
invariants. Other fundamental structures introduced more recently are the so-called (skew) braces 
invented precisely for the study of generic solutions of the 
set-theoretic Yang-Baxter equation  \cite{Ru05, Ru07, Ru19, GuaVen}.

\item In Section 3 we focus on Baxterized involutive set-theoretic solutions of the braid equation coming from involutive set-theoretic solutions. 
We present some basic properties of these solutions and we then construct the associated quantum algebra  
via the FRT (Faddeev-Reshetikhin-Takhtajan) construction \cite{FRT}. The Yangian is a special case within this class of quantum algebras.
We also present information on the construction of a new class of integrable
quantum spin chain-like systems associated to set-theoretic solutions (see also \cite{DoiSmo2, DoiSmo1} for a more detailed exposition). 

\item In Section 4 we present the new findings on Drinfel'd twists
for involutive, non-degenerate, set-theoretic solutions. 
We focus on a simple, but characteristic example of set-theoretic solution of the YBE known as Lyubashenko’s solution.
We show that Lyubashenko's solutions can be obtained from the permutation operator via a simple Drinfel'd twist. We derive the simple twist as well as explicit
expressions for the $n$-fold twist. We also present the action of the twist on the Yangian's coproducts 
and the derivation of the twisted co-products associated to the Baxterized Lyubashenko solution.

\item In Section 5,
 which is divided in three subsections, we focus on the Hopf algebras associated to set-theoretic solutions \cite{Doikoutw, DoGhVl, DoRy22, DoRySt}. Specifically, in the first subsection, we introduce the Yang-Baxter algebras of rack/quandle type  
as quasi-triangular Hopf algebras (see also related results in \cite{EtScSo99, Andru, Lebed, braceh}).
We first introduce and study the rack and quandle algebras ${\cal A},$ we then show that these are Hopf algebras and we systematically construct the associated universal invertible (or non-degenerate) ${\cal R}$-matrix (i.e. ${\cal R}^{-1}$ exists) \cite{DoRySt}.
In the second subsection, we suitably extend the quandle algebra and present the set-theoretic Yang-Baxter algebra, which is also a Hopf algebra. A  suitable universal Drinfel'd twist is introduced in the third subsection and it is shown to be admissible.
Then the universal set-theoretic ${\cal R}$-matrices are derived as twists, and we conclude that invertible, universal set-theoretic ${\cal R}$-matrices are coming from universal rack ${\cal R}$-matrices via the admissible Drinfel'd twist. The fundamental representation of the universal ${\cal R}$-matrices are the linearized versions of rack and general set-theoretic solutions, and it is consequently shown that all involutive set-theoretic solutions of the braid equation are coming from the permutation operator via the set-theoretic Drinfel'd twist.
A detailed analysis on these results is presented in \cite{Doikoutw, DoRySt}.

\item In Section 6 using the notion of the admissible Drinfel'd twist as well as a certain algebraic structure associated to any generic  set-theoretic solution called the {\it structure group} we are able to extract explicit invertible set-theoretic solutions from quandles. We provide several explicit examples of solutions emerging from distinct quandles.
\end{itemize}

\section{Preliminaries} 

\noindent Before we present the fundamental algebraic structures associated to set-theoretic 
solutions we first introduce some simple examples of 
finite set-theoretic solutions and the notion of linearization. 

\begin{exa}
$ $
\begin{enumerate}
\item {\bf Flip map.} Let $X= \{1,2,\ldots, n\},$ the map $\check r: X \times X \to X \times X,$ 
such as $\check r (a,b) = (b,a) $ is set-theoretic solution of the  braid equation. This is the simplest set-theoretic solution.

\item {\bf Lyubashenko's solution.}  
Let $X= \{1,2,\ldots, n\},$ and the map $\check r:  X \times X \to X \times X,$ such as $ \check r (a,b) = (b+c,a-c),$ 
where the addition is defined $mod\ n$ and $c\in \{1,2, \ldots, n-1\}.$ 
Then $\check r$ is a solution of the set-theoretic braid equation.   
\end{enumerate}
Notice that both solutions above are involutive, i.e. $\check r^2 = \mbox{id}.$
\end{exa}

\noindent {\bf Linearization.}
{\it Via the linearization process we will be able to express the maps $\check r:  X \times X \to X \times X$ as $n^2 \times n^2$ matrices $\check r \in \mbox{End}({\mathbb  C}X^{\otimes 2})$ (we slightly abuse the notation and use the same symbol $\check r$ for the matrices). Specifically, consider a free vector space $V= \mathbb{C}X$ of dimension equal to the cardinality of $X$. 
Let  ${\mathbb B} = \{\hat e_a\}_{a\in X}$ be the basis of $V$ and ${\mathbb B}^* = \{\hat e^*_a\}_{a\in X}$ 
be the dual basis: $\hat e_a^* \hat e_b= \delta_{a,b },$ also  $e_{a,b} := \hat e_a  \hat e_b^*.$ 
Then any set-theoretic solution of the braid equation is expressed as an $n^2 \times n^2$ matrix:
\begin{equation}
\check r = \sum_{a,b \in X} e_{a, \sigma_{a}(b)} \otimes e_{b, \tau_{b}(a)}
\end{equation}
Specifically, for a set $X =\{x_1, x_2, \ldots, x_n\},$  the canonical basis of the $n$-dimensional vector space reads as expected:
$\hat e_{x_1} = \begin{pmatrix} 1 \\ 0\\ 0\\ \vdots\\0 \end{pmatrix},$ $\hat e_{x_2} = \begin{pmatrix} 0 \\ 1\\ 0 \\\vdots \\ 0 \end{pmatrix},$ \ldots,
$\hat e_{x_n} = \begin{pmatrix} 0 \\ 0\\ \vdots\\ 0\\ 1 \end{pmatrix}.$ }

\begin{rem} The action of the $\check r$-matrix on the tensor product of any two elements of the basis $\hat e_x, \hat e_y  \in V,$ for all $x,y \in X$ reads as:
\begin{equation}
\check r\ \hat e_{\sigma_x(y)} \otimes \hat e_{\tau_y(x)}   =\hat e_x  \otimes \hat e_y\, 
\quad \& \quad \check r^T\   \hat e_x  \otimes \hat e_y  = \hat e_{\sigma_x(y)} \otimes \hat e_{\tau_y(x)},
\end{equation}
where $^T$ denotes total transposition. Or equivalently for any test function $f(x,y),$ $x,y \in X:$ $(\check r f)(x,y) = f(\sigma_x(y), \tau_y(x)).$
Similarly, the action of  $r = {\cal P} \check r,$ where ${\cal P}$ 
is the permutation operator ${\cal P} = \sum_{x,y \in X} e_{x,y} \otimes e_{y,x},$ 
on $\hat e_x \otimes \hat  e_y\in V \otimes V$ is
\begin{equation}
r\ \hat e_{\sigma_x(y)} \otimes \hat e_{\tau_y(x)}   =\hat e_y  \otimes \hat e_x\, 
\quad \& \quad  r^T\   \hat e_y \otimes \hat e_x  = \hat e_{\sigma_x(y)} \otimes \hat e_{\tau_y(x)} .
\end{equation}
\end{rem}
We consider for the purposes of this article only finite sets. 
The linerization process can be formally generalized in the case of 
infinite countable sets as above. In the case of compact sets the use of
functional analysis and the study of kernels of integral operators that represent 
the solution $\check r$ is required.

We consider the simple, but non-trivial example of the Lyubashenko solution \cite{Drin}.
\begin{exa} The Lyubashenko solution may be expressed as a $n^2 \times n^2$ matrix: 
\begin{equation}
\check r^{(c)} = \sum_{a,b =1}^n e_{a, b+c} \otimes e_{b, a-c}.
\end{equation}
The addition is defined $mod\ n,$ and for all $x, y \in \{1,2, \ldots, n\}$ and for a fixed $c \in \{1,2,\ldots, n-1\},$
\begin{equation}
    \check r^{(c)}\  \hat e_{y+c}  \otimes \hat e_{x-c}  = \hat e_x  \otimes \hat e_y  , \quad 
    r^{(c)}\ \hat e_{y+c}  \otimes \hat e_{x-c}  =\hat  e_y \otimes \hat e_x.    
    \end{equation}

We focus here on the case $n=3$ and present the two distinct solutions below as $9\times 9$ matrices:
\begin{enumerate}
    \item $c=1,$ $\check r^{(1)} = \sum_{a,b} e_{a, b+1} \otimes e_{b, a-1}.$  
 We write all the nine non-zero terms:
\begin{eqnarray}
\check r^{(1)}  &=&
e_{1, 2} \otimes e_{1, 3} + e_{2, 3} \otimes e_{2,1} + e_{3, 1} \otimes e_{3,2}
+ e_{1, 3} \otimes e_{2,3}+ e_{1,1} \otimes e_{3,3} \nonumber\\ &+& e_{2,1 } \otimes e_{3,1} 
+ e_{2,2 } \otimes e_{1,1}+ e_{3,2 } \otimes e_{1,2}+ e_{3,3 } \otimes e_{2,2}. \nonumber
\end{eqnarray}

\item $c=2,$ $\check r^{(2)} = \sum_{a,b} e_{a, b+2} \otimes e_{b, a-2},$ and explicitly,
\begin{eqnarray}
\check r^{(2)}  &=&
e_{1, 3} \otimes e_{1, 2} + e_{2, 1} \otimes e_{2,3} + e_{3, 2} \otimes e_{3,1}
+ e_{1, 1} \otimes e_{2,2} \nonumber\\ &+& e_{1,2} \otimes e_{3,1}  + e_{2,2} \otimes e_{3,3} 
+ e_{2,3} \otimes e_{1,3}+ e_{3,3} \otimes e_{1,1}+ e_{3,1} \otimes e_{2,1}. \nonumber
\end{eqnarray}

\end{enumerate}
The explicit $9\times 9$ matrices are,

\begin{equation} 
\check r^{(1)}= 
\begin{pmatrix} 0 & 0  &  0 &  0& 0& 1& 0& 0&0 \\
 0& 0& 0& 0 &0 & 0& 0&0 &1  \\
  0& 0& 1& 0& 0& 0& 0 &0 &0 \\
   0& 0& 0& 1 &0 & 0& 0&0&0 \\
 0& 0& 0& 0 &0 & 0& 1&0&0 \\    
 1& 0& 0& 0 &0 & 0& 0&0 &0\\  
  0& 0& 0& 0 &1 & 0& 0&0 &0\\
   0& 0& 0& 0 &0 & 0& 0&1 &0\\
    0& 1& 0& 0 &0 & 0& 0&0&0\\
    \end{pmatrix}, \quad \check r^{(2)}= 
\begin{pmatrix} 0 & 0  &  0 &  0& 0& 0& 0& 1&0 \\
 0& 1& 0& 0 &0 & 0& 0&0 &0  \\
  0& 0& 0& 0& 1& 0& 0 &0 &0 \\
   0& 0& 0& 0 &0 & 0& 0&0&1 \\
 0& 0& 1& 0 &0 & 0& 0&0&0 \\    
 0& 0& 0& 0 &0 & 1& 0&0 &0\\  
  0& 0& 0& 0 &0 & 0& 1&0 &0\\
   1& 0& 0& 0 &0 & 0& 0&0 &0\\
    0& 0& 0& 1 &0 & 0& 0&0&0\\
    \end{pmatrix}. \nonumber
\end{equation}
Notice that $\check r^{(2)} = {\cal P} {\check r}^{(1)} {\cal P},$ where ${\cal P}$ is the permutation operator. In general, for any $n \in \{1,2, \ldots\}$ we observe that $\check r^{(n-k)} = {\cal P} {\check r}^{(k)} {\cal P},$ $k\in \{1,2,\ldots, [{n\over 2}]\}.$

\end{exa}

\subsection{Self distributivity $\&$ special non-involutive solutions}

\noindent  To describe non-involutive solutions of the braid equation we introduce certain algebraic structures that satisfy a self distributivity condition. Self distributive structures, such as shelves, racks $\&$ quandles \cite{Shelf-history2, Jo82, Matv} 
satisfy axioms analogous 
to the Reidemeister moves used to manipulate knot diagrams and are associated to link invariants (see also biracks, biquandles), 
and the coloring of links, i.e. a knot is tri-colored or not. According to Alexander's theorem all links are closed braids, hence these self distributive structures lead naturally to special non-involutive, set-theoretic solutions of the braid equation. Other algebraic structures naturally connected to bijective, non-degenerate solutions are the \emph{bi-racks}. These are algebraic structures that appear in low-dimensional topology that are associated to link diagrams, and are invariant under the generalized Reidemeister moves for virtual knots and links, see \cite{FeJSKa04}. We provide in what follows some preliminaries on left shelves, racks and quandles. For the first systematic study of shelves, we refer the interested reader to \cite{Shelf-history2}. For recent reviews on self-distributive structures the interested reader is referred to \cite{Rev1, Rev2, Rev3} and \cite{Phys} for potential physical applications.

\begin{defn}
    Let $X$ be a non-empty set and\, $\triangleright$\, a binary operation on $X$. Then, the pair $\left(X,\,\triangleright\right)$ 
is said to be a \emph{left shelf} if\, $\triangleright$\, is left self-distributive, namely, the identity
    \begin{equation}\label{eq:shelf}
        a\triangleright\left(b\triangleright c\right)
        = \left(a\triangleright b\right)\triangleright\left(a\triangleright c\right) 
    \end{equation}
    is satisfied, for all $a,b,c\in X$. Moreover, a left shelf $\left(X,\,\triangleright\right)$ is called 
    \begin{enumerate}
        \item  a \emph{left spindle} if $a\triangleright a = a$, for all $a\in X$;
        \item  a \emph{left rack} if $\left(X,\,\triangleright\right)$ is a \emph{left quasigroup}, i.e., the maps $L_a:X\to X$ defined by $L_a\left(b\right):= a\triangleright b$, 
        for all $b\in X$, are bijective, for every $a\in X$.
        \item a \emph{quandle} if $\left(X,\,\triangleright\right)$ is both a left spindle and a left rack.
    \end{enumerate}
\end{defn}

We are mostly interested in racks and quandles here, given that we always require invertible solutions of 
the Yang-Baxter equation. We provide below some fundamental known cases of quandles and racks (see also \cite{Rev1, Rev2, Rev3}):
\begin{enumerate}[{(a)}]
   \item {{\bf Conjugate quandle.} }
   Let {$(X, \cdot)$} be a group and define $\triangleright: X \times X \to X,$ 
such that {$a\triangleright b =a^{-1} \cdot b \cdot a.$} Then $(X, \triangleright)$ is a quandle.    
    
    \item {\bf Core quandle.} Let {$(X, \cdot)$} be a group and $\triangleright: X \times X \to X,$ 
such that {$a\triangleright b =a \cdot b^{-1} \cdot a.$} Then $(X, \triangleright)$ is a quandle.

\item {\bf Alexander (affine) quandle.} Let $Q$ be a ${\mathbb Z}[t, t^{-1}]$ ring module and 
$\triangleright: Q \times Q \to Q,$ $a\triangleright b = (1-t)a +bt,$ then $(Q, \triangleright)$ is a quandle. \\
Or, let $X$ be a non empty set equipped with two group operations, $+$ and $\circ,$ 
such that $a\circ (b+c) = a\circ b -a + a\circ c.$ This is a so-called left skew brace. 
The precise definition is given later in the text. 
Then define $\triangleright: X \times X \to X,$ such that for a fixed $z\in X$ and for all $a,b \in X,$ 
$a\triangleright b = z- a \circ z +b \circ z -z +a$ and 
$(a+b) \circ z =a \circ z - z + a \circ b.$



    \item {\bf Rack, but not quandle.} Let $(G,\cdot)$ be a group and define 
    $\triangleright: G \times G \to G,$ such that $a\triangleright b = b\cdot a^{-1} \cdot x \cdot a,$ 
    where $x\in G$ is fixed. Then $(G, \triangleright)$ is a rack, but not a quandle.   \end{enumerate}

We also present below some concrete examples of finite quandles:
\begin{exa}  \label{exx}

$ $
    \begin{enumerate}
    \item {\bf The dihedral quandle.}
     Let $i,j \in X = {\mathbb Z}_n$ and define $\triangleright: X \times X \to X,$ such that $i\triangleright j = 2i -j$ $mod\ n:$  $(X, \triangleright)$ is a quandle. 
     This is a core quandle with an abelian group. An explicit table of the action $\triangleright$ is presented below for $n=3$
and $X = \{x_1=0, x_2=1, x_3=2 \}:$
\begin{center}
\begin{tabular}{ |c|c|c|c| } 
\hline
$\triangleright$ & $x_1$ &  $x_2$  & $x_3$\\
\hline
\multirow{3}{2em}{\ $x_1$ \\ \ $x_2$\\ \ $x_3$ } 
& $x_1$ & $x_3$ & $x_2$ \\ 
& $x_3$ & $x_2$ & $x_1$  \\ 
& $x_2$ & $x_1$ & $x_3$ \\
\hline
\end{tabular}
\end{center}
\begin{center}
{Table 1}
\end{center}

Although the table above comes from the dihedral quandle, another 
simple representation exists, $\rho: X \to \mbox{End}(\mathbb{C}X),$ such that

\begin{equation}
{x_1 \mapsto \begin{pmatrix}  1& 0&0\\
0&0&1\\
0&1&0
\end{pmatrix},  \quad  x_2 \mapsto 
\begin{pmatrix}  0& 0&1\\
0&1&0\\
1&0&0
\end{pmatrix},} \quad x_3 \mapsto \begin{pmatrix}  0& 1&0\\
1&0&0\\
0&0&1
\end{pmatrix}.  \nonumber \end{equation}
$\rho(x\triangleright y)= \rho(x)^{-1} \bullet \rho(y) \bullet \rho(x)$ for all $x,y \in X$ 
and $\bullet$ is the usual matrix multiplication. 

$ $
    \item {\bf The tetrahedron quandle.} Let $X = \{1,2,3,4\}$ and define $\triangleright: X \times X \to X,$ 
    such that $1\triangleright = (234),$ $2\triangleright = (143),$ $3\triangleright = (124)$ and $4\triangleright =(132).$ 
    This is also a cyclic quandle.  We construct below the explicit table of the action  $\triangleright:$
\begin{center}
\begin{tabular}{ |c|c|c|c|c| } 
\hline
$\triangleright$ & $x_1$ &  $x_2$  & $x_3$ & $x_4$ \\
\hline
\multirow{3}{2em}{\ $x_1$ \\ \ $x_2$\\ \ $x_3$ \\ \ $x_4$ } 
& $x_1$ & $x_3$ & $x_4$ & $x_2$\\ 
& $x_4$ & $x_2$ & $x_1$ & $x_3$\\ 
& $x_2$ & $x_4$ & $x_3$  & $x_1$ \\
& $x_3$ & $x_1$ & $x_2$  & $x_4$ \\
\hline
\end{tabular}
\end{center}
\begin{center}
{Table 2}
\end{center}

    $ $
    
    \item {\bf Example of affine quandle.}  
    Let $X= U_m $ denote the set of odd integers mod $2^m,$ 
$m \in {\mathbb N},$ and the two operation be $+_1,$ such that $a+_1b = a -1+b$ and $\circ,$ where $+$ and $\circ$ are the usual addition and  multiplication. Then a quandle is obtained as in case (c) (Alexander's quandle), i.e. for a fixed $z\in {U_m}$ define, $a\triangleright b = z -_1 a\circ z +_1 b \circ  z {-_1} z +_1 a.$ Specifically, the sets are: 1. for $m=1,$ $U_1 =\{1\},$ 2. for $m=2,$ $U_2 = \{1,\ 3\},$ 3. for $m=3,$ $U_3
= \{1,\ 3,\ 5,\ 7\},$ etc.

Recalling that $a+_1 b = a- 1+b$ and that $(X,+)$ in an abelian group we conclude that $a\triangleright b = -a\circ z + b\circ z + a.$  For instance for
${m=3}$ {($X= \{x_1 =1,\ x_2=3,\ x_3=5,\ x_4=7\}$)} and by choosing for example  $z=3$ we obtain the following table

\begin{center}
\begin{tabular}{ |c|c|c|c|c| } 
\hline
$\triangleright$ & $x_1$ &  $x_2$  & $x_3$ & $x_4$ \\
\hline
\multirow{3}{2em}{\ $x_1$ \\ \ $x_2$\\ \ $x_3$ \\ \ $x_4$ } 
& $x_1$ & $x_4$ & $x_3$ & $x_2$\\ 
& $x_3$ & $x_2$ & $x_1$ & $x_4$\\ 
& $x_1$ & $x_4$ & $x_3$  & $x_2$ \\
& $x_3$ & $x_2$ & $x_1$  & $x_4$ \\
\hline
\end{tabular}
\end{center}

\begin{center}
{Table 3}
\end{center}
\end{enumerate}
\end{exa}
Notice here the distributivity rule between $+_1$ and $\circ$ in case (3) of Example \ref{exx} of an affine quandle. 
Indeed, we observe that for all $a,b,c \in X$ $a \circ (b  +_1 c ) = a \circ b -_1 a +_1 a\circ c$ (see also later in the text the 
definition of skew braces, Definition \ref{defbrace}).

We recall now a fundamental statement regarding shelves and solutions of the set-theoretic Yang-Baxter equation.
\begin{pro} \label{shelf2}
We define 
the binary operation $\triangleright:  X \times X \to X,$ $(a,b) \mapsto a \triangleright b.$ Then $\check r: X \times X \to X \times X$, such that for all $a,b \in X, $  $\check r(a,b) = (b, b \triangleright a)$ is a solution of the set-theoretic braid equation if and only if $(X, \triangleright)$ is a shelf.
\end{pro}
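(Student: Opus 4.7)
\medskip
\noindent \textbf{Proof plan.} The plan is to verify both directions of the equivalence by a direct componentwise computation of the two sides of the braid identity \eqref{eq:braid} on a generic triple $(a,b,c) \in X^{3}$, reading off the self-distributivity law by comparing entries. The formula $\check r(a,b) = (b, b\triangleright a)$ corresponds in the notation of the paper to the choice $\sigma_a(b) = b$ and $\tau_b(a) = b \triangleright a$, so the only nontrivial data carried by $\check r$ is the operation $\triangleright$.

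First I would compute the left-hand side of \eqref{eq:braid}. Applying $\check r \times \id_X$ sends $(a,b,c)$ to $(b,\, b \triangleright a,\, c)$; then $\id_X \times \check r$ gives $(b,\, c,\, c \triangleright (b \triangleright a))$; finally $\check r \times \id_X$ yields
\begin{equation*}
\bigl(c,\, c \triangleright b,\, c \triangleright (b \triangleright a)\bigr).
\end{equation*}
Next I would compute the right-hand side. Applying $\id_X \times \check r$ to $(a,b,c)$ gives $(a,\, c,\, c \triangleright b)$; then $\check r \times \id_X$ produces $(c,\, c \triangleright a,\, c \triangleright b)$; finally $\id_X \times \check r$ yields
\begin{equation*}
\bigl(c,\, c \triangleright b,\, (c \triangleright b) \triangleright (c \triangleright a)\bigr).
\end{equation*}

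The first two components of both sides agree automatically. Comparing third components, the braid identity holds on every triple $(a,b,c)$ if and only if
\begin{equation*}
c \triangleright (b \triangleright a) = (c \triangleright b) \triangleright (c \triangleright a) \qquad \text{for all } a,b,c \in X,
\end{equation*}
which is exactly the left self-distributivity condition \eqref{eq:shelf}. Thus $(X,\triangleright)$ being a left shelf is equivalent to $\check r$ being a solution of the set-theoretic braid equation, which establishes the proposition. There is no real obstacle here: the entire argument reduces to a careful bookkeeping of the three coordinates under the six maps involved, and the self-distributivity pops out as the minimal condition needed to match the only non-trivial entry.
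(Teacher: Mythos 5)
Your proof is correct and follows exactly the route the paper indicates (direct substitution into the braid identity and comparison of the left- and right-hand sides on a generic triple); the paper simply omits the explicit bookkeeping that you carry out. The third-component comparison yielding $c\triangleright(b\triangleright a)=(c\triangleright b)\triangleright(c\triangleright a)$ for all $a,b,c\in X$ is precisely the left self-distributivity condition \eqref{eq:shelf} up to renaming of variables, so both directions of the equivalence are established.
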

\begin{proof} The proof is straightforward by direct substitution in the Yang-Baxter equation and comparison between LHS and RHS (a graphical depiction of the proof is given below in Figure 4).
\end{proof}

\begin{rem} \label{rack2} 
If $\check r: X \times X \to X \times X$, such that for all $a,b \in X, $  $\check r(a,b) = (b, b \triangleright a)$ 
is an invertible braid solution then $(X, \triangleright)$ is a rack (or a quandle).
\end{rem}

The graphical representation of the shelve solution $\check r(a,b) = (b, b\triangleright a)$:

\begin{figure}[h]
\centering
\begin{tikzpicture}[scale=1.3]
\draw [rounded corners](0,0)--(0,0.25)--(1,0.75)--(1,1);
\draw [rounded corners](1,0)--(1,0.25)--(0.55,0.4);
\draw [rounded corners](0,1)--(0,0.75)--(0.35,0.55);
\node  at (0,-0.2)  {$b$};
\node  at (1,-0.2)  {$b \triangleright a$};
\node  at (-.1,1) [above] {$a$};
\node  at (1.1,1) [above] {$b$};
\end{tikzpicture}
\end{figure}

We are interested here on invertible solutions of the braid equation so we are focusing on rack solutions.
We note that the inverse of $\check r$ above is $\check r^{-1}: X \times X \to X \times X,$ 
$\check r^{-1} (a,b) = (a \triangleright^{-1} b, a),$  such that $a\triangleright(a\triangleright^{-1}b) = a \triangleright^{-1}(a\triangleright b)=b$ for all $a,b \in X.$ Notice also that a different map denoted as $\check r': X\times X \to X\times X,$ such that
$\check r'(a,b) = (a \triangleright b, a)$ is also a solution of the braid equation. $\check r'$ is graphically 
depicted below in Figure 3 together with the braid relation in Figure 4.
\begin{figure}[h]
\centering
\begin{tikzpicture}[scale=1.3]
\draw [rounded corners](0,0)--(0,0.25)--(0.4,0.4);
\draw [rounded corners] (0.6,0.6)--(1,0.75)--(1,1);
\draw [rounded corners](1,0)--(1,0.25)--(0,0.75)--(0,1);
\node  at (0,-0.2)   {${a \triangleright b}$};
\node  at (1,-0.2)  {$a$};
\node  at (-.1,1) [above] {$a$};
\node  at (1.1,1) [above] {$b$};
\node  at (2,.5){ };
\end{tikzpicture}
\begin{center}
{3. The shelve braiding}
\end{center}
$ $

$ $

\begin{tikzpicture}[xscale=1,yscale=1]
\draw [rounded corners](0,0)--(0,0.25)--(0.4,0.4);
\draw [rounded corners](0.6,0.6)--(1,0.75)--(1,1.25)--(1.4,1.4);
\draw [rounded corners](1.6,1.6)--(2,1.75)--(2,3);
\draw [rounded corners](1,0)--(1,0.25)--(0,0.75)--(0,2.25)--(0.4,2.4);
\draw [rounded corners](0.6,2.6)--(1,2.75)--(1,3);
\draw [rounded corners](2,0)--(2,1.25)--(1,1.75)--(1,2.25)--(0,2.75)--(0,3);
\node  at (-0.7,-0.4)  {$\blue{(a\triangleright b)\triangleright (a \triangleright c)}$};
\node  at (1,-0.4)  {${a\triangleright b}$};
\node  at (2.0,-0.4)  {$a$};
\node  at (-.2,3.0) [above] {$a$};
\node  at (1.1,3.0) [above] {$b$};
\node  at (2.1,3.0) [above] {$c$};
\node  at (3.8,1.5){\Large $\overset{}{=}$};
\node  at (4,-.1){ };
\label{P:YBE1}\end{tikzpicture}
\begin{tikzpicture}[xscale=1,yscale=1]
\draw [rounded corners](1,1)--(1,1.25)--(1.4,1.4);
\draw [rounded corners](1.6,1.6)--(2,1.75)--(2,3.25)--(1,3.75)--(1,4);
\draw [rounded corners](0,1)--(0,2.25)--(0.4,2.4);
\draw [rounded corners](0.6,2.6)--(1,2.75)--(1,3.25)--(1.4,3.4);
\draw [rounded corners](1.6,3.6)--(2,3.75)--(2,4);
\draw [rounded corners](2,1)--(2,1.25)--(1,1.75)--(1,2.25)--(0,2.75)--(0,4);
\node  at (-0.4,0.7)   {$\blue{a \triangleright (b \triangleright c)}$};
\node  at (1,0.7)  {${a\triangleright b}$};
\node  at (2.0,0.7)  {$a$};
\node  at (-.2,4.0) [above] {$a$};
\node  at (1.1,4.0) [above] {$b$};
\node  at (2.1,4.0) [above] {$c$};
\node  at (2.3,0.9){ };
\end{tikzpicture}
\begin{center}{4. Shelve solution of the braid equation}
\end{center}
\end{figure}

\begin{exa} We express the solution of the braid equation associated to the dihedral and tetrahedron quandle of Example \ref{exx} as $9 \times 9$ and $16\times 16$ matrices respectively. Recall from linearization we obtain, $\check r$ as a matrix, $\check r = \sum_{x,y\in X}e_{x, y} \otimes e_{y, y\triangleright x },$ where $e_{x,y}$ are the elementary $n\times n$ matrices $e_{x,y} = e_x e_y^T$ ($^T$ denotes transposition). In our examples $n=3, 4$.
\begin{enumerate}
\item (A non-involutive solution from the dihedral quandle).  
More specifically, from Table 1
 \begin{eqnarray}
 \check r &=& \sum_{j=1}^3e_{x_j, x_j} \otimes e_{x_j, x_j}+ e_{x_1, x_2} \otimes  e_{x_2, x_3}+ e_{x_2, x_1} \otimes  e_{x_1, x_3}+e_{x_2, x_3} \otimes  e_{x_3, x_1} 
 \nonumber\\ & + & e_{x_3, x_2} \otimes  e_{x_2, x_1}  + e_{x_1, x_3} \otimes  e_{x_3, x_2} + e_{x_3, x_1} \otimes  e_{x_1, x_2}. \nonumber
 \end{eqnarray}
Then $\check r$ is explicitly expressed as a $9\times 9$ matrix,
\begin{equation} 
\check r= 
\begin{pmatrix} 1 & 0  &  0 &  0& 0& 0& 0& 0&0 \\
 0& 0& 0& 0 &0 & 1& 0&0 &0  \\
  0& 0& 0& 0& 0& 0& 0 &1 &0 \\
   0& 0& 1& 0 &0 & 0& 0&0&0 \\
 0& 0& 0& 0 &1 & 0& 0&0&0 \\    
 0& 0& 0& 0 &0 & 0& 1&0 &0\\  
  0& 1& 0& 0 &0 & 0& 0&0 &0\\
   0& 0& 0& 1 &0 & 0& 0&0 &0\\
    0& 0& 0& 0 &0 & 0& 0&0&1\\
    \end{pmatrix}. \nonumber
\end{equation}

\item (A non-involutive solution from the tetrahedron quandle).  
More specifically, from Table 2
 \begin{eqnarray}
 \check r &=& \sum_{j=1}^4 e_{x_j, x_j} \otimes e_{x_j, x_j}+ e_{x_1, x_2} \otimes  e_{x_2, x_4}+ e_{x_2, x_1} \otimes  e_{x_1, x_3}+e_{x_2, x_3} \otimes  e_{x_3, x_4} 
 \nonumber\\ & + & e_{x_3, x_2} \otimes  e_{x_2, x_1}  + e_{x_1, x_3} \otimes  e_{x_3, x_2} + e_{x_3, x_1} \otimes  e_{x_1, x_4} 
  + e_{x_2, x_4} \otimes  e_{x_4, x_1 }
  \nonumber\\ & + & e_{x_4, x_2} \otimes  e_{x_2, x_3 }  + e_{x_3, x_4} \otimes  e_{x_4, x_2 } + e_{x_4, x_3} \otimes  e_{x_3, x_1 } + e_{x_1, x_4} \otimes  e_{x_4, x_3 }  + e_{x_4, x_1} \otimes  e_{x_1, x_2 }.  \nonumber
 \end{eqnarray}
 \end{enumerate}
\end{exa}

\subsection{Skew braces $\&$ generic set-theoretic solutions}
It is useful to recall at this point  the definition of (skew) braces \cite{Ru05}-\cite{Ru19}, \cite{CeJeOk14, GuaVen} as 
this will allow us to derive generic solutions of the set-theoretic 
Yang-Baxter equation of the type $r: X \times X \to X\times X,$ $r(b,a) = (\sigma_a(b), \tau_b(a))$.
\begin{defn} \label{defbrace} 
A {\it left skew brace} is a set $B$ together with two group operations $+,\circ :B\times B\to B$, 
the first is called addition and the second is called multiplication, such that for all $ a,b,c\in B$,
\begin{equation}\label{def:dis}
a\circ (b+c)=a\circ b-a+a\circ c.
\end{equation}
If $+$ is an abelian group operation, then $B$ is called a 
{\it left brace}.
Moreover, if $B$ is a left skew brace and for all $ a,b,c\in B$ $(b+c)\circ a=b\circ a-a+c \circ a$, then $B$ is called a 
{\it two sided skew brace.} Analogously if $+$ is abelian and $B$ is a skew brace, then $B$ is called a {\it two sided brace}.
\end{defn}
The additive identity of a skew brace $B$ will be denoted by $0$ and the multiplicative identity by $1$.  
In every skew brace $0=1$.

From now on when we say skew brace we mean left skew brace. Some useful examples of braces are presented below:

\begin{exa}[See \cite{BrzRyb:con} Corollary 3.14]\label{ex:cyclicbraces}
Let $\mathrm{U}(\mathbb{Z}/2^m\mathbb{Z})$ denote a set of invertible integers modulo $2^m$, for some $m \in \mathbb{N}$. 
Then a triple $(\mathrm{U}(\mathbb{Z}/2^m\mathbb{Z}),+_1,\circ )$ is a brace, where $a+_1b=a-1+b,$ 
for all $a,b\in \mathrm{U}(\mathbb{Z}/2^m\mathbb{Z})$, $+$ and $\circ $ are addition and multiplication of integer numbers modulo 
$2^m$, respectively. 
\end{exa}

\begin{exa}[See \cite{BrzRybMer} Example 5.7]
Let us consider a ring $\mathbb{Z}/8\mathbb{Z}$. A triple $$\left(\mathrm{OM}:=\left \{\begin{pmatrix}
a & b\\
c & d\end{pmatrix}\ |\ a,d\in \{1,3,5,7\},\  b,c\in\{0,2,4,6\}
\right \},+_\mathbb{I},\circ\right)$$ is a brace, where $(A,B)\overset{+_{\mathbb{I}}}{\longmapsto} A-\mathbb{I}+B$, $(A,B)\overset{\circ}{\longmapsto} A\cdot B$, and $+,\cdot$ are addition and multiplication of two by two matrices over $\mathbb{Z}/8\mathbb{Z}$, respectively. 
\end{exa}

\begin{exa}[See \cite{BrzRybMer} Example 5.6 or \cite{BrzRyb:con} Example 3.15]\label{ex:fractions} 
Let us consider a set $\mathrm{Odd}:=\big\{\frac{2n+1}{2k+1}\ |\ n,k\in\mathbb{Z}\big\}$ together with two binary operations 
$(a,b)\overset{+_1}{\longmapsto}a-1+b$ and $(a,b)\overset{\circ}{\longmapsto}a\cdot b$, where $+,\cdot$ 
are addition and multiplication 
of rational numbers, respectively. The triple $(\mathrm{Odd},+_1,\circ)$ is a brace. 
\end{exa}

We recall now the basic conditions associated to any generic 
solution of the set-theoretic Yang-Baxter equation as they will be used in our analysis here.
\begin{pro} 
Let $X$ be a non-empty set, and define for all $a,b\in X,$ the maps $\sigma_a, \tau_b: X \to X,$ 
$b \mapsto \sigma_a(b)$ and $a \mapsto \tau_b(a).$ 
Then $r:X \times X \to X \times X$, 
such that for all $a,b\in X,$ $r(b,a) = (\sigma_a(b), \tau_b(a))$ is a solution of the set-theoretic Yang-Baxter equation 
if and only if for all $a,b,c \in X,$
\begin{eqnarray}
&&  \sigma_a(\sigma_b(c)) = \sigma_{\sigma_a\left(b\right)}(\sigma_{\tau_b\left(a\right)}(c)) \label{C1}\\ 
&& \tau_c(\tau_b(a)) =\tau_{\tau_c\left(b\right)}(\tau_{\sigma_b\left(c\right)}(a)) \label{C2}\\
&& \sigma_{\tau_{\sigma_b\left(c\right)}\left(a\right)}
    (\tau_c\left(b\right)) 
   = \tau_{\sigma_{\tau_b\left(a\right)}\left(c\right)}(\sigma_a\left(b\right) ). \label{C3} \end{eqnarray}
\end{pro}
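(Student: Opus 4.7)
The strategy is a direct computation: evaluate both sides of the Yang-Baxter equation \eqref{YBE} on an arbitrary triple $(c,b,a)\in X\times X\times X$ and compare the three output components. Since each $r_{ij}$ acts non-trivially only on the $i$-th and $j$-th tensor factors (leaving the remaining slot as a passive spectator), each intermediate step introduces exactly two "new" letters and keeps the third one untouched. The proposition will then follow by reading off the three component-equalities and matching each with one of \eqref{C1}, \eqref{C2}, \eqref{C3}.

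Concretely, for the left-hand side $r_{12}r_{13}r_{23}(c,b,a)$, I would first apply $r_{23}$ to obtain $(c,\sigma_a(b),\tau_b(a))$; then $r_{13}$, acting on the pair $(c,\tau_b(a))$ in slots one and three, produces a first component $\sigma_{\tau_b(a)}(c)$ and a third component $\tau_c(\tau_b(a))$, while the middle component $\sigma_a(b)$ is untouched. Finally $r_{12}$ acts on the pair $(\sigma_{\tau_b(a)}(c),\sigma_a(b))$, giving a first component $\sigma_{\sigma_a(b)}(\sigma_{\tau_b(a)}(c))$ and a middle component $\tau_{\sigma_{\tau_b(a)}(c)}(\sigma_a(b))$. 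An analogous bookkeeping for $r_{23}r_{13}r_{12}(c,b,a)$ yields first component $\sigma_a(\sigma_b(c))$, third component $\tau_{\tau_c(b)}(\tau_{\sigma_b(c)}(a))$, and middle component $\sigma_{\tau_{\sigma_b(c)}(a)}(\tau_c(b))$.

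Comparing the two sides componentwise, the first slot gives \eqref{C1}, the third slot gives \eqref{C2}, and the middle slot (which is the only one mixing $\sigma$ and $\tau$) gives \eqref{C3}. For the reverse implication, assuming \eqref{C1}--\eqref{C3} hold, the same computation shows the two sides coincide on the arbitrary triple $(c,b,a)$, hence globally on $X\times X\times X$, so $r$ satisfies \eqref{YBE}. Together the two directions establish the claimed equivalence.

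There is no substantive obstacle in this argument: it is purely bookkeeping. The only genuine pitfall is the index convention $r(b,a)=(\sigma_a(b),\tau_b(a))$, in which $\sigma$ carries the label of the \emph{second} argument and $\tau$ the label of the \emph{first}. It is therefore essential to track at every intermediate step which letter has just been produced and which letter has merely been relabelled by a previous map, since swapping a subscript produces conditions that look superficially similar to \eqref{C1}--\eqref{C3} but differ in the placement of $\sigma$'s and $\tau$'s inside the nested arguments.
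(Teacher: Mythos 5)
Your proposal is correct and follows essentially the same route as the paper: a direct evaluation of both sides of the Yang--Baxter equation on $(c,b,a)$, with the intermediate outputs you compute agreeing exactly with the paper's displayed expressions \eqref{lhs2} and \eqref{rhs2}, and the componentwise comparison yielding \eqref{C1}--\eqref{C3} as stated. Your version merely spells out the intermediate steps that the paper leaves implicit.
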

\begin{proof} Let $r$ be a solution of the Yang-Baxter equation.
We compute explicitly the LHS and RHS of the parametric Yang-Baxter equation. 
The LHS of the Yang-Baxter equation gives, $a,b,c \in X,$ 
    \begin{equation}
    r_{12} \ r_{13}\ r_{23}(c,b,a)  = \big (\sigma_{\sigma_a\left(b\right)}(\sigma_{\tau_b\left(a\right)}(c)),\ \tau_{\sigma_{\tau_b\left(a\right)}\left(c\right)}(\sigma_a\left(b\right) ),\  \tau_c(\tau_b(a)) \big ), \label{lhs2} \end{equation}
    whereas the RHS gives
    \begin{equation}    
    r_{23}\ r_{13}\  r_{12}(c,b,a)  =  \big (\sigma_a(\sigma_b(c)),\ \sigma_{\tau_{\sigma_b\left(c\right)}\left(a\right)}
    (\tau_c\left(b\right)),\  \tau_{\tau_c\left(b\right)}(\tau_{\sigma_b\left(c\right)}(a))\big ). \label{rhs2} \end{equation}
 By equating (\ref{lhs2}) and (\ref{rhs2}) we arrive at (\ref{C1})-(\ref{C3}). 
 And conversely, if conditions (\ref{C1})-(\ref{C3}) are satisfied then $r$ automatically satisfies the Yang-Baxter equation.
\end{proof}

We may now prove a key proposition on generic set-theoretic solutions coming from skew {braces} \cite{Ru05, Ru07,CeJeOk14, GuaVen}.
\begin{pro}  \label{pp0} Let $(X,+,\circ)$ be a skew brace and let $\sigma_a(b):=  - a +a \circ b$ and $a\circ b = \sigma_a(b) \circ \tau_b(a)$ for all $a,b \in X.$ Then the map $r: X \times X \to X \times X,$ $r(b,a) = (\sigma_a(b), \tau_{b}(a))$ is a solution of the set theoretic Yang-Baxter equation.
\end{pro}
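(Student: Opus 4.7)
The plan is to verify the three conditions (\ref{C1})--(\ref{C3}) from the preceding proposition, since $r$ solves the Yang-Baxter equation precisely when these hold. Two ingredients from the skew-brace structure drive the argument: the distributivity $a\circ(b+c) = a\circ b - a + a\circ c$, and the defining factorization $a\circ b = \sigma_a(b)\circ\tau_b(a)$.

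The structural step I would carry out first is to prove that $\sigma\colon (X,\circ)\to\Aut(X,+)$ is a group homomorphism. Additivity $\sigma_a(b+c) = \sigma_a(b)+\sigma_a(c)$ is immediate from unpacking $\sigma_a(b+c) = -a + a\circ(b+c)$ via distributivity. For multiplicativity, a second application of distributivity together with the auxiliary identity $-a + a\circ(-b) - a = -(a\circ b)$ (obtained by applying distributivity to $a = a\circ 0 = a\circ(b + (-b))$ and rearranging in the additive group) delivers $\sigma_a\circ\sigma_b = \sigma_{a\circ b}$; bijectivity follows because $\sigma_{a^{-1_\circ}}$ serves as an inverse. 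Condition (\ref{C1}) now drops out in a single line using the factorization:
\[
\sigma_{\sigma_a(b)}\circ\sigma_{\tau_b(a)} \;=\; \sigma_{\sigma_a(b)\circ\tau_b(a)} \;=\; \sigma_{a\circ b} \;=\; \sigma_a\circ\sigma_b.
\]

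For (\ref{C2}) and (\ref{C3}) I would expand the associative triple product $a\circ b\circ c$ in both orders, repeatedly applying the canonical factorization to reduce each side to a three-fold product $A\circ B\circ C$ and $A'\circ B'\circ C'$ whose entries match the left- and right-hand sides of (\ref{C1})--(\ref{C3}) slot-by-slot. Concretely, $A = \sigma_{\sigma_a(b)}(\sigma_{\tau_b(a)}(c))$, $B = \tau_{\sigma_{\tau_b(a)}(c)}(\sigma_a(b))$, $C = \tau_c(\tau_b(a))$ on one side, and $A' = \sigma_a(\sigma_b(c))$, $B' = \sigma_{\tau_{\sigma_b(c)}(a)}(\tau_c(b))$, $C' = \tau_{\tau_c(b)}(\tau_{\sigma_b(c)}(a))$ on the other. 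Condition (\ref{C1}) gives $A=A'$, and $\circ$-cancellation in the group $(X,\circ)$ yields $B\circ C = B'\circ C'$. The separation into $B=B'$ (which is (\ref{C3})) and $C=C'$ (which is (\ref{C2})) is the technical core of the proof: one invokes uniqueness of the canonical factorization $x\circ y = \sigma_x(y)\circ\tau_y(x)$ once both pairs $(B,C)$ and $(B',C')$ are recast in that form. This recasting requires rewriting the $\tau$-laden $B$ as a $\sigma$-image via the homomorphism property established above, and this is where I expect the main obstacle to lie; once the rewriting is in place, (\ref{C2}) and (\ref{C3}) follow simultaneously and the proof closes by the previous proposition.
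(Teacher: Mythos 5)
Your handling of condition (\ref{C1}) is correct and is essentially the paper's own argument: the paper computes $\sigma_a(\sigma_b(c))=\sigma_{a\circ b}(c)$ from the distributivity law and then applies the factorization $a\circ b=\sigma_a(b)\circ\tau_b(a)$, which is exactly your homomorphism identity $\sigma_{a\circ b}=\sigma_a\circ\sigma_b$ used twice.

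The gap is in your plan for (\ref{C2}) and (\ref{C3}). After expanding $a\circ b\circ c$ in both orders and cancelling $A=A'$ you are left with $B\circ C=B'\circ C'$ in the group $(X,\circ)$, and you propose to split this into $B=B'$ and $C=C'$ by invoking ``uniqueness of the canonical factorization $x\circ y=\sigma_x(y)\circ\tau_y(x)$''. No such uniqueness holds: a fixed group element $g$ admits many factorizations of this form. Since $\sigma_1(y)=-1+1\circ y=y$ (recall $0=1$ in a skew brace) and hence $\tau_g(1)=1$, the pair $(g,1)=\bigl(\sigma_1(g),\tau_g(1)\bigr)$ is already a canonical factorization of $g=1\circ g$; more generally, whenever the map $(x,y)\mapsto\bigl(\sigma_x(y),\tau_y(x)\bigr)$ is bijective, \emph{every} two-fold factorization $g=u\circ v$ is canonical for the preimage pair. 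So the product identity $B\circ C=B'\circ C'$ alone cannot force $(B,C)=(B',C')$. The paper closes this hole by proving (\ref{C2}) independently first: writing $\tau_y(x)=\sigma_x(y)^{-1}\circ x\circ y$ and establishing the key identity $\sigma_a(b)\circ\sigma_{\tau_b(a)}(c)=\sigma_a(b\circ c)$ via distributivity, it gets $\tau_c(\tau_b(a))=\tau_{b\circ c}(a)=\tau_{\tau_c(b)}(\tau_{\sigma_b(c)}(a))$. Only then does (\ref{C3}) follow from the triple-product expansion, by cancelling the first factors (equal by (\ref{C1})) and the third factors (equal by (\ref{C2})) --- legitimate group cancellation of known equal factors, as opposed to splitting a product of two unknowns. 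Your argument needs this independent derivation of (\ref{C2}) inserted before the cancellation step; with it, the rest of your outline goes through.
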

\begin{proof}
    In order to prove that $r$ is a solution of the Yang-Baxter equation it suffices to show that the three conditions (\ref{C1})-(\ref{C3}) hold.

Before we proceed with the proof we observe that the distributivity condition in skew braces for all $a,b,c\in X,$ $a\circ( b+c) = a\circ b -a + a\circ c$ is equivalent to $a\circ (b -c +d) = a \circ b -a \circ c+a \circ d $ (see also \cite{DoRy22}) for a detailed proof).

We first show condition (\ref{C1}), indeed for $a,b, c \in X,$
\begin{eqnarray}
\sigma_a(\sigma_b(c)) &=& -a + a \circ \sigma_b(c) = 
-a +a \circ (-b + b\circ c)= -a \circ b +a \circ b \circ c =\sigma_{a\circ b}(c) \nonumber\\
&=&  \sigma_{\sigma_a(b) \circ \tau_b(a)}(c) = \sigma_{\sigma_a\left(b\right)}(\sigma_{\tau_b\left(a\right)}(c)).
\nonumber
\end{eqnarray}

We show now condition (\ref{C2}),
\begin{equation}
\tau_c(\tau_b(a)) = \sigma_{\tau_b(a)}(c)^{-1} \circ \tau_b(a) \circ c = \sigma_{\tau_b(a)}(c)^{-1} \circ 
\sigma_a(b)^{-1} \circ a \circ b \circ c. \nonumber
\end{equation}
But,
\begin{eqnarray}
\sigma_a(b) \circ \sigma_{\tau_b(a)}(c) &=& \sigma_a(b) \circ ( -\tau_b(a) +\tau_b(a) \circ c) = 
\sigma_a(b) - a\circ b + a\circ b \circ c \nonumber\\
&=&  -a + a\circ b \circ c = \sigma_{a}(b\circ c)
\end{eqnarray}
hence,
\begin{equation}
\tau_c(\tau_b(a)) = \tau_{b\circ c}(a) ={\tau_{\sigma_b(c) \circ \tau_c(b)}} = \tau_{\tau_c\left(b\right)}(\tau_{\sigma_b\left(c\right)}(a)).
\end{equation}
Condition (\ref{C3}) follows from (\ref{C1}), (\ref{C2}) using $a\circ b = \sigma_a(b) \circ \tau_b(a),$ for all $a,b \in X.$ And, this concludes our proof.
\end{proof}

The group $(X, \circ),$ such that for all $a,b \in X,$ $a\circ b = \sigma_a(b) \circ \tau_b(a)$ is called the {\it structure group} of a set-theoretic solution. In the case that $(X, +, \circ)$ is a brace we obtain 
involutive solutions \cite{Ru05, Ru07, Ru19}. According to Rump all involutive solutions are obtained from braces, 
therefore in this article we often call the involutive set-theoretic solutions {\it brace solutions}.

The important fact that will be discussed in Section 5, is that all involutive solutions are obtained from the permutation operator by a suitable Drinfel'd twist (see also \cite{Doikoutw, Sol,LebVen}), whereas all the non-involutive but invertible solutions are obtained from rack solutions via a Drinfel'd twist (degenerate solutions are coming from shelves) \cite{DoRy22, DoRySt}.

\section{Involutive solutions $\&$ Baxterization}

\noindent 
We focus in this section on involutive set-theoretic solutions and derive Baxterized solutions of the 
braid and Yang-Baxter equations. We then identify the quantum algebras associated to these solutions (see also \cite{DoiSmo2, DoiSmo1}).

Recall the Yang-Baxter equation in the braid form  in the presence of spectral parameters 
$\lambda_1,\ \lambda_2$ ($\delta = \lambda_1 - \lambda_2$):
\begin{equation}
\check R_{12}(\delta)\ \check R_{23}(\lambda_1)\ \check R_{12}(\lambda_2) = \check R_{23}(\lambda_2)\
 \check R_{12}(\lambda_1)\ \check R_{23}(\delta), \label{YBE1}
\end{equation}
where $\check R: V \otimes V\to V \otimes V,$  ($V$ is an $n$ dimensional space) and let in general $\check R = \sum_{j} a_j \otimes b_j,$ then in the index notation $\check R =\sum_j a_j \otimes b_j \otimes 1_V,$  $\check R_{23} =\sum_j  1_V \otimes a_j \otimes b_j$ 
and $\check R_{13} =\sum_j a_j \otimes  1_V \otimes b_j.$

We focus here on Baxterized solutions of the form
\begin{equation}
\check R(\lambda) = \lambda \check r + 1_V^{\otimes 2}, \label{braid1}
\end{equation}
where $\check r$ is an involutive solution of the braid equation. Let also, $R = {\mathcal P} \check R$, then
\begin{equation}
R(\lambda)= \lambda r + {\mathcal P}, \label{braid2}
\end{equation}
and $R$ is a solution of the Yang-Baxter equation,
\begin{equation}
 R_{12}(\delta)\  R_{13}(\lambda_1)\  R_{23}(\lambda_2) = R_{23}(\lambda_2)\ R_{13}(\lambda_1)\ R_{12}(\delta). \label{YBE2}
\end{equation}

\begin{rem}
It would be useful for the following Proposition to define the partial transposition. 
Let  $A\in \mbox{End}\big ({\mathbb C}^n \otimes {\mathbb C}^n \big )$
expressed as: $A = \sum_{i, j,k,l=1 }^nA_{ij, kl}\ e_{i,j}\otimes e_{k,l}$.
 We define the {\it partial transposition}  as follows (in the index notation):
\begin{eqnarray}
A_{12}^{t_1} = \sum_{i, j,k,l=1}^n A_{ij, kl}\ e_{i, j}^t \otimes e_{k, l},\
\quad A_{12}^{t_2} = \sum_{i, j,k,l =1}^n A_{ij, kl}\ e_{i, j} \otimes e_{k, l}^t
\end{eqnarray}
where $e_{i,j}^t = e_{j,i}$.
\end{rem}

\begin{pro}\label{555}  The brace $R$-matrix satisfies the following fundamental properties:
\begin{eqnarray}
&&  R_{12}(\lambda)\  R_{21}(-\lambda) = (-\lambda^2 +1)  1_{V}^{\otimes 2}, ~~~~~~~~~~~~~\mbox{{\it Unitarity}} \label{u1}\\
&&  R_{12}^{t_1}(\lambda)\ R_{12}^{t_2}(-\lambda -n) = \lambda(-\lambda -n) 1_{V}^{\otimes 2}, ~~~~~
\mbox{{\it Crossing-unitarity}} \label{u2}\\
&& R_{12}^{t_1 t_2}(\lambda) = R_{21}(\lambda), \label{tt}
\end{eqnarray}
{\it where $^{t_{1,2}}$ denotes transposition on the first, second space respectively.}
\end{pro}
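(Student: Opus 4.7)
My plan is to expand each identity in terms of the decomposition $R(\lambda) = \lambda\, r + \mathcal{P}$ and reduce everything to intrinsic identities on the matrix $r = \sum_{a,b\in X} e_{b,\sigma_a(b)} \otimes e_{a,\tau_b(a)}$, using involutivity $\check r^2 = \mathrm{id}_{X\times X}$ (equivalently, reversibility $r_{12} r_{21} = \mathrm{id}$) together with the non-degeneracy of the maps $\sigma_a,\tau_b$. Expanding,
\begin{equation*}
R_{12}(\lambda)\, R_{21}(-\lambda) = -\lambda^2\, r_{12}r_{21} + \lambda\,\bigl(r_{12}\mathcal{P} - \mathcal{P}r_{21}\bigr) + \mathcal{P}^2,
\end{equation*}
the outer terms collapse to $(-\lambda^2+1)\,\mathrm{id}$ via reversibility and $\mathcal{P}^2=\mathrm{id}$, while the cross term vanishes because $r_{21} = \mathcal{P}\,r_{12}\,\mathcal{P}$ gives $\mathcal{P}r_{21} = r_{12}\mathcal{P}$. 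For $R_{12}^{t_1 t_2} = R_{21}$, since $\mathcal{P}^{t_1 t_2} = \mathcal{P}$, it suffices to show $r^{t_1 t_2} = r_{21}$: direct transposition gives $r^{t_1 t_2} = \sum_{a,b} e_{\sigma_a(b),b} \otimes e_{\tau_b(a),a}$, and the substitution $u=\sigma_a(b)$, $v=\tau_b(a)$, whose inverse is given by involutivity as $(a,b)=(\sigma_u(v),\tau_v(u))$, converts this to $\sum_{u,v} e_{u,\tau_v(u)} \otimes e_{v,\sigma_u(v)} = \mathcal{P}\,r\,\mathcal{P} = r_{21}$.

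\textbf{Crossing unitarity.} Set $K := \mathcal{P}^{t_1} = \mathcal{P}^{t_2}$; a short index check gives $K^2 = n K$. Expanding,
\begin{equation*}
R_{12}^{t_1}(\lambda)\,R_{12}^{t_2}(-\lambda - n) = \lambda(-\lambda - n)\,r^{t_1} r^{t_2} + \lambda\,r^{t_1} K - (\lambda + n)\,K\,r^{t_2} + K^2,
\end{equation*}
the plan is to prove three auxiliary identities: (a) $r^{t_1} r^{t_2} = 1_V^{\otimes 2}$, (b) $K\, r^{t_2} = K$, (c) $r^{t_1} K = K$. For (a), multiplying the two transposed matrices, the internal Kronecker deltas first force $b = b'$ and then bijectivity of $\tau_b$ forces $a = a'$; the remaining diagonal sum with $\sigma_a$ bijective yields $1_V \otimes 1_V$. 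For (b) and (c), a direct index calculation reduces everything to counting, for fixed $l \in X$, the number of $m$ satisfying $\sigma_l(m) = l$ (equivalently $\tau_m(l) = m$, from $l\circ m = \sigma_l(m)\circ \tau_m(l)$); for a brace solution this is the single equation $l\circ m = l + l$ in the structure group $(X,\circ)$, which admits the unique solution $m = l^{-1}\circ(l + l)$, giving the required $\delta_{kl}$. Substituting (a)--(c) and $K^2 = nK$, the coefficient of $K$ telescopes to $\lambda - (\lambda + n) + n = 0$, leaving $\lambda(-\lambda - n)\,1_V^{\otimes 2}$.

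\textbf{Main obstacle.} Unitarity and the PT identity are essentially formal consequences of reversibility and the involutive symmetry of $\check r$. The genuine work is concentrated in the crossing identities (b) and (c): these are not formal and require translating $\check r^2 = \mathrm{id}$ together with non-degeneracy into the precise counting statement about $\sigma_l$-fixed points, using the structure group coming from the brace hypothesis.
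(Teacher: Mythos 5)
Your proof is correct, and it takes the route the paper itself indicates: the paper offers no details for this proposition, deferring to a ``straightforward computation'' in \cite{DoiSmo1}, and your expansion of $R(\lambda)=\lambda r+\mathcal{P}$ combined with reversibility $r_{12}r_{21}=1$, the identity $r^{t_1t_2}=r_{21}$ via the involutive reindexing, the fixed-point equivalence $\sigma_l(m)=l\Leftrightarrow\tau_m(l)=m$, and $K^2=nK$ is precisely that computation carried out. (A minor simplification: in (b)--(c) bijectivity of $\sigma_l$ alone gives the unique $m$ with $\sigma_l(m)=l$, so the detour through the brace formula $m=l^{-1}\circ(l+l)$ is not needed.)
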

\begin{proof} The proof follows immediately after a straightforward computation \cite{DoiSmo1}.
\end{proof}

\subsection{The quantum algebra associated to braces}
\noindent  We recall the definitions of two quadratic 
algebras ${\mathcal A}$ and ${\mathcal Q}$ associated to solutions of the Yang-Baxter equation, which arise 
from the FRT (Faddeev, Reshetikhin and Takhtajan) construction \cite{FRT}. 
Indeed, from the FRT construction we recall that
given a solution of the braid equation $\check r: V \times V\to V \times V$ (henceforth we consider $V = {\mathbb C}^n$)
the associated quantum algebra ${\mathcal A}$ is a quotient of a 
free associative ${\mathbb C}$-algebra, generated by $\{L_{z,w}|\ x,w \in X\},$ and relations 
\begin{equation}
\check r_{12}\ L_1\ L_2 = L_1\ L_2\ \check r_{12}, \label{RTT0} \
\end{equation} 
where  $\ L  = \sum_{x,y \in X} e_{x,y} \otimes L_{x,y}\in 
\mbox{End}(V) \otimes {\mathcal A}$. 
Recall  the {\it index notation}: $\check r_{12} = \check r \otimes 1_{\mathcal A}$ and
$L_1 = \sum_{z, w \in X} e_{z,w} \otimes 1_V \otimes L_{z,w}, $ $\  L_2= \sum_{z, w \in X} 1_V  \otimes  e_{z,w}  \otimes L_{z,w}.$

From the fundamental relation (\ref{RTT0}) and by considering the set-theoretic solution of the braid equation 
\begin{equation}
\check r = \sum_{x,y \in X} e_{x, \sigma_x(y)} \otimes e_{y,\tau_{y}(x)}, \label{brace}
\end{equation}
we obtain \cite{EtScSo99}:
 \begin{eqnarray}
L_{x, \hat x} L_{y, \hat y} = L_{\sigma_x(y), \sigma_{\hat x}(\hat y)} L_{\tau_y(x), \tau_{\hat y}(\hat x)}.
 \label{q11}
\end{eqnarray} 
Recall that sometimes we call the involutive, set-theoretic solutions, brace solutions, 
because they are all obtained from braces \cite{Ru05, Ru07}.
Given a solution of the braid equation $\check r: V \otimes V \to V \otimes V,$ the quadratic algebra ${\mathcal Q}$ is generated by $\{q_x|\ x \in X\}$  and relations
\begin{equation}
 \check r_{12}\ q_1\ q_2 = q_1\ q_2, \label{RTT2} 
\end{equation} 
where $\ q  = e_x \otimes q_x \in V \otimes {\mathcal Q}.$ Also, $\check r_{12} =\check r \otimes 1_{\mathcal A},$
$\ q_1 = \sum_{z, w \in X} e_{x} \otimes 1_V \otimes q_{x},$ $\ q_2= \sum_{x \in X} 1_V  \otimes  e_{x}  \otimes q_{x}.$
The quadratic relation (\ref{RTT2}) for the set-theoretic solution implies 
 \begin{eqnarray}
q_{x} q_{y} = q_{\sigma_x(y)} q_{\tau_y(x)}, \label{qalg}
\end{eqnarray} 
also obtained in \cite{EtScSo99}.

We now consider the Baxterized solution $\check R(\lambda) = \lambda \check r + 1_{V\otimes V}$, 
where $\check r$ in our analysis here is the set-theoretic solution of the braid equation (\ref{brace}).
Given a parametric solution of the Yang-Baxter equation, the quantum algebra is defined via the fundamental relation \cite{FRT}:
\begin{equation}
\check R_{12}(\lambda_1 -\lambda_2)\ L_1(\lambda_1)\ L_2(\lambda_2) = L_1(\lambda_2)\ L_2(\lambda_1)\ 
\check R_{12}(\lambda_1 -\lambda_2). \label{RTT}
\end{equation}
$\check R(\lambda) \in \mbox{End}({\mathbb C}^{n} \otimes {\mathbb C}^{n})$, $\ L(\lambda) \in 
\mbox{End}({\mathbb C}^{n}) \otimes {\mathfrak A}$, where ${\mathfrak A}$\footnote{Notice that in $L$ 
in addition to the indices 1 and 2 in (\ref{RTT}) there is also an implicit ``quantum index'' $n$ associated to ${\mathfrak A},$ 
which for now is omitted, i.e. one writes $L_{1n},\ L_{2n}$.} is the quantum algebra defined by (\ref{RTT}). 
We shall focus henceforth on Baxterized solutions coming from involutive, set-theoretic solutions. 
The defining relations of the corresponding 
quantum algebra were derived in \cite{DoiSmo2, DoiSmo1}.

The quantum algebra associated to Baxterized solutions coming from braces is defined by generators $L^{(m)}_{z,w},\ z, w \in X$, and defining relations (see also \cite{DoiSmo1})
\begin{eqnarray}
L_{z,w}^{(n)} L_{\hat z, \hat w}^{(m)} - L_{z,w}^{(m)} L_{\hat z, \hat w}^{(n)} &=& 
L^{(m)}_{z, \sigma_w(\hat w)} L^{(n+1)}_{\hat z,\tau_{\hat w}( w)}- L^{(m+1)}_{z, \sigma_w(\hat w)} 
L^{(n)}_{\hat z, \tau_{\hat w}( w)}\nonumber\\ &-& L^{(n+1)}_{ \sigma_z(\hat z),w} 
L^{(m)}_{\tau_{\hat z}( z), \hat w }+ L^{(n)}_{ \sigma_z(\hat z, )w}  L^{(m+1)}_{\tau_{\hat z}( z), \hat w}. \label{fund2}
\end{eqnarray}

The proof is based on the fundamental relation (\ref{RTT}) and the form of the Baxterized brace $R$-matrix (for the detailed proof see \cite{DoiSmo2, DoiSmo1}). Recall also that
in the index notation we define $\check R_{12} = \check R \otimes \mbox{id}_{\mathfrak A}$:
\begin{eqnarray}
&& L_1(\lambda) = \sum_{z, w \in X} e_{z,w} \otimes 1_V \otimes L_{z,w}(\lambda),\ \quad  L_2(\lambda)= \sum_{z, w \in X}
1_V  \otimes  e_{z,w}  \otimes L_{z,w}(\lambda).  \label{def}
\end{eqnarray} 
The exchange relations among the various generators of the affine algebra 
are derived below via (\ref{RTT}). Let us express $L$ as a formal power series expansion 
$L(\lambda) = \sum_{n=0}^{\infty} {L^{(n)} \over \lambda^n}$.
Substituting  expressions (\ref{braid1}), and the $\lambda^{-1}$ expansion in (\ref{RTT}) we obtain
the defining relations of the quantum algebra associated 
to a brace $R$-matrix (we focus on terms $\lambda_1^{-n} \lambda_2^{-m}$):
\begin{eqnarray}
&&  \check r_{12} L_{1}^{(n+1)} L_2^{(m)} -\check  r_{12} L_1^{(n)} L_2^{(m+1)} +  L_1^{(n)} L_2^{(m)} \nonumber\\
&&  = L_1^{(m)} L_{2}^{(n+1)} \check r_{12} -  L_1^{(m+1)} L_2^{(n)}\check r_{12} +  L_1^{(m)} L_2^{(n)}. \label{fund}
\end{eqnarray}
The latter relations immediately lead to the quantum algebra relations (\ref{fund2}), after recalling:
$
L_{1}^{(k)}=\sum_{x,y\in X}e_{x,y}\otimes 1_V \otimes  L_{x,y}^{(k)},$ $ L_{2}^{(k)}=\sum_{x,y\in X} 1_V \otimes 
e_{x,y}\otimes  L^{(k)}_{x,y}, \nonumber
$
and $\check r_{12} = \check r \otimes  \mbox{id}_{\mathfrak A },$ 
$L^{(k)}_{x,y} $ are the generators of the associated quantum algebra.
The quantum algebra is also equipped with a co-product $\Delta: {\mathfrak A} \to {\mathfrak A} \otimes {\mathfrak A}$ \cite{FRT, Drinfeld}. Indeed, we define 
\begin{equation}
(\mbox{id} \otimes \Delta) L(\lambda) := L_{13}(\lambda) L_{12}(\lambda),\  
\end{equation}
which satisfies (\ref{RTT}) and is expressed as $(\mbox{id} \otimes \Delta) L(\lambda) = \sum_{x,y \in X} e_{x,y} \otimes \Delta(L_{x,y}(\lambda)).$

\begin{rem} In the special case $\check r ={\cal P}$ the ${\cal Y}(\mathfrak {gl}_n)$ algebra is recovered:
\begin{equation}
\Big [ L_{i,j}^{(n+1)},\ L_{k,l}^{(m)}\Big ] -\Big [ L_{i,j}^{(n)},\ L_{k,l}^{(m+1)}\Big ] = L_{k,j}^{(m)}L_{i,l}^{(n)}- L_{k,j}^{(n)}L_{i,l}^{(m)}. \label{fund2b}
\end{equation}
\end{rem}

The next natural step is  the classification of solutions of the fundamental relation (\ref{RTT}), 
for the brace quantum algebra.  A first step towards this goal  will be to examine  the fundamental object $L(\lambda)= L_0 + {1\over\lambda} L_1$, and search for finite and infinite representations of the respective elements. 
The classification of $L$-operators  will allow the identification of  new classes of quantum integrable systems, such as the analogues of Toda chains or deformed boson models. A first obvious example to consider is associated to Lyubashenko's solution, which is further discussed later in the manuscript.

\subsection{Integrability: local Hamiltonians}

\noindent Given any solution of the Yang-Baxter equation we define the so-called  monodromy matrix 
$T_{0,12...N}(\lambda) \in \mbox{End}\big ({\mathbb C}^{n} \otimes({\mathbb C}^{n})^{\otimes N}\big )$, 
which is a tensor representation of the quantum group (\ref{RTT}), \cite{FRT} 
\begin{equation}
T_{0,12...N}(\lambda) :=R_{0N}(\lambda) \ldots R_{02}(\lambda)\  R_{01}(\lambda), \label{mono}
\end{equation}
recall $R = {\mathcal P} \check R.$
We define also the  transfer matrix ${\mathfrak t}_{12...N}(\lambda) = tr_0 \big (T_{0,12...N}(\lambda)\big )  \in  
\mbox{End}\big (({\mathbb C}^{n})^{\otimes N}\big )$. 
The monodromy matrix $T$ satisfies (\ref{RTT}), and hence one can show that the transfer matrix provides mutually 
commuting quantities \cite{FRT}: 
(${\mathfrak t}(\lambda) =\lambda^N  \sum_{k} {{\mathfrak t}^{(k)} \over \lambda^k}$)
\begin{equation}
\Big [ {\mathfrak t}(\lambda),\ {\mathfrak t}(\mu)\Big ] =0 \ \Rightarrow\  \Big [ {\mathfrak t}^{(k)},\ {\mathfrak t}^{(l)}\Big ] =0.
\label{invo}
\end{equation}
Note that historically the index $0$ is called ``auxiliary'', whereas the indices $1,2,  \ldots, N$ are called ``quantum'',
and they are usually suppressed for simplicity, i.e. we simply write $T_0(\lambda)$ and ${\mathfrak t}(\lambda)$.  

The following Proposition \ref{Traces} is quite general and holds
for any $R(\lambda) =  \lambda {\mathcal P}\check r +{\mathcal P}$, where $\check r$ 
is an involutive solution of the braid equation and ${\mathcal P}$ is the permutation operator.
The key property that allows the derivation of local Hamiltonians is $R(\lambda = 0) = {\cal P}.$
\begin{pro}\label{Traces}  Consider the $\lambda$-series expansion of the monodromy matrix: $T(\lambda) = 
\lambda^N\sum_{k=0}^N{T^{(k)} \over \lambda^k}$ for any $R(\lambda) = \lambda {\mathcal P}\check r +{\mathcal P}$, where $\check r$  
is an involutive solution of the braid equation.  Let also  $H^{(k)} = {\mathfrak t}^{(k)} ({\mathfrak t}^{(N)})^{-1}$, $\ k = 0, 
\ldots, N-1$ and $H^{(N)} = {\mathfrak t}^{(N)}$, where 
${\mathfrak t}^{(k)} = tr_0 (T_0^{(k)})$. 
Then the commuting quantities, $H^{(k)}$ for $k = 1, \ldots, N-1$, are expressed exclusively in terms 
of the elements $\check r_{n \, n+1}$, $n = 1, \ldots, N-1$, and $\check r_{N \, 1}$.
\end{pro}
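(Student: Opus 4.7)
The plan is to exploit the crucial property $R(0)=\mathcal{P}$ together with the $\lambda$-expansion of the monodromy matrix, pushing every $\check r$-factor through the chain of auxiliary-space permutations until only nearest-neighbour and boundary quantum factors remain after tracing. From $R(0)=\mathcal{P}$ one has $T(0)=\mathcal{P}_{0N}\mathcal{P}_{0,N-1}\cdots\mathcal{P}_{01}$, and iterated use of the permutation identity $\mathcal{P}_{0a}\mathcal{P}_{0b}=\mathcal{P}_{ab}\mathcal{P}_{0a}$ rewrites this as $T(0)=\mathcal{P}_{12}\mathcal{P}_{23}\cdots\mathcal{P}_{N-1,N}\mathcal{P}_{0N}$. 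Taking $\text{tr}_0$ identifies $\mathfrak{t}^{(N)}=\mathcal{P}_{12}\mathcal{P}_{23}\cdots\mathcal{P}_{N-1,N}$ as the cyclic shift on $V^{\otimes N}$, and a direct check shows that conjugation by $\mathfrak{t}^{(N)}$ implements the cyclic shift $i\mapsto i+1$ on site labels (with $N+1\equiv 1$), i.e.\ $\mathfrak{t}^{(N)}\check r_{i-1,i}(\mathfrak{t}^{(N)})^{-1}=\check r_{i,i+1}$; in particular the boundary factor $\check r_{N,1}$ arises precisely from $\mathfrak{t}^{(N)}\check r_{N-1,N}(\mathfrak{t}^{(N)})^{-1}$.

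Next, I would expand $T(\lambda)=\prod_{i=N}^{1}(\lambda\, r_{0i}+\mathcal{P}_{0i})$ and collect terms of each order in $\lambda$, so that $T^{(k)}=\sum_{|A|=N-k}X_A$, with $A\subseteq\{1,\ldots,N\}$ and $X_A$ obtained by inserting $r_{0a}=\mathcal{P}_{0a}\check r_{0a}$ at positions $a\in A$ and $\mathcal{P}_{0a}$ elsewhere. The central manipulation is the push-through rule $\check r_{0a}\mathcal{P}_{0m}=\mathcal{P}_{0m}\check r_{m,a}$ for $m\neq a$; applied at $m=a-1$ it converts $\check r_{0a}$ into the purely quantum factor $\check r_{a-1,a}$, which then commutes with every remaining $\mathcal{P}_{0m}$ to its right, since it no longer involves the auxiliary space $0$. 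Iterating for each $a\in A\cap\{2,\ldots,N\}$ in decreasing order recombines the surviving permutations into $T(0)$, so that $X_A=T(0)\cdot Q_A$, with $Q_A$ a product of nearest-neighbour factors $\check r_{a-1,a}$, together with one leftover $\check r_{01}$ at the far right if $1\in A$.

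The boundary case $1\in A$ is then treated via trace cyclicity and the auxiliary identity $T(0)\check r_{01}=\check r_{12}T(0)$, which I would derive by the same push-through technique applied to $T(0)=\mathcal{P}_{12}\cdots\mathcal{P}_{N-1,N}\mathcal{P}_{0N}$: moving $\check r_{01}$ successively through $\mathcal{P}_{0N}$, $\mathcal{P}_{N-1,N}$, $\ldots$, $\mathcal{P}_{12}$ leaves behind exactly the quantum factor $\check r_{12}$ on the left. Hence $\text{tr}_0(X_A)$ equals $\mathfrak{t}^{(N)}Q_A$ when $1\notin A$ and $\check r_{12}\,\mathfrak{t}^{(N)}Q_A'$ when $1\in A$, and multiplying on the right by $(\mathfrak{t}^{(N)})^{-1}$ turns every $\check r_{a-1,a}$ into $\check r_{a,a+1}$ via the conjugation rule (with $\check r_{N,1}$ appearing when $a=N$); summing over $A$ thus expresses $H^{(k)}$ exclusively as a sum of products of $\check r_{n,n+1}$ for $n=1,\ldots,N-1$ and $\check r_{N,1}$, as claimed.

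The main technical obstacle is the bookkeeping when $A$ contains several consecutive indices: the produced quantum factors $\check r_{a-1,a}$ and $\check r_{a,a+1}$ then share a site and do not commute, so the push-through moves must be performed in a fixed order (largest $a$ first, and the index $1$ handled last via the auxiliary identity). The involutivity $\check r^2=1$ does not enter the algebra of the proof explicitly; it is already built into the assumption that $\check R(\lambda)=1+\lambda\check r$ satisfies the Yang--Baxter equation, which guarantees $[\mathfrak{t}(\lambda),\mathfrak{t}(\mu)]=0$ and hence makes the $H^{(k)}$ into a commuting family of local Hamiltonians.
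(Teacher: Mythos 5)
Your overall strategy --- expand $T(\lambda)=\prod_{i=N}^{1}(\lambda r_{0i}+\mathcal{P}_{0i})$, push each $\check r_{0a}$ through $\mathcal{P}_{0,a-1}$ to create $\check r_{a-1,a}$, collect the permutations into $T(0)$, and use the shift $\mathfrak{t}^{(N)}=\mathcal{P}_{12}\cdots\mathcal{P}_{N-1,N}$ to relabel sites --- is the right one (the paper itself gives no argument, deferring to [DoiSmo1]), and your first two paragraphs are essentially correct: one does get $X_A=T(0)\,Q_A$ with $Q_A$ an ordered product of the $\check r_{a-1,a}$, and the identities $T(0)=\mathcal{P}_{12}\cdots\mathcal{P}_{N-1,N}\mathcal{P}_{0N}$ and $T(0)\check r_{01}=\check r_{12}T(0)$ both hold. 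The gap is in the third paragraph, i.e.\ precisely in the boundary case that carries the content of the proposition. First, the partial trace $\mathrm{tr}_0$ is \emph{not} cyclic with respect to $\check r_{01}$, because $\check r_{01}$ acts nontrivially on the quantum space $1$; and your identity $T(0)\check r_{01}=\check r_{12}T(0)$ cannot be applied when $Q_A'$ separates $T(0)$ from $\check r_{01}$, since $Q_A'$ contains $\check r_{12}$ whenever $2\in A$ and then does not commute with $\check r_{01}$. Concretely, for $N=3$ and $A=\{1,2\}$ one finds $X_A=T(0)\check r_{12}\check r_{01}=\check r_{23}\check r_{12}T(0)$, so $\mathrm{tr}_0(X_A)=\check r_{23}\check r_{12}\,\mathfrak{t}^{(3)}$, whereas your formula gives $\check r_{12}\,\mathfrak{t}^{(3)}\check r_{12}=\check r_{12}\check r_{23}\,\mathfrak{t}^{(3)}$; these differ unless $[\check r_{12},\check r_{23}]=0$.

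More seriously, when $A$ contains $1$, $2$ \emph{and} $N$ simultaneously (which occurs for all $N\ge 4$ and $k\le N-3$), neither repair works: conjugating $Q_A'$ through $T(0)$ sends $\check r_{N-1,N}$ to $\check r_{N0}$, reintroducing the auxiliary space, while commuting $\check r_{01}$ leftward is blocked by $\check r_{12}$. One is left with expressions of the form $\sum_\alpha x_\alpha^{(N)}\,\check r_{N-1,N}\cdots\check r_{12}\,y_\alpha^{(1)}$ (writing $\check r=\sum_\alpha x_\alpha\otimes y_\alpha$), in which the two halves of the would-be $\check r_{N1}$ are separated by factors that commute with neither, so the conclusion does not follow from the moves you list. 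The case can be rescued, but only by using that $k\ge 1$ forces a gap $j\notin A$: this splits $Q_A'\check r_{01}$ into two mutually commuting blocks, the block containing $\check r_{01}$ is conjugated through $T(0)$ into purely quantum nearest-neighbour factors, and the block containing $\check r_{N-1,N}$ is handled with the further trace identity $\mathrm{tr}_0\bigl(y_\alpha^{(0)}\,T(0)\bigr)=\mathfrak{t}^{(N)}y_\alpha^{(N)}$, after which $x_\alpha^{(N)}$ and the relocated $y_\alpha^{(1)}$ recombine into $\check r_{N1}$. This splitting step, which also explains why $H^{(0)}$ is excluded from the statement (for $A=\{1,\dots,N\}$ there is no gap), is missing from your argument and is exactly where a careful proof must do real work.
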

\begin{proof} We refer the interested reader to  \cite{DoiSmo1} for the detailed proof.
\end{proof}

The generic first neighbor Hamiltonian is given as
${\cal H} = \sum_{j=1}^N \check r_{j j+1}.$ Higher commuting quantities can be found in \cite{DoiSmo1} 
for periodic spin chains and in \cite{DoiSmo2} for open spin chains. In the special case of set-theoretic 
solutions (\ref{brace}) the local Hamiltonian becomes
\begin{equation}
{\cal H} =   \sum_{j=1}^N \sum_{a,b\in X}  e_{a,\sigma_{a}(b)}^{(j)} e_{b, \tau_{b}(a)}^{(j+1)}. \label{Ham2}
\end{equation}
A simple example within this class is the Lyubashenko solution (see Example 2.1), then the Hamiltonian (\ref{Ham2}) takes the simple form, 
$${\cal H}_c=  \sum_{j=1}^N \sum_{a,b=1}^n  e_{a,b+c}^{(j)} e_{b, a-c}^{(j+1)},$$ where recall $c\in \{1,2, \ldots,n -1\}$ is fixed.

The ultimate goal in the context of quantum 
integrable systems, or any quantum system for that matter, is the identification of the eigenvalues 
and eigenvectors of the corresponding Hamiltonian.
In the frame of quantum integrable systems there exists a set of mutually commuting 
``Hamiltonians'',  guaranteed by the existence of a quantum $R$-matrix that satisfies the Yang-Baxter equation.  An exhaustive analysis of the symmetries of periodic and open quantum spin chains constructed from Baxterized involutive set-theoretic solutions is presented in \cite{DoiSmo2, DoiSmo1}. The hierarchy of periodic and open mutually commuting Hamiltonians is also explicitly derived in \cite{DoiSmo2, DoiSmo1} exclusively in terms of the elements of the symmetric group.

\section{Set-theoretic solutions as Drinfel'd twists: an example}

 \noindent We recall in this section the Drinfel'd twist for set-theoretic solutions.
It was shown in \cite{Doikoutw} that all involutive, set-theoretic solutions 
can be obtained from the permutation operator via suitable twists (see also \cite{Sol} for an analogous non-local map), 
whereas non-involutive, invertible solutions are obtained via
the same twist from rack/quandle solutions \cite{DoRy22, DoRySt}. We consider here a simple example of set-theoretic solution of the braid equation obtained as a simple twist of the permutation to provide a key motivation for the general results 
presented in the subsequent section. 

\subsection*{Simple non-trivial case: Lyubashenko's solution}
\noindent We recall the Lyubashenko solution and show that is immediately 
obtained from the permutation operator as a simple twist.
Although the construction is simple it has significant implications on the associated symmetries of the braid solutions.  
Inspired by the isotropic case a similar construction for the $q$-deformed analogue of Lyubashenko's 
solution is provided in \cite{DoiSmo1, DoiSmo2}.

Before we derive the Lyubashenko solution as a suitable twist we first introduce a useful Lemma.
\begin{lemma} {\label{extra1}} Let $\check r': V \otimes V \to V \otimes V$ ($V$ is an $n$ dimensional vector space) 
satisfy the braid relation and $(\check r')^2 = 1_V^{\otimes 2}$. Let also $u: V \to V$ be an invertible map,
such that $(u \otimes u)  \check r' = \check r'  (u \otimes u)$. 
We define $\check r = (u \otimes 1_V)\check  r' (u^{-1} \otimes 1_V) =
(1_V\otimes u^{-1})\check r' (1_V \otimes u),$ then: 
\begin{enumerate}
\item ${\check r}^2 = 1_V^{\otimes 2}$
\item $\check r$ satisfies the braid relation.
\end{enumerate}
\end{lemma}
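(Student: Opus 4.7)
My approach is to first observe that the two stated forms for $\check r$ are equivalent, and then exploit this flexibility by placing $\check r_{12}$ and $\check r_{23}$ in different forms so that the auxiliary operators sit on disjoint tensor factors. The equivalence $(u\otimes 1_V)\check r'(u^{-1}\otimes 1_V)=(1_V\otimes u^{-1})\check r'(1_V\otimes u)$ is a rearrangement of the hypothesis $(u\otimes u)\check r'=\check r'(u\otimes u)$ (multiply on the left by $u^{-1}\otimes 1_V$ and on the right by $u\otimes 1_V$), so both conjugation formulas are at my disposal.

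Part (1) is immediate. Writing $\check r=A\check r'A^{-1}$ with $A=u\otimes 1_V$, one has
\begin{equation*}
\check r^{2}=A(\check r')^{2}A^{-1}=A\, 1_V^{\otimes 2}\,A^{-1}=1_V^{\otimes 2},
\end{equation*}
using the involutivity of $\check r'$.

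For part (2) I would work in $V^{\otimes 3}$ and let $U_i$ denote $u$ acting on the $i$-th tensor factor. The key move is to use the \emph{first} form for $\check r_{12}$ and the \emph{second} form for $\check r_{23}$:
\begin{equation*}
\check r_{12}=U_{1}\,\check r'_{12}\,U_{1}^{-1}, \qquad \check r_{23}=U_{3}^{-1}\,\check r'_{23}\,U_{3}.
\end{equation*}
Since $U_{1}$ and $U_{3}$ act on disjoint factors, they commute with each other, $U_{1}$ commutes with $\check r'_{23}$, and $U_{3}$ commutes with $\check r'_{12}$. Pulling all the $U$-factors through to the outside yields
\begin{equation*}
\check r_{12}\check r_{23}\check r_{12}=U_{1}U_{3}^{-1}\bigl(\check r'_{12}\check r'_{23}\check r'_{12}\bigr)U_{3}U_{1}^{-1},
\end{equation*}
and the same rearrangement on the other side gives
\begin{equation*}
\check r_{23}\check r_{12}\check r_{23}=U_{1}U_{3}^{-1}\bigl(\check r'_{23}\check r'_{12}\check r'_{23}\bigr)U_{3}U_{1}^{-1}.
\end{equation*}
The braid identity for $\check r'$ then transfers verbatim to $\check r$.

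No serious obstacle is anticipated; the only genuine idea is the asymmetric choice of conjugation forms in positions $12$ versus $23$, which converts the $u$-intertwining property of $\check r'$ into clean support-disjoint commutations. A uniform choice of form would instead leave mixed factors such as $U_{1}^{-1}U_{2}$ between the $\check r'$'s, and one would then have to invoke $(u\otimes u)\check r'=\check r'(u\otimes u)$ repeatedly; the argument still goes through but is messier.
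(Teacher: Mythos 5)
Your proof is correct and complete. The paper itself gives no argument for this lemma (it only declares the proof ``straightforward'' and defers to \cite{DoiSmo2}), so there is nothing to compare against in detail; your key observation --- that the hypothesis $(u\otimes u)\check r' = \check r'(u\otimes u)$ makes the two conjugation formulas for $\check r$ agree, so that one may write $\check r_{12}=U_1\check r'_{12}U_1^{-1}$ and $\check r_{23}=U_3^{-1}\check r'_{23}U_3$ and cancel all $u$-factors through support-disjoint commutations --- is exactly the intended ``straightforward'' computation, and your verification of both parts is sound.
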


\begin{proof} The proof is straightforward \cite{DoiSmo2}.
\end{proof}

\begin{pro}\label{prop1} Let $\tau,\ \sigma: X \to X$, $X= \{1,\ldots, n \}$  be isomorphisms, 
such that $\sigma(\tau(x)) = \tau(\sigma(x)) = x$ and let
$u=\sum_{x \in X} e_{x, \tau(x)}$ and $u^{-1} = \sum_{x \in X}  e_{ \tau(x),x}$.
Then any solution of the type (Lyubashenko's solution)
\begin{equation}
\check r=\sum_{x, y \in X} e_{x, \sigma(y)} \otimes e_{y, \tau(x)}, \label{special1}
\end{equation}
is obtained from the permutation operator ${\cal P}= \sum_{x, y \in X} e_{x,y} \otimes e_{y,x}$ as
\begin{equation}
\check r = (u\otimes 1_V ){\cal P} (u^{-1} \otimes 1_V)= ( 1_V \otimes u^{-1} ) {\cal P} (1_V \otimes u). \label{special1b}
\end{equation}
\end{pro}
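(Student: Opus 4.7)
The plan is to reduce the statement to a short matrix-element computation together with one invocation of Lemma \ref{extra1}. First I would observe that the permutation $\mathcal{P}$ commutes with any symmetric tensor, in particular $(u\otimes u)\mathcal{P} = \mathcal{P}(u\otimes u)$, since $\mathcal{P}(a\otimes b)\mathcal{P} = b\otimes a$. Combined with $\mathcal{P}^2 = 1_V^{\otimes 2}$ and the fact that $\mathcal{P}$ satisfies the braid relation, this verifies the hypotheses of Lemma \ref{extra1} with $\check r' = \mathcal{P}$. Hence the equality
\[ (u\otimes 1_V)\mathcal{P}(u^{-1}\otimes 1_V) = (1_V\otimes u^{-1})\mathcal{P}(1_V\otimes u) \]
is automatic, and it remains only to identify either side with the Lyubashenko solution (\ref{special1}).

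For this identification I would compute directly on elementary matrices. Using the composition rule $e_{a,b}e_{c,d} = \delta_{b,c}\, e_{a,d}$, one finds $(u\otimes 1_V)\mathcal{P} = \sum_{x,z\in X} e_{x,z}\otimes e_{z,\tau(x)}$ after the Kronecker delta collapses the $y$-sum via $y = \tau(x)$. Right-multiplying by $u^{-1}\otimes 1_V$ and collapsing again gives
\[ (u\otimes 1_V)\mathcal{P}(u^{-1}\otimes 1_V) = \sum_{x,w\in X} e_{x,w}\otimes e_{\tau(w),\tau(x)}. \]
Re-indexing via the substitution $w = \sigma(y)$, which is a bijection with $\tau(w) = \tau(\sigma(y)) = y$ by hypothesis, transforms the right-hand side into $\sum_{x,y\in X} e_{x,\sigma(y)}\otimes e_{y,\tau(x)} = \check r$, which is the desired identity.

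The argument is essentially bookkeeping: the only care needed is to keep track of which tensor factor each matrix acts on and to use the mutual-inverse relation $\sigma\circ\tau = \tau\circ\sigma = \mathrm{id}_X$ when re-indexing the summations. I do not anticipate any genuine obstacle. As an optional sanity check, the second expression $(1_V\otimes u^{-1})\mathcal{P}(1_V\otimes u)$ can be evaluated by an entirely parallel computation, which both confirms the identity guaranteed by Lemma \ref{extra1} and independently completes the proof.
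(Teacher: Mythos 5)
Your proposal is correct and follows essentially the same route as the paper, which simply asserts that the identity follows from the definitions of ${\cal P},\ u,\ u^{-1}$ and the composition rule $e_{x,y}e_{z,w}=\delta_{y,z}e_{x,w}$ by straightforward computation; your explicit delta-collapsing and the re-indexing $w=\sigma(y)$ (using $\tau\circ\sigma=\mathrm{id}_X$) carry out exactly that computation, and the appeal to Lemma \ref{extra1} for the equality of the two conjugated forms is a valid (if optional) shortcut since $(u\otimes u){\cal P}={\cal P}(u\otimes u)$.
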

\begin{proof}
The proof relies on the definitions 
of ${\cal P},\ u,\ u^{-1},$ the fundamental property $e_{x,y} e_{z,w} = \delta_{y,z} e_{x,w}$
and by straightforward computation.
\end{proof}

Note that $r = {\cal P} \check r$, and consequently $R = {\cal P} \check R$
 take a simple form for this class of solutions:
\begin{equation}
r= u^{-1} \otimes u\ \Rightarrow\  R(\lambda) = \lambda u^{-1} \otimes u + {\cal P}. \label{special2}
\end{equation}





Before we present our findings on the symmetry of Lyubashenko's $\check r$-matrix we first introduce a useful Lemma \cite{DoiSmo1, DoiSmo2}.

\begin{lemma}{\label{extra2}} Let ${\mathfrak l}_{x,y}$ be the generators of the $\mathfrak{gl}_n$ algebra:
\begin{equation}
\Big [{\mathfrak l}_{x,y},  {\mathfrak l}_{z,w}\Big ] = \delta_{y,z}{\mathfrak l}_{x,w} - \delta_{x,w}{\mathfrak l}_{z,y}. \label{gl2}
\end{equation}
The $\mathfrak{gl}_n$ algebra is equipped with a coproduct $\Delta: \mathfrak{gl}_n \to  \mathfrak{gl}_n  \otimes  \mathfrak{gl}_n,$ such that
\begin{equation}
\Delta({\mathfrak l}_{x,y}) = {\mathfrak l}_{x,y} \otimes \mbox{id} +  \mbox{id}  \otimes {\mathfrak l}_{x,y}.
\end{equation}
The $N$-coproduct is obtained by iteration $ \Delta^{(N)} =  ( \Delta^{(N-1)}  \otimes \mbox{id})  \Delta= (\mbox{id} \otimes \Delta^{(N-1)}) \Delta $ and is given as $\Delta^{(N)}({\mathfrak l_{x,y}})=  \sum_{n=1}^N \mbox{id} \otimes \ldots \otimes\underbrace{{\mathfrak l}_{x,y}}_{n^{th}\  \mbox{position}} \otimes \ldots \otimes \mbox{id}$.

Let also ${\cal F}^{(N)}\in \mathfrak{gl}_n^{\otimes N}$ 
be an invertible element (${\cal F}^{(2)}=: {\cal F}$), and define\\ 
$\Delta_T^{(N)}({\mathfrak l}_{x,y}):= {\cal F}^{(N)}\Delta^{(N)}({\mathfrak l}_{x,y}) ({\cal F}^{(N)})^{-1},$ then $\Delta_T^{(N)}({\mathrm l}_{x,y})$ also satisfy the $\mathfrak{gl}_n$ algebraic relations.
\end{lemma}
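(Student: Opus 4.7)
The plan is to reduce the statement to two essentially routine observations: first that $\Delta^{(N)}$ is an algebra homomorphism, and second that conjugation by an invertible element preserves commutation relations. Neither requires computation beyond manipulating commutators.

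First I would verify that $\Delta^{(N)}:\mathfrak{gl}_n\to \mathfrak{gl}_n^{\otimes N}$ is an algebra homomorphism. For $N=2$ this is standard: the primitive coproduct $\Delta(\mathfrak{l}_{x,y})=\mathfrak{l}_{x,y}\otimes\mathrm{id}+\mathrm{id}\otimes\mathfrak{l}_{x,y}$ satisfies
\begin{equation}
\bigl[\Delta(\mathfrak{l}_{x,y}),\,\Delta(\mathfrak{l}_{z,w})\bigr]=\delta_{y,z}\Delta(\mathfrak{l}_{x,w})-\delta_{x,w}\Delta(\mathfrak{l}_{z,y}),\nonumber
\end{equation}
which one checks by expanding the two tensor factors and invoking \eqref{gl2}. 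The induction step is then immediate from the recursive definition $\Delta^{(N)}=(\Delta^{(N-1)}\otimes\mathrm{id})\Delta$: since $\Delta^{(N-1)}\otimes\mathrm{id}$ is a homomorphism (being a tensor product of homomorphisms) and composition of homomorphisms is a homomorphism, $\Delta^{(N)}$ preserves brackets.

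Next I would exploit the elementary identity that for any invertible $\mathcal{F}^{(N)}$ and any elements $A,B$ in an associative algebra,
\begin{equation}
\bigl[\mathcal{F}^{(N)}A(\mathcal{F}^{(N)})^{-1},\ \mathcal{F}^{(N)}B(\mathcal{F}^{(N)})^{-1}\bigr]=\mathcal{F}^{(N)}[A,B](\mathcal{F}^{(N)})^{-1}.\nonumber
\end{equation}
Applying this with $A=\Delta^{(N)}(\mathfrak{l}_{x,y})$ and $B=\Delta^{(N)}(\mathfrak{l}_{z,w})$ and using the first step, the right hand side collapses to $\delta_{y,z}\mathcal{F}^{(N)}\Delta^{(N)}(\mathfrak{l}_{x,w})(\mathcal{F}^{(N)})^{-1}-\delta_{x,w}\mathcal{F}^{(N)}\Delta^{(N)}(\mathfrak{l}_{z,y})(\mathcal{F}^{(N)})^{-1}$, which by the very definition of $\Delta_T^{(N)}$ equals $\delta_{y,z}\Delta_T^{(N)}(\mathfrak{l}_{x,w})-\delta_{x,w}\Delta_T^{(N)}(\mathfrak{l}_{z,y})$, as required.

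There is really no main obstacle here; the only subtlety worth flagging is that one should not confuse the statement with the much stronger claim that $\Delta_T^{(N)}$ defines a coassociative coproduct (which would require $\mathcal{F}^{(N)}$ to be an admissible, i.e. cocycle, twist). The lemma only asserts preservation of the defining relations, and for this the invertibility of $\mathcal{F}^{(N)}$ together with the homomorphism property of $\Delta^{(N)}$ is all that is needed.
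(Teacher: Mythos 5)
Your proof is correct and follows essentially the same route as the paper: first observe that the $N$-fold coproducts satisfy the $\mathfrak{gl}_n$ relations, then conjugate by the invertible element $\mathcal{F}^{(N)}$, which preserves commutators. The only difference is that you spell out the homomorphism property of $\Delta^{(N)}$ by induction, which the paper simply asserts.
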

\begin{proof} The $N$-coproducts satisfy the $\mathfrak{gl}_n$ relations (\ref{gl2}), i.e. 
$\Big [\Delta^{(N)}({\mathfrak l}_{x,y}),  \Delta^{(N)}({\mathfrak l}_{z,w})\Big ] = \delta_{y,z}\Delta^{(N)}({\mathfrak l}_{x,w}) - \delta_{x,w}\Delta^{(N)}({\mathfrak l}_{z,y}).$ By acting from the left with ${\cal F}^{(N)}$ and with $({\cal F}^{N})^{-1}$ from the right in the latter commutator we immediately obtain $\Big [\Delta_T^{(N)}({\mathfrak l}_{x,y}),  \Delta_T^{(N)}({\mathfrak l}_{z,w})\Big ] = \delta_{y,z}\Delta_T^{(N)}({\mathfrak l}_{x,w}) - \delta_{x,w}\Delta_T^{(N)}({\mathfrak l}_{z,y})$.
\end{proof}

\begin{cor}\label{prop2} {\it Let  $\rho: \mathfrak{gl}_n \to \mbox{End}({\mathbb C}^n)$ be the fundamental representation of $\mathfrak{gl}_n,$ such that ${\mathfrak l}_{x,y} \mapsto e_{x,y},$ where recall $e_{x, y}$ are $n \times n$ matrices with elements 
$(e_{x,y})_{z,w}=\delta_{x,z} \delta_{y,w}.$ The special solution $\check r$ (\ref{special1}) is $\mathfrak{gl}_n$ symmetric, i.e.}
\begin{equation}
\Big [ \check r,\ \Delta_i(e_{x,y}) \Big ] =0, ~~~~x,\ y \in X, \label{symm1}
\end{equation}
{\it where we define the ``twisted'' co-products ($i= 1, 2$): }
\begin{eqnarray}
&&\Delta_1(e_{x,y}) = e_{\sigma(x), \sigma(y)} \otimes 1_V +1_V  \otimes e_{x,y}, \nonumber\\
&& \Delta_2(e_{x,y}) =  e_{x,y} \otimes  1_V + 1_V \otimes   e_{\tau(x) ,\tau(y)} ,  \label{symm2}
\end{eqnarray}
($\Delta_1(e_{\tau(x), \tau(y)}) = \Delta_2(e_{x, y})$).
\end{cor}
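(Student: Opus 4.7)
The plan is to leverage Proposition \ref{prop1}, which expresses $\check r$ as a conjugate of the permutation operator ${\cal P}$ by the simple twist $u\otimes 1_V$ (or equivalently $1_V\otimes u^{-1}$). Since ${\cal P}$ is well known to be $\mathfrak{gl}_n$-invariant with respect to the standard (undeformed) co-product $\Delta(e_{x,y}) = e_{x,y}\otimes 1_V + 1_V\otimes e_{x,y}$, namely $[{\cal P},\Delta(e_{x,y})]=0$, the natural strategy is to conjugate this commutator by the twist and identify the resulting twisted co-product with the $\Delta_i$ appearing in \eqref{symm2}. This is exactly the mechanism articulated in Lemma \ref{extra2}: twisting by an invertible element produces a new co-product which again realises the same algebra relations, and in particular preserves commutativity with any element that is itself dressed by the twist.

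Concretely, first I would take ${\cal F}_1 := u\otimes 1_V$ and compute ${\cal F}_1\, \Delta(e_{x,y})\, {\cal F}_1^{-1}$. Because the right leg is untouched and only the left leg is conjugated by $u$, the computation reduces to evaluating $u\, e_{x,y}\, u^{-1}$. Using $u=\sum_{z\in X}e_{z,\tau(z)}$, $u^{-1}=\sum_{z\in X}e_{\tau(z),z}$ and the matrix-unit multiplication rule $e_{x,y}e_{z,w}=\delta_{y,z}e_{x,w}$ together with $\sigma\circ\tau=\tau\circ\sigma=\id_X$, one obtains $u\,e_{x,y}\,u^{-1}=e_{\sigma(x),\sigma(y)}$, and hence ${\cal F}_1\Delta(e_{x,y}){\cal F}_1^{-1}=\Delta_1(e_{x,y})$. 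Applying the same logic to ${\cal F}_2:=1_V\otimes u^{-1}$ and computing $u^{-1}e_{x,y}u=e_{\tau(x),\tau(y)}$ yields ${\cal F}_2\Delta(e_{x,y}){\cal F}_2^{-1}=\Delta_2(e_{x,y})$.

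Combining these identities with Proposition \ref{prop1}, namely $\check r = {\cal F}_i\,{\cal P}\,{\cal F}_i^{-1}$ for $i=1,2$, the invariance $[{\cal P},\Delta(e_{x,y})]=0$ conjugates directly into
\begin{equation*}
\bigl[\check r,\, \Delta_i(e_{x,y})\bigr] = {\cal F}_i\,\bigl[{\cal P},\,\Delta(e_{x,y})\bigr]\,{\cal F}_i^{-1} = 0,
\end{equation*}
which is the claimed symmetry \eqref{symm1}. Finally, the consistency relation $\Delta_1(e_{\tau(x),\tau(y)})=\Delta_2(e_{x,y})$ follows by simple substitution using $\sigma\circ\tau=\id_X$ in the definition of $\Delta_1$.

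The only delicate point is making sure the matrix-unit bookkeeping under conjugation by $u$ yields precisely $e_{\sigma(x),\sigma(y)}$ rather than $e_{\tau(x),\tau(y)}$; that is pinned down by the choice of convention $u=\sum_x e_{x,\tau(x)}$ together with $\tau(z)=x\iff z=\sigma(x)$. Beyond this, no genuine obstacle is anticipated: the result is essentially a tautology once Proposition \ref{prop1} and Lemma \ref{extra2} are combined, the twisted co-products $\Delta_i$ being defined exactly so as to make the symmetry manifest.
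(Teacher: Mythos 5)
Your proposal is correct and follows essentially the same route as the paper: both start from the $\mathfrak{gl}_n$-invariance $[{\cal P},\Delta(e_{x,y})]=0$, conjugate by $F_1=u\otimes 1_V$ (resp.\ $F_2=1_V\otimes u^{-1}$) using Proposition \ref{prop1}, and identify $F_i\Delta(e_{x,y})F_i^{-1}=\Delta_i(e_{x,y})$ via $u\,e_{x,y}\,u^{-1}=e_{\sigma(x),\sigma(y)}$ and $u^{-1}e_{x,y}\,u=e_{\tau(x),\tau(y)}$. The matrix-unit bookkeeping you flag as the delicate point is handled exactly as in the paper's proof, so nothing is missing.
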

\begin{proof}
This can be shown using the form of the special class of solutions (\ref{special1}). 
The permutation operator is $\mathfrak{gl}_n$ symmetric, i.e.
\begin{equation}
\Big [ {\cal P},\ \Delta(e_{x,y})\Big ] =0, \label{comm1}
\end{equation}
where the co-products $\Delta(e_{x,y})$  are defined in Lemma (\ref{extra2}) (${\mathfrak l}_{x,y} \mapsto e_{x,y}$).

Let $u= \sum_{x \in X} e_{x, \tau(x)}$, then (\ref{symm1}) immediately follows from (\ref{comm1}) and (\ref{special1b}) after 
acting (\ref{comm1}) from the left and right with $u\otimes 1_V,\  u^{-1}  \otimes 1_V$  or  
$ 1_V \otimes u^{-1},\   1_V \otimes u$ respectively.  $\Delta_{i}(e_{x,y})$ are then defined as
\begin{eqnarray} 
&& \Delta_1(e_{x,y}) = u e_{x,y} u^{-1} \otimes 1_V + 1_V \otimes e_{x,y}, \nonumber\\ 
&& \Delta_2(e_{x,y}) = e_{x,y} \otimes 1_V  + 1_V \otimes u^{-1}   e_{x,y}u \label{sim1}
\end{eqnarray}
and explicitly given by (\ref{symm2}). Indeed, $u e_{x,y} u^{-1} = e_{\sigma(x), \sigma(y)}$ and  $u^{-1} e_{x,y} u = e_{\tau(x), \tau(y)}.$

According to Lemma \ref{extra2} $\Delta_i(e_{x, y})$ also satisfy the $\mathfrak{gl}_n$ algebra relations, thus $\check r$ (\ref{special1}) is $\mathfrak{gl}_n$ symmetric. In this particular case, as is clear from the computation above, two invertible linear maps are involved, $F_i\in \mbox{End}({\mathbb C}^n \otimes {\mathbb C}^n), i \in \{ 1, 2\},$ such that $F_1 := u \otimes 1_V $ and $F_2 := 1_V  \otimes u^{-1}$ and $F_i\Delta(e_{x,y})F_i^{-1} = \Delta_i(e_{x,y}).$
\end{proof}

By iteration one derives the $N$ co-products:
$ \Delta_1^{(N)} =  ( \Delta_1^{(N-1)}  \otimes \mbox{id})  \Delta_1$ and  
$\Delta_2^{(N)} = (\mbox{id} \otimes \Delta_2^{(N-1)}) \Delta_2$,
which explicitly read as
\begin{eqnarray}
&& \Delta_1^{(N)}(e_{x,y}) =\sum_{k=1}^N 1_V  \otimes \ldots \otimes e_{\sigma^{N-k}(x),\sigma^{N-k}(y)}
\otimes \ldots \otimes 1_V , \label{delta1}\\
&& \Delta_2^{(N)}(e_{x,y}) =\sum_{k=1}^N 1_V \otimes \ldots \otimes e_{\tau^{k-1}(x),\tau^{k-1}(y)}
\otimes \ldots \otimes 1_V. \label{delta2}
\end{eqnarray}
The above expressions can be written in a compact form as:
$\Delta^{(N)}_i(e_{x,y}) = F_i^{(N)}\Delta^{(N)}(e_{x,y})(F_i^{(N)})^{-1},$ where 
$\Delta^{(N)}(e_{x,y})=  \sum_{k=1}^N \mbox{id} \otimes \ldots \otimes\underbrace{e_{x,y}}_{k^{th}\  \mbox{position}} \otimes \ldots \otimes \mbox{id},$ and we define 
$F_{1}^{(N)}:= u^{N-1} \otimes u^{N-2} \otimes \ldots \otimes u \otimes 1_V$  and 
${F}_2^{(N)} := 1_V \otimes u^{-1} \otimes u^{-2}\otimes \ldots \otimes u^{-(N-1)}$ (see also relevant findings in 
\cite{Doikoutw, DoGhVl} ),
where $F_i^{(2)}=: F_i,$ $i\in \{1,2\}.$ Notice that co-associativity does not hold in this case. In fact, it turns out that the quantum algebra associated to generic set-theoretic solutions
is a quasi-bialgebra (see e.g. \cite{Doikoutw, DoGhVl}).

We note that local open  Hamiltonians for generic Baxterized solutions coming from involutive solutions were systematically 
derived in \cite{DoiSmo1}. Specifically, the open Hamiltonian associated to Lyubashenko's solution with special boundary conditions is givens as
\begin{equation}
{\cal H}_c = \sum_{j=1}^{N-1}\sum_{a=1}^n e^{(j)}_{a,b+c} e^{(j+1)}_{b, a-c}, 
\end{equation}
$ c \in \{1,2,\ldots, n-1\}.$
This is a $\mathfrak{gl}_n$ symmetric Hamiltonian, i.e. $\big[{\cal H}_c,\ \Delta_i^{(N)}(y)\big ]=0,$ $y \in \mathfrak{gl}_n,$ $i\in \{1,2\},$ as opposed to the periodic one, which is not $\mathfrak{gl}_n$ symmetric. Recall $\Delta^{(N)}_i$ are defined in (\ref{delta1}),(\ref{delta2}).

Motivated by the case of Lyubashenko's solution and the suitable twist of $\mathfrak{gl}_n$ (see also \cite{DoGhVl}) we present in the subsequent section the full theory of quantum algebras associated to set-theoretic solutions. These are new types of quasi-triangular Hopf algebras introduced in \cite{Doikoutw, DoRy22, DoRySt}. The quantum algebras associated to the set-theoretic solutions of the parametric Yang-Baxter equation were also introduced in \cite{Doikoup}.

\section{Set-theoretic quasi-triangular Hopf algebras $\&$ Drinfel'd twists} \label{sec:5}

\noindent In this section we discuss the Yang-Baxter algebras associated to rack type and set-theoretic solutions 
as quasi-triangular Hopf algebras \cite{Doikoutw, DoRySt}. 
The Hopf algebra theory associated to set-theoretic solutions has been developed in \cite{Doikoutw, DoRy22, DoRySt}
(see also \cite{EtScSo99} and  \cite{Andru} in connection with pointed Hopf algebras and \cite{Lebed, braceh} on Hopf algebras in 
connection to braces \cite{Ru05}--\cite{Ru19}, \cite{GuaVen}).   
The associated universal ${\cal R}$-matrices are also derived in this frame. 
A full analysis of set-theoretic twists in the Yangian $\mathfrak{gl}_n$ is presented in \cite{Doikoutw, DoGhVl, DoikouYang}.

We start our analysis with the rack and quandle algebras and the construction of the associated universal 
${\cal R}$-matrix.
We then extend the algebra to the {\it decorated rack algebra} and via a suitable admissible Drinfel'd twist 
we construct the corresponding universal ${\cal R}-$matrix. 
For a more detailed exposition on the proofs presented in this section the interested reader is referred to \cite{Doikoutw, DoRySt}.

\subsection{The rack \& quandle  algebras}

\noindent We first define the {\it rack} 
and {\it quandle} algebras \cite{DoRySt}.

\begin{defn} \label{def1} (The rack algebra)
Let $(X, \triangleright)$ be a finite magma, or such that $a\triangleright$ is surjective, for every $a\in X$.
We say that the unital, associative algebra ${\cal A},$ over a field $k$ 
generated by indeterminates ${1_\cal A}$ (the unit element), $~q_a, \ q^{-1}_a,\ h_a\in {\cal A}$ 
($h_a = h_b \Leftrightarrow a =b$)
and relations, $a,b \in X:$
\begin{eqnarray}
q_a^{-1} q_a= q_a q_a^{-1} = 1_{\cal A},\quad q_a q_b = q_b  q_{b \triangleright a}, \quad  h_a  h_b =\delta_{a, b} h_a, \quad q_b  h_{b\triangleright a} = h_a q_b, \label{qualg}
\end{eqnarray}
is a {\it rack} algebra.
\end{defn}

\begin{defn}  \label{qualgd} (The quandle algebra)
A rack algebra is called a quandle algebra if there is a magma $(X, \bullet),$ 
such that for all $a,b \in X,$ $a\bullet : X \to X$ is a bijection and $a\bullet b = b\bullet (b \triangleright a).$
\end{defn}

The following proposition fully justifies the names rack and quandle algebras.
\begin{pro} \label{proro1}
\label{qua1} Let ${\cal A}$ be the rack algebra,
then for all $a,b,c\in X,$ $c \triangleright (b \triangleright a) = (c \triangleright b)\triangleright (c \triangleright a),$ and $a\triangleright $ is bijective, i.e. $(X, \triangleright)$ is a rack. If ${\cal A}$ is the quandle algebra, then in addition, for all $a\in X,$ $a \triangleright a =a,$ i.e. $(X, \triangleright )$ is a quandle.
\end{pro}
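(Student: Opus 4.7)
The plan is to exploit the defining relations of the rack algebra by computing carefully chosen products in two different ways, and then invoke the defining property $h_x = h_y \Leftrightarrow x = y$ to transfer algebraic identities back into identities on $X$. The commutation relation $q_b h_{b \triangleright a} = h_a q_b$, together with the invertibility of the $q_a$'s, will do most of the work.

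For self-distributivity, I would expand $h_a q_b q_c$ in two different orders. One computation moves $h_a$ past $q_b$ and then past $q_c$, using $q_y h_{y \triangleright x} = h_x q_y$ twice: this yields
\begin{equation*}
h_a q_b q_c \;=\; q_b h_{b \triangleright a} q_c \;=\; q_b q_c\, h_{c \triangleright (b \triangleright a)}.
\end{equation*}
The other computation first uses $q_b q_c = q_c q_{c \triangleright b}$ and then moves $h_a$ past $q_c$ and past $q_{c \triangleright b}$:
\begin{equation*}
h_a q_b q_c \;=\; h_a q_c q_{c \triangleright b} \;=\; q_c h_{c \triangleright a} q_{c \triangleright b} \;=\; q_c q_{c \triangleright b}\, h_{(c \triangleright b) \triangleright (c \triangleright a)} \;=\; q_b q_c\, h_{(c \triangleright b) \triangleright (c \triangleright a)}.
\end{equation*}
Since $q_b q_c$ is invertible, cancelling it on the left gives $h_{c \triangleright (b \triangleright a)} = h_{(c \triangleright b) \triangleright (c \triangleright a)}$, and the injectivity hypothesis $h_x = h_y \Leftrightarrow x = y$ yields the desired identity $c \triangleright (b \triangleright a) = (c \triangleright b) \triangleright (c \triangleright a)$.

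For bijectivity of $a \triangleright$, I would again use $q_a h_{a \triangleright b} = h_b q_a$ to extract injectivity: if $a \triangleright b = a \triangleright c$, then $h_{a \triangleright b} = h_{a \triangleright c}$, so $h_b q_a = q_a h_{a \triangleright b} = q_a h_{a \triangleright c} = h_c q_a$. Multiplying on the right by $q_a^{-1}$ gives $h_b = h_c$, hence $b = c$. Combined with the hypothesis that $a \triangleright$ is surjective (automatic when $X$ is finite, since a self-map of a finite set is bijective iff injective), this shows that each $a \triangleright$ is bijective. For the quandle part, I would invoke the extra data in Definition~\ref{qualgd}: there is a magma $(X, \bullet)$ with $a \bullet$ bijective and $a \bullet b = b \bullet (b \triangleright a)$. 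Setting $b = a$ yields $a \bullet a = a \bullet (a \triangleright a)$, and injectivity of $a \bullet$ forces $a \triangleright a = a$.

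The only step that requires any real thought is the self-distributivity identity; once the two-way expansion of $h_a q_b q_c$ is written down, everything else is a direct cancellation argument using invertibility of the $q$'s and the labelling property of the projectors $h_a$. The rack/quandle axioms thus emerge as consistency conditions forced by the defining relations of $\mathcal{A}$.
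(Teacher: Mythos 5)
Your proof is correct and follows essentially the same route as the paper: the two-way expansion of $h_a q_b q_c$ for self-distributivity, the cancellation of $q_c$ (resp.\ $q_a$) against the relation $q_b h_{b\triangleright a}=h_a q_b$ for injectivity of $a\triangleright$, and setting $b=a$ in $a\bullet b = b\bullet(b\triangleright a)$ with left cancellation for idempotency. You simply spell out the details the paper leaves implicit, and you are in fact slightly more careful in noting that the surjectivity (or finiteness) hypothesis from Definition~\ref{def1} is what upgrades injectivity to bijectivity.
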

\begin{proof}
We compute $h_a q_b q_c$ using the associativity of the rack algebra and due to invertibility of $q_a$ for all $a \in X$ we conclude for all $a,b,c \in X$, 
\[h _{c \triangleright(b \triangleright a)} =h_{(c \triangleright b) \triangleright ( c\triangleright a)}\  
 \Rightarrow\ c \triangleright(b \triangleright a) = (c \triangleright b) \triangleright ( c\triangleright a).\]
 
We assume $c\triangleright a = c\triangleright b$, then $q_c h_{c\triangleright a} = q_c h_{c\triangleright b}$,  by the fourth relation in \ref{qualg}, we get $h_aq_c = h_bq_c$ and by the invertibility of $q_c$, $h_a = h_b,$ hence $a=b$, i.e. $a\triangleright$ is bijective for all $a\in X$ and thus $(X, \triangleright)$ is a rack.

Moreover,  we recall for the quandle algebra $a\bullet b = b \bullet (b \triangleright a),$ then for $a=b$ and 
by recalling bijectivity and hence
left cancellation in $(X, \bullet)$ we conclude that $a\triangleright a =a$ for all $a \in X,$ i.e. $(X, \triangleright )$ is a quandle.
\end{proof}

\begin{lemma} \label{lemmac}
Let $c= \sum_{a\in X} h_a,$ then $c$ is a central element of the rack algebra ${\cal A}$. \end{lemma}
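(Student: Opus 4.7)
The plan is to show that $c$ commutes with each of the generators $h_b$, $q_b$, and $q_b^{-1}$ of the rack algebra $\mathcal{A}$; since these generate $\mathcal{A}$ as an algebra, centrality follows.

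First I would tackle commutation with the idempotents. Using the orthogonality relation $h_a h_b = \delta_{a,b} h_a$ from \eqref{qualg}, a one-line computation gives
\begin{equation}
c\, h_b = \sum_{a\in X} h_a h_b = h_b = \sum_{a\in X} h_b h_a = h_b\, c,\nonumber
\end{equation}
so $c$ commutes with every $h_b$.

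Next I would handle commutation with $q_b$, which is the substantive step. Using the fourth relation in \eqref{qualg}, namely $h_a q_b = q_b h_{b\triangleright a}$, I compute
\begin{equation}
c\, q_b = \sum_{a\in X} h_a q_b = \sum_{a\in X} q_b\, h_{b\triangleright a} = q_b \sum_{a\in X} h_{b\triangleright a}.\nonumber
\end{equation}
To finish this step I need $\sum_{a\in X} h_{b\triangleright a} = \sum_{a\in X} h_a$, which holds precisely because the map $a\mapsto b\triangleright a$ is a bijection of $X$. This is where Proposition \ref{proro1} is invoked: it establishes that $(X,\triangleright)$ is a rack, so every $b\triangleright$ is a bijection and the summation is simply a reindexing. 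The main (and only real) obstacle is recognising that this bijectivity, which was derived from the rack algebra relations, is exactly what is needed; here the finiteness hypothesis on $X$ in Definition \ref{def1} (or surjectivity of each $b\triangleright$) ensures the sum is well-defined and preserved under reindexing.

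Finally, since $c\, q_b = q_b\, c$, multiplying on the left and right by $q_b^{-1}$ gives $q_b^{-1} c = c\, q_b^{-1}$. Thus $c$ commutes with all generators and is therefore central in $\mathcal{A}$.
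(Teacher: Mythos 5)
Your proof is correct and follows exactly the route the paper intends: the paper's own proof is a one-line assertion that the claim is ``direct by means of the definition of the algebra and Proposition \ref{proro1},'' and your computation — orthogonality for the $h_b$, the relation $h_a q_b = q_b h_{b\triangleright a}$ combined with bijectivity of $b\triangleright$ (from Proposition \ref{proro1}) to reindex the sum, and conjugation by $q_b^{-1}$ — is precisely the omitted verification. Nothing further is needed.
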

\begin{proof} The proof is direct by means of the definition of the algebra ${\cal A}$ and Proposition  \ref{proro1}. Without loss of generality let $c=1_{\cal A}.$ 
\end{proof}

Having defined the rack algebra we are now 
in the position to identify the associated universal 
${\cal R}$-matrix (solution of the Yang-Baxter equation).
\begin{pro} 
Let ${\cal A}$ be the rack algebra  and ${\cal R} \in {\cal A} \otimes {\cal A}$ 
be an invertible element, such that ${\cal R} = \sum_{a\in X} h_a \otimes q_a.$
Then ${\cal R}$ satisfies the Yang-Baxter equation
\begin{equation}
{\cal R}_{12} {\cal R}_{13} {\cal R}_{23} = {\cal R}_{23} {\cal R}_{13} {\cal R}_{12},  \nonumber
\end{equation}
where ${\cal R}_{12} = \sum_{a\in X} h_a \otimes q_a \otimes 1_{\cal A}, $ ${\cal R}_{13} = \sum_{a\in X} h_a  
\otimes 1_{\cal A} \otimes q_a,$ and  ${\cal R}_{23} = \sum_{a\in X} 1_{\cal A} \otimes h_a \otimes q_a.$
\end{pro}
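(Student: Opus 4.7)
The plan is to verify the Yang–Baxter equation by a direct computation in ${\cal A}^{\otimes 3}$, using only the defining relations (\ref{qualg}) and the fact that $b\triangleright$ is a bijection of $X$ for every $b\in X$ (Proposition \ref{proro1}). I will reduce both sides to the \emph{same} canonical form $\sum_{b,d\in X} h_b\otimes q_b h_d\otimes q_b q_d$.

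First I would expand the left-hand side, noticing that the idempotents $h_ah_b=\delta_{a,b}h_a$ collapse one of the three summations immediately:
\begin{equation}
{\cal R}_{12}{\cal R}_{13}{\cal R}_{23}
=\sum_{a,b,c\in X}h_ah_b\otimes q_ah_c\otimes q_bq_c
=\sum_{a,c\in X}h_a\otimes q_ah_c\otimes q_aq_c.
\end{equation}
At this stage the first tensor factor carries just $h_a$, the second carries $q_ah_c$ and the third $q_aq_c$, which I will take as the target canonical form (with dummy relabelling $a\to b,\ c\to d$).

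Next I would expand the right-hand side in the same way:
\begin{equation}
{\cal R}_{23}{\cal R}_{13}{\cal R}_{12}
=\sum_{a,b,c\in X}h_bh_c\otimes h_aq_c\otimes q_aq_b
=\sum_{a,b\in X}h_b\otimes h_aq_b\otimes q_aq_b,
\end{equation}
using $h_bh_c=\delta_{b,c}h_b$. Now I apply the two key rack-algebra relations $h_aq_b=q_bh_{b\triangleright a}$ and $q_aq_b=q_bq_{b\triangleright a}$ simultaneously on the second and third tensor factors to obtain
\begin{equation}
{\cal R}_{23}{\cal R}_{13}{\cal R}_{12}
=\sum_{a,b\in X}h_b\otimes q_bh_{b\triangleright a}\otimes q_bq_{b\triangleright a}.
\end{equation}
Finally, since $a\mapsto b\triangleright a$ is a bijection of $X$ for each fixed $b$ (Proposition \ref{proro1}), I can substitute $d:=b\triangleright a$ and reindex the sum to get exactly $\sum_{b,d\in X}h_b\otimes q_bh_d\otimes q_bq_d$, matching the left-hand side.

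The computation is genuinely short: the only subtle point is that I must use \emph{both} relations $h_aq_b=q_bh_{b\triangleright a}$ and $q_aq_b=q_bq_{b\triangleright a}$ in tandem on the right-hand side so that the same element $b\triangleright a$ appears in the index of $h$ and of $q$, which is precisely what allows the change of variable $d=b\triangleright a$ to synchronise both factors. The bijectivity of $b\triangleright$ is the essential ingredient; without it one could not reindex, which is also why the statement is phrased for the rack (or quandle) algebra rather than a shelf algebra. The self-distributivity itself is not needed explicitly at this step—it is already encoded in the consistency of the defining relations of ${\cal A}$, as established in Proposition \ref{proro1}.
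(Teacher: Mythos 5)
Your proof is correct and follows essentially the same route as the paper: collapse the idempotents $h_ah_b=\delta_{a,b}h_a$, push $q$'s past $h$'s and $q$'s using the defining relations of the rack algebra, and reindex via the bijectivity of $b\triangleright$. The only (cosmetic) difference is that you perform the change of variable $d=b\triangleright a$ on the right-hand side to match the untouched left-hand side, whereas the paper reindexes the left-hand side to match the transformed right-hand side.
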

\begin{proof}
The proof is  a direct computation of the two sides of the Yang-Baxter equation 
(and use of the fundamental relations (\ref{qualg})):
\begin{eqnarray}
&& \mbox{LHS}: \quad   \sum_{a,b,c \in X} h_a h_b \otimes  q_a  h_c \otimes q_b q_c =  
\sum_{a,b,c \in X} h_a \otimes  q_a  h_c \otimes q_a q_c = \sum_{a,b,c \in X} h_a \otimes  
q_a  h_{a\triangleright c} \otimes q_a q_{a\triangleright c}\nonumber\\
&& \mbox{RHS}: \quad  \sum_{a,b,c\in X}   h_b h_a \otimes  h_c  q_a \otimes q_c q_b=  
\sum_{a,b,c\in X}    h_a \otimes   q_a h_{a\triangleright c} \otimes q_c q_a, \nonumber
\end{eqnarray}
where we have used that $a\triangleright$ is bijective. Then due to the basic relation $q_a q_b = q_b  q_{b \triangleright a},$ we show that LHS$=$RHS,
and this concludes our proof.
\end{proof}

\begin{rem} \label{remc}
The universal ${\cal R}-$matrix is invertible, indeed from Lemma \ref{lemmac},
$\sum_{a\in X} h_a= 1_{\cal A},$ hence
${\cal R}^{-1} = \sum_{a\in X} h_a \otimes q_a^{-1}.$  
\end{rem}

\begin{rem} \label{remfu} 
{\bf Fundamental representation:} Let ${\cal A}$ be the rack algebra and $\rho: {\cal A} \to \mbox{End}(V)$ be the map defined by
\begin{equation}
q_a \mapsto \sum_{x \in X} e_{x, a \triangleright x}, \quad h_a\mapsto e_{a,a}, \label{remfu1}
\end{equation}
where $(X, \triangleright)$ is a rack.
Then ${\cal R} \mapsto r:= 
\sum_{a,b\in X} e_{b,b} \otimes e_{a, b\triangleright a},$  is the linearized version of a rack solution of the Yang-Baxter equation.
We note that $r$ is invertible, because $a\triangleright: X \to X$ is a bijection for all $a \in X$, then $r^{-1} = \sum_{a,b\in X} e_{b,b} \otimes e_{b\triangleright a, a}.$ 

Let ${\cal P} = \sum_{a,b \in X} e_{a,b} \otimes e_{b,a}$ be the permutation (flip) operator, 
then the solution of the braid equation is the linearized version of rack solution,
 $\check r = {\cal P} r = \sum_{a,b} e_{a,b} \otimes e_{b, b \triangleright a}.$ 

Note that in the special case where $\check r$ is involutive, i.e. $\check r^2 = \mbox{id},$ then $a\triangleright b = b,$ for all $a,b \in X,$
which means that the rack algebra (Definition \ref{def1}) becomes a commutative algebra. In this case $\check r = \sum_{a,b \in X} e_{a,b} \otimes e_{b,a},$ i.e. $\check r$ reduces to the permutation operator, whereas $r$ reduces to the identity
\end{rem}

\begin{thm}\label{basica1}  
Let  ${\cal A}$ 
be the quandle algebra (Definition \ref{qualgd}).  Let also
${\cal R}= \sum_{a\in X} h_a \otimes q_a$ and for all $a,b \in X,$  $q_a q_b = q_{a\bullet b}$. 
If $(X, \bullet, e)$ is a group, then 
$({\cal A}, \Delta, \epsilon, S,  {\cal R})$ is a quasi-triangular Hopf algebra:
\begin{itemize}
\item Co-product. $\Delta: {\cal A} \to {\cal A} \otimes {\cal A},$
$~\Delta(q_a^{\pm 1}) = q_a^{\pm 1}\otimes q_a^{\pm 1}$
and $\Delta(h_a) = \sum_{b,c \in X} 
h_b \otimes h_c\big |_{b\bullet c = a}.$
\item Co-unit. $~\epsilon: {\cal A} \to k,$ $\epsilon(q_a^{\pm 1}) = 1,$ $~\epsilon(h_a) =  \delta_{a,e}.$ 
\item Antipode. $~S: {\cal A} \to {\cal A},$ $~S(q_a^{\pm 1}) =q_a^{\mp 1},$ $S(h_a)= h_{a^*},$ where $a^*$ 
is the inverse in $(X, \bullet)$ for all $a \in X.$
\end{itemize}
\end{thm}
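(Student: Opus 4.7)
The plan is to verify systematically the axioms of a quasi-triangular Hopf algebra: first that $\Delta,\,\epsilon$ extend to well-defined algebra homomorphisms and $S$ to a well-defined anti-homomorphism (i.e. they respect the defining relations of $\mathcal{A}$), then that coassociativity, the counit and antipode identities hold, and finally that $\mathcal{R}$ satisfies the three quasi-triangular axioms. The key preliminary observation is that, because $(X,\bullet,e)$ is a group, the quandle identity $a\bullet b = b\bullet(b\triangleright a)$ forces $b\triangleright a = b^{-1}\bullet a\bullet b$; hence $\triangleright$ is the conjugation quandle of $(X,\bullet)$ and each map $b\triangleright$ is an automorphism of $(X,\bullet)$ with inverse $b\triangleright^{-1}\! a = b\bullet a\bullet b^{-1}$. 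This fact will be used repeatedly.

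For $\Delta$, compatibility on the $q_a$'s is immediate from $\Delta(q_a)\Delta(q_b) = q_{a\bullet b}\otimes q_{a\bullet b}$; on the $h_a$'s it reduces to the set-theoretic statement $\{(c,d)\mid c\bullet d=a\}\cap\{(c',d')\mid c'\bullet d'=b\}=\emptyset$ for $a\ne b$ after using $h_ch_{c'}=\delta_{c,c'}h_c$. The crucial step is the mixed relation: I would expand $\Delta(h_a)\Delta(q_b)=\sum_{c\bullet d=a} h_cq_b\otimes h_dq_b$, push the $q_b$'s to the left via $h_xq_b=q_bh_{b\triangleright x}$, and then reindex $(c,d)\mapsto(b\triangleright c,\,b\triangleright d)$; the automorphism property $b\triangleright(c\bullet d)=(b\triangleright c)\bullet(b\triangleright d)$ guarantees the new indices parametrise exactly the pairs in $\Delta(h_{b\triangleright a})$. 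Coassociativity on $h_a$ reduces to associativity of $\bullet$ among triples with $b\bullet c\bullet d=a$; the counit axiom uses $\epsilon(h_b)=\delta_{b,e}$ so that the sum $\sum_{b\bullet c=a}\epsilon(h_b)h_c$ collapses to $h_a$. The antipode identity on $h_a$ gives $\sum_{b\bullet c=a}h_{b^{\ast}}h_c = \sum_{b\bullet c=a}\delta_{b^{-1},c}h_{b^{-1}}$, which is nonzero only when $a=e$, in which case it equals $\sum_b h_{b^{-1}}=1_\mathcal{A}$ by Lemma \ref{lemmac}; this matches $\epsilon(h_a)1_\mathcal{A}$. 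The $q_a$ cases are immediate.

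For quasi-triangularity, $(\Delta\otimes\id)\mathcal{R}=\sum_{b,c}h_b\otimes h_c\otimes q_{b\bullet c}$ matches $\mathcal{R}_{13}\mathcal{R}_{23}$ using $q_{b\bullet c}=q_bq_c$; the relation $(\id\otimes\Delta)\mathcal{R}=\mathcal{R}_{13}\mathcal{R}_{12}$ is immediate from $h_bh_c=\delta_{b,c}h_b$. The compatibility $\mathcal{R}\Delta(x)=\Delta^{\mathrm{op}}(x)\mathcal{R}$ I would check on generators: for $x=q_b$, apply $h_aq_b=q_bh_{b\triangleright a}$ in $\mathcal{R}\,(q_b\otimes q_b)$, reindex $a\mapsto b\triangleright^{-1}\! a$, and use $q_{b\triangleright^{-1}\! a}q_b=q_{b\bullet a}=q_bq_a$ to recover $(q_b\otimes q_b)\mathcal{R}$. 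For $x=h_b$, the same manoeuvre together with reindexing in the sum $\sum_{c\bullet d=b}$ produces $\sum_{d'\bullet a=b}h_a\otimes h_{d'}q_a$, which is exactly $\Delta^{\mathrm{op}}(h_b)\mathcal{R}$.

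The main obstacle I anticipate is the mixed-relation check for $\Delta$, which hinges entirely on $b\triangleright$ being a $\bullet$-automorphism; this is precisely the place where the hypothesis that $(X,\bullet,e)$ is a group (and not merely a magma) is forced, and it propagates into the $\Delta^{\mathrm{op}}$ verification as well. Once that structural point is isolated at the outset, the remaining identities are routine bookkeeping of indices.
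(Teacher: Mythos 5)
Your proposal is correct and follows essentially the same route as the paper's proof: the paper likewise isolates the key structural fact that the group hypothesis on $(X,\bullet,e)$ turns $\triangleright$ into the conjugation quandle $a\triangleright b=a^{*}\bullet b\bullet a$, uses the resulting automorphism property $a\triangleright(b\bullet c)=(a\triangleright b)\bullet(a\triangleright c)$ to verify that $\Delta$ is an algebra homomorphism, and then checks coassociativity, counit, antipode and the three quasi-triangularity axioms exactly as you outline. Your write-up simply fills in more of the index bookkeeping that the paper labels ``straightforward''.
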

\begin{proof} It is straightforward to check that
$\Delta$ 
is an algebra homomorphism. Indeed, this can be explicitly checked via the distributivity condition $a\triangleright(b \bullet c) = (a \triangleright b) \bullet (a \triangleright c),$ which readily follows from, $a\triangleright b = a^{*}\bullet b \bullet a.$

We are now going to prove that all the axioms of a quasi-triangular Hopf algebra hold.

Given the co-products of the generators we have to check co-associativity and 
also uniquely derive the counit $\epsilon: {\cal A} \to k$ (homomorphism) 
and antipode $S: {\cal A} \to {\cal A}$ (anti-homomorphism).
\begin{enumerate}[{(i)}]
\item Co-associativity.: 
\begin{eqnarray}
&& (\mbox{id} \otimes \Delta)\Delta(q_a) =  (\Delta \otimes \mbox{id}) \Delta(q_a) =  
q_a \otimes q_a \otimes q_a, \quad  \nonumber\\
&& (\mbox{id} \otimes \Delta)\Delta(h_a) =  (\Delta \otimes \mbox{id}) \Delta(h_a) = 
\sum_{b,c,d\in X} h_b \otimes h_c \otimes h_d 
\big |_{b\bullet c \bullet d =a}. \nonumber 
\end{eqnarray}
\item Counit: $ (\epsilon \otimes \mbox{id})\Delta(x) = (\mbox{id} \otimes \epsilon)
\Delta(x) =x,$ for all $x\in  \{q_a,\ q_a^{-1}, h_a\}.$\\
The generators $q_a$ are group-like elements, so $\epsilon(q_a) = 1,$ and 
\begin{equation}
\sum_{a,b\in X} \epsilon(h_a) h_b = \sum_{a,b} h_a \epsilon(h_b) 
\big |_{a\bullet b =c} = h_c \Rightarrow \epsilon(h_a) = \delta_{a,e}.
\end{equation}
\item Antipode: $m\big ((S \otimes \mbox{id})\Delta(x)) =  
m\big ((\mbox{id} \otimes S )\Delta(x)) = \epsilon(x)1_{\cal A}$ for all $x\in \{q_a,\ q_a^{-1}, h_a\}.$\\
For $q_a,$ we immediately have  $S(q_a) = q_a^{-1}$ and (recall $h_{a} h_{b} = \delta_{a,b} h_a$ 
and $\sum_{a\in X} h_a = 1_{\cal A}$)
\begin{equation}
\sum_{a,b \in X} S(h_a) h_b \big |_{a\bullet b =c} = \sum_{a,b \in X} h_a S(h_b) 
\big |_{a\bullet b =c} = \delta_{c, e} 1_{\cal A}\ \Rightarrow\ S(h_a) = h_{a^{*}},  \nonumber\\
\end{equation}
where $a^*$ is the inverse in $(X, \bullet)$ for all $a \in X.$
So $({\cal A}, \Delta, \epsilon, S)$ is a Hopf algebra.

\item Moreover,
\begin{eqnarray}
&& {\cal R}_{13} {\cal R}_{12} =\sum_{a\in X}  h_a \otimes q_a \otimes 
q_a=\sum_{a\in X} h_a \otimes \Delta(q_a) = (\mbox{id} \otimes \Delta){\cal R},  \label{v1}\\
&& {\cal R}_{13} {\cal R}_{23} =\sum_{a,b\in X} h_a \otimes h_b \otimes q_c\big |_{a\bullet b =c} 
=  \sum_{c\in X} \Delta(h_c) \otimes  q_c  = (\Delta \otimes \mbox{id}){\cal R}. \label{v2}
\end{eqnarray}

\item It is also readily shown from the relations of the rack algebra that
\begin{eqnarray}
  \Delta^{(op)}(q_a) {\cal R} = {\cal R} \Delta(q_a)\quad  \Delta^{(op)}(h_a) {\cal R} = {\cal R} \Delta(h_a),  \label{comm}
\end{eqnarray}
where $\Delta^{(op)} = \pi \circ \Delta,$ $\pi$ is the flip map.

We conclude that $({\cal A}, \Delta, \epsilon, S, {\cal R})$ is a quasi-triangular Hopf algebra.
\hfill \qedhere
\end{enumerate}
\end{proof}

\subsection{The set-theoretic YB algebras}

\noindent In this subsection we suitably extend the rack and quandle algebras in order to construct 
the universal ${\cal R}-$matrix associated to general set-theoretic solutions of the Yang-Baxter equation.

We first define the {\it decorated rack} algebra and the set-theoretic Yang-Baxter algebra.

\begin{defn}  \label{setalgd1} (The decorated rack algebra.) Let ${\cal A}$ 
be the rack algebra (Definition \ref{qualg}). Let also $\sigma_a, \ \tau_b: X\to X,$ and $\sigma_a$ be 
bijective for all $a\in X$. We say that the unital, associative algebra $\hat {\cal A}$ over $k,$
generated by indeterminates  $1_{\hat {\cal A}} ,q_a, q_a^{-1}, h_a \in {\cal A}$ ($h_a = h_b \Leftrightarrow a =b$) and 
$w_a, w^{-1}_a\in \hat {\cal A},$ $a \in X,$ 
and relations, $a,b \in X:$
\begin{eqnarray}
&& q_a^{-1} q_a = q_aq_a^{-1} =1_{\hat {\cal A}}, ~~ q_a q_b = q_b 
q_{b \triangleright a}, ~~  h_a  h_b =\delta_{a,b} h_a, ~~ q_b  h_{b\triangleright a} = h_a q_b,\nonumber\\
&&w_a^{-1} w_a =w_aw_a^{-1} =1_{\hat {\cal A}}, ~~  w_a w_b = 
w_{\sigma_a(b)} w_{\tau_{b}(a)} ~~  w_a h_b = h_{\sigma_a(b)} w_a, 
~~ w_a q_b = q_{\sigma_a(b)} w_a \label{qualgbb}
\end{eqnarray}
is a {\it decorated rack algebra}.
\end{defn}

\begin{lemma} \label{lemmad}
Let $c= \sum_{a\in X} h_a,$ then $c$ is a central element of the decorated rack 
 algebra $\hat {\cal A}$.
\end{lemma}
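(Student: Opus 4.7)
The plan is to verify directly that $c=\sum_{a\in X} h_a$ commutes with every generator of $\hat{\cal A}$, that is, with $q_b^{\pm 1}$, $h_b$, and $w_b^{\pm 1}$ for every $b\in X$. Since ${\cal A}$ sits inside $\hat{\cal A}$ with the same defining relations, the commutativity of $c$ with $q_b^{\pm 1}$ and $h_b$ is already given by Lemma \ref{lemmac}; one can also reprove it in two lines using the orthogonality $h_a h_b=\delta_{a,b}h_a$ (which yields $h_b c = c h_b = h_b$) together with the relation $q_b h_{b\triangleright a}=h_a q_b$ and the bijectivity of $a\triangleright$ (which yields $c q_b = \sum_a h_a q_b = \sum_a q_b h_{b\triangleright a}= q_b c$, since $b\triangleright$ is a permutation of $X$ by Proposition \ref{proro1}).

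The only new content is therefore the commutation of $c$ with $w_b^{\pm 1}$. First I would compute
\begin{equation}
w_b\, c \;=\; \sum_{a\in X} w_b h_a \;=\; \sum_{a\in X} h_{\sigma_b(a)}\, w_b,
\end{equation}
using the defining relation $w_a h_b = h_{\sigma_a(b)} w_a$ from \eqref{qualgbb}. Since by hypothesis $\sigma_b\colon X\to X$ is a bijection, as $a$ runs over $X$ so does $\sigma_b(a)$, and hence $\sum_{a\in X} h_{\sigma_b(a)}=\sum_{a'\in X} h_{a'}=c$. Therefore $w_b c = c w_b$.

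For $w_b^{-1}$, I would multiply the identity $w_b c = c w_b$ on both the left and the right by $w_b^{-1}$ to obtain $c w_b^{-1}=w_b^{-1}c$. Combined with the previous step and the observation that the subalgebra ${\cal A}\subset\hat{\cal A}$ already gives commutation with $q_b^{\pm 1}$ and $h_b$, this shows that $c$ commutes with every generator of $\hat{\cal A}$, hence is central. No real obstacle is expected: the whole argument rests on the bijectivity of $\sigma_a$ (which is assumed in Definition \ref{setalgd1}) to re-index the sum defining $c$.
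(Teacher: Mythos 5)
Your proof is correct and fills in exactly the details the paper leaves implicit: the paper's own proof simply asserts the claim is "straightforward by means of the definition of the algebra $\hat{\cal A}$" (deferring the rack-algebra generators to Lemma \ref{lemmac}), and your generator-by-generator check — in particular the re-indexing of $\sum_a h_{\sigma_b(a)}$ via the bijectivity of $\sigma_b$ to handle $w_b^{\pm 1}$ — is the intended argument.
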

\begin{proof} The proof is straightforward by means of the definition of the algebra $\hat {\cal A}.$ 
We consider henceforth, without loss of generality,  $c=1_{\hat {\cal A}}$ (see also Lemma \ref{lemmac}). 
\end{proof}

\begin{pro} 
\label{qua2} 
Let $\hat {\cal A}$ be the decorated rack algebra,
then for all $a,b,c \in X,$
\begin{eqnarray}
&&  \sigma_{a}(\sigma_b(c))= \sigma_{\sigma_{a}(b)}(\sigma_{\tau_{b}(a)}(c))  \quad \& \quad   
\sigma_c(b) \triangleright \sigma_{c}(a) = \sigma_c(b \triangleright a). \nonumber
\end{eqnarray}
\end{pro}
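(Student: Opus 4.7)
The plan is to read off both identities from the defining relations of $\hat{\cal A}$ by evaluating suitable products in two different ways and then using invertibility of the $q$'s and $w$'s together with the identification property $h_a=h_b\Leftrightarrow a=b$ to extract the equalities on the indices. The strategy is the usual one for ``algebra forces axiom'': we exhibit a word in the generators whose two canonical rewrites give the two sides of each claimed identity, sandwiched by the same invertible factors.

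For the first identity, I would compute $w_a w_b h_c$ in two ways. Moving $h_c$ successively to the left via $w_x h_y = h_{\sigma_x(y)} w_x$ gives
\begin{equation}
w_a w_b h_c = w_a h_{\sigma_b(c)} w_b = h_{\sigma_a(\sigma_b(c))}\, w_a w_b.\nonumber
\end{equation}
Alternatively, first applying $w_a w_b = w_{\sigma_a(b)} w_{\tau_b(a)}$ and then moving $h_c$ to the left yields
\begin{equation}
w_a w_b h_c = w_{\sigma_a(b)} w_{\tau_b(a)} h_c = h_{\sigma_{\sigma_a(b)}(\sigma_{\tau_b(a)}(c))}\, w_a w_b.\nonumber
\end{equation}
Since $w_a w_b$ is invertible in $\hat{\cal A}$, comparing the two expressions gives the equality of the $h$-indices, and the defining property $h_x=h_y\Leftrightarrow x=y$ yields $\sigma_a(\sigma_b(c))=\sigma_{\sigma_a(b)}(\sigma_{\tau_b(a)}(c))$.

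For the second identity, I would evaluate $w_c q_b h_{b\triangleright a}$ in two ways. Using first $q_b h_{b\triangleright a}=h_a q_b$ from the rack algebra part, and then pushing $h_a$ and $q_b$ past $w_c$,
\begin{equation}
w_c q_b h_{b\triangleright a} = w_c h_a q_b = h_{\sigma_c(a)} w_c q_b = h_{\sigma_c(a)} q_{\sigma_c(b)} w_c.\nonumber
\end{equation}
Applying $h_{\sigma_c(a)} q_{\sigma_c(b)} = q_{\sigma_c(b)} h_{\sigma_c(b)\triangleright \sigma_c(a)}$ (which is just $h_y q_x = q_x h_{x\triangleright y}$ from \eqref{qualg}) gives
\begin{equation}
w_c q_b h_{b\triangleright a} = q_{\sigma_c(b)}\, h_{\sigma_c(b)\triangleright \sigma_c(a)}\, w_c.\nonumber
\end{equation}
On the other hand, pushing $q_b$ and $h_{b\triangleright a}$ to the right of $w_c$ first,
\begin{equation}
w_c q_b h_{b\triangleright a} = q_{\sigma_c(b)} w_c h_{b\triangleright a} = q_{\sigma_c(b)}\, h_{\sigma_c(b\triangleright a)}\, w_c.\nonumber
\end{equation}
Left-multiplying by $q_{\sigma_c(b)}^{-1}$ and right-multiplying by $w_c^{-1}$ gives $h_{\sigma_c(b)\triangleright \sigma_c(a)}=h_{\sigma_c(b\triangleright a)}$, whence $\sigma_c(b)\triangleright\sigma_c(a)=\sigma_c(b\triangleright a)$.

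There is no real obstacle here beyond choosing the right word to rewrite; the ``hard part'' is simply recognising which mixed word in $w$'s, $q$'s, and $h$'s reduces, via the two compatibility relations $w_a h_b = h_{\sigma_a(b)} w_a$ and $w_a q_b = q_{\sigma_a(b)} w_a$, to the two sides of each claimed identity. Once that is spotted, the cancellation of the flanking invertible factors and the injectivity of $a\mapsto h_a$ finish the argument immediately.
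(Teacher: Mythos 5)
Your proof is correct and follows essentially the same route as the paper: both identities are extracted by rewriting a mixed word in the generators in two ways and cancelling the invertible flanking factors. The only (cosmetic) difference is in the second identity, where the paper rewrites $h_a q_b w_c$ and first obtains the relation for $\sigma_c^{-1}$ before inverting, whereas you choose the word $w_c q_b h_{b\triangleright a}$ and land directly on $\sigma_c(b)\triangleright\sigma_c(a)=\sigma_c(b\triangleright a)$.
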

\begin{proof}
We compute $w_a w_b h_c$ using the associativity of the algebra and the invertibility of $w_a$ for all $a\in X$ and we deduce for all $a,b,c \in X,$
\begin{equation}
h_{\sigma_{\sigma_{a}(b)}(\sigma_{\tau_{b}(a)}(c))} = h_{\sigma_a(\sigma_b(c))}\  
\Rightarrow \ \sigma_{\sigma_{a}(b)}(\sigma_{\tau_{b}(a)}(c))= \sigma_a(\sigma_b(c)).  \label{basiko} \nonumber
\end{equation}
\\
We also compute $h_aq_b w_c$, via associativity and the invertibility of $q_a,\ w_a$ for all $a\in X,$ we obtain for all $a,b,c \in X,$
\begin{equation}
h_{\sigma_c^{-1}(b)\triangleright \sigma_{c}^{-1}(a)} = h_{\sigma^{-1}_c(b \triangleright a)}\ \Rightarrow \  
\sigma_c^{-1}(b)\triangleright \sigma_{c}^{-1}(a) = \sigma^{-1}_c(b \triangleright a),\nonumber
\end{equation}
from the latter it immediately follows, $ \sigma_c(b)\triangleright \sigma_{c}(a) = \sigma_c(b \triangleright a).$
\end{proof}

\begin{defn}  
\label{setalgd} 
(The set-theoretic Yang-Baxter algebra.) 
Let ${\cal A}$ 
be the quandle algebra. Let also $\sigma_a, \ \tau_b: X\to X,$ and $\sigma_a$ be bijective for all $a\in X.$  We say that the unital, associative algebra $\hat {\cal A}$ over $k,$
generated by indeterminates  $1_{\hat {\cal A}} ,q_a, q_a^{-1}, h_a \in {\cal A}$ ($h_a = h_b \Leftrightarrow a =b$) and 
$w_a, w^{-1}_a\in \hat {\cal A},$ $a \in X,$ 
and relations, (\ref{qualgbb}) is a set-theoretic  Yang-Baxter algebra.
\end{defn}

\begin{pro} \label{basica2b} (Hopf algebra) 
Let $\hat {\cal A}$ 
be the set-theoretic Yang-Baxter algebra,
 ${\cal R} = \sum_{b\in X} h_b\otimes q_b$  and $({\cal A}, \Delta,  \epsilon, S,  {\cal R})$  
be the quasi-triangular Hopf algebra of Theorem \ref{basica1}. If for all $a,b,x \in X,$ 
\begin{equation}
 \sigma_x(a) \bullet \sigma_x(b) = \sigma_x(a\bullet b), \label{condition0}
\end{equation} 
then, 
\begin{enumerate}
\item $(\hat {\cal A}, \Delta, \epsilon, S)$ is a Hopf algebra with $\Delta(w_a) = w_a\otimes w_a,$ 
for all $a \in X.$
\item  $\Delta(w_a) {\cal R} = {\cal R} \Delta(w_a),$ for all $a \in X.$
\end{enumerate}
\end{pro}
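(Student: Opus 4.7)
The plan is to promote $\hat{\cal A}$ to a Hopf algebra by declaring the new generators $w_a$ to be group-like, exactly as $q_a$ was in Theorem \ref{basica1}, and then to verify compatibility with the existing structure. First I would show that the prescription $\Delta(w_a)=w_a\otimes w_a$ (with $\Delta(w_a^{-1})=w_a^{-1}\otimes w_a^{-1}$) extends $\Delta$ to an algebra homomorphism on $\hat{\cal A}$, which reduces to checking that each of the four new defining relations in (\ref{qualgbb}) is preserved. The relations $w_a^{-1}w_a=1_{\hat{\cal A}}$ and $w_aw_b=w_{\sigma_a(b)}w_{\tau_b(a)}$ are preserved trivially because both sides are acted on factor-wise by $\Delta$. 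The mixed relation $w_aq_b=q_{\sigma_a(b)}w_a$ is equally immediate, since $q_b$ is group-like. The only non-trivial check is $w_ah_b=h_{\sigma_a(b)}w_a$, because $h_b$ has the non-trivial coproduct $\Delta(h_b)=\sum_{c\bullet d=b}h_c\otimes h_d$.

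For that relation I would compute, using $w_a h_c=h_{\sigma_a(c)}w_a$ in each tensor factor,
\begin{equation}
\Delta(w_a)\Delta(h_b)=\sum_{c\bullet d=b}h_{\sigma_a(c)}w_a\otimes h_{\sigma_a(d)}w_a,\quad \Delta(h_{\sigma_a(b)})\Delta(w_a)=\sum_{c'\bullet d'=\sigma_a(b)}h_{c'}w_a\otimes h_{d'}w_a.
\nonumber
\end{equation}
Under the substitution $c'=\sigma_a(c)$, $d'=\sigma_a(d)$, condition (\ref{condition0}) identifies the two index sets: every pair $(c,d)$ with $c\bullet d=b$ yields $(\sigma_a(c),\sigma_a(d))$ with $\sigma_a(c)\bullet\sigma_a(d)=\sigma_a(b)$, and bijectivity of $\sigma_a$ together with (\ref{condition0}) gives the converse. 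Thus (\ref{condition0}) is exactly what makes $\Delta$ multiplicative here, and is the key step of the argument.

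Next I would check the remaining Hopf axioms. Co-associativity on $w_a$ is automatic from the group-like form. The counit is forced to be $\epsilon(w_a)=1$ (with $\epsilon(w_a^{-1})=1$); compatibility with the mixed relations requires $\delta_{b,e}=\delta_{\sigma_a(b),e}$, which holds because $\sigma_a$ is a group automorphism of $(X,\bullet)$ by (\ref{condition0}) and bijectivity, hence $\sigma_a(e)=e$. The antipode is forced to $S(w_a)=w_a^{-1}$; its anti-multiplicativity on $w_aw_b=w_{\sigma_a(b)}w_{\tau_b(a)}$ follows by inverting both sides, and on the mixed relation $w_ah_b=h_{\sigma_a(b)}w_a$ it reduces to $\sigma_a(b^*)=\sigma_a(b)^*$, which again uses that $\sigma_a$ is a group automorphism. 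This completes part (1).

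For part (2) I would argue directly:
\begin{equation}
\Delta(w_a){\cal R}=(w_a\otimes w_a)\sum_{b\in X}h_b\otimes q_b=\sum_{b\in X}h_{\sigma_a(b)}w_a\otimes q_{\sigma_a(b)}w_a,
\nonumber
\end{equation}
using $w_ah_b=h_{\sigma_a(b)}w_a$ and $w_aq_b=q_{\sigma_a(b)}w_a$. Re-indexing the sum by $b':=\sigma_a(b)$ (permissible because $\sigma_a$ is a bijection on $X$) and factoring out $w_a\otimes w_a$ on the right gives $\left(\sum_{b'\in X}h_{b'}\otimes q_{b'}\right)(w_a\otimes w_a)={\cal R}\Delta(w_a)$, as required. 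The main obstacle throughout is the verification that $\Delta$ preserves $w_ah_b=h_{\sigma_a(b)}w_a$; once (\ref{condition0}) is in hand everything else is essentially bookkeeping.
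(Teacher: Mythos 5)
Your proposal is correct and follows the same route as the paper's own (much terser) proof: declare $w_a$ group-like, verify that $\Delta$ respects the relations of Definition \ref{setalgd} — with the only non-trivial check being $w_ah_b=h_{\sigma_a(b)}w_a$, where (\ref{condition0}) enters exactly as you describe — and prove part (2) by the direct re-indexing computation. The paper merely states these identities; you have correctly supplied the details, including the observations $\sigma_a(e)=e$ and $\sigma_a(b^*)=\sigma_a(b)^*$ needed for the counit and antipode.
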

\begin{proof}  
In our proof below we are using the Definition \ref{setalgd} and (\ref{condition0}).
\begin{enumerate}
\item The coproduct $\Delta$ is an algebra homomorphism.  It is sufficient to check below the 
consistency of all algebraic relations of Definition \ref{setalgd} for the corresponding 
coproducts and (\ref{condition0}).
Then we have for $Y_b \in \{h_b,\ q_b\}$ and for all $a, b\in X,$
\begin{equation}
\Delta(w_a) \Delta(w_b) = \Delta(w_{\sigma_a(b)}) \Delta(w_{\tau_b(a)}),  \quad   
\Delta(w_a) \Delta(Y_b) = \Delta(Y_{\sigma_a(b)}) \Delta(w_a). \nonumber
\end{equation}
Also, $w_a$ is a group-like element, thus the counit and antipode are given as: 
$\epsilon(w_a) =1$ and $S(w_a) = w^{-1}_a.$ Recall that the coproducts, counits 
and antipodes of the generators $h_a,\ q_a$ are given in
Theorem \ref{basica1}.

\item By a direct computation and using the algebraic relations of the 
Definition \ref{setalgd} we conclude for all $a \in X,$
$~\Delta(w_a) {\cal R} = {\cal R} \Delta(w_a).$
\hfill \qedhere
\end{enumerate}
\end{proof}

\subsection{Set-theoretic Drinfel'd twist}
In this subsection we introduce the universal set-theoretic (or combinatorial) Drinfel'd twist (\cite{Doikoutw, DoGhVl, DoRySt} (see also, relevant construction in \cite{Sol, LebVen}). Using the twist, we will be able to obtain the universal ${\cal R}$-matrix associated with the set-theoretic Yang-Baxter (YB) algebra.

Before we introduce the set-theoretic twist, we recall a general statement \cite{Drinfeld}.
\begin{pro} \label{prot} (Admissible Drinfel'd twist \cite{Drinfeld}) Let ${\cal A}$ be a unital, associative algebra, ${\cal F}, {\cal R} \in {\cal A} \otimes {\cal A}$ be invertible elements and ${\cal R}$ satisfies the Yang-Baxter equation. Let also ${\cal F}_{1,23}, {\cal F}_{12,3} \in {\cal A}^{\otimes 3},$ such that
\begin{enumerate}
\item  ${\cal F}_{23}{\cal F}_{1,23} = {\cal F}_{12} {\cal F}_{12,3},$ where recall ${\cal F}_{12} = {\cal F} \otimes 1_{\cal A}$ and ${\cal F}_{23} = 1_{\cal A} \otimes {\cal F}.$ 
\item  ${\cal F}_{1,32}  {\cal R}_{23} = {\cal R}_{23} {\cal F}_{1,23}$ and ${\cal F}_{21,3} {\cal R}_{12} = {\cal R}_{12} {\cal F}_{12,3}.$
\end{enumerate}
That is, ${\cal F}$ is an admissible Drinfel'd twist. Define also ${\cal R}^F := {\cal F}^{(op)} {\cal R} {\cal F},$ ${\cal F}^{(op)} = \pi({\cal F})$ where $\pi: {\cal A} \otimes {\cal A} \to {\cal A} \otimes {\cal A}$ is the flip map. Then ${\cal R}^F$ also satisfies the Yang-Baxter equation. 
\end{pro}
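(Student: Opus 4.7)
The plan is to expand the twisted Yang-Baxter equation ${\cal R}^F_{12}{\cal R}^F_{13}{\cal R}^F_{23}={\cal R}^F_{23}{\cal R}^F_{13}{\cal R}^F_{12}$ using ${\cal R}^F_{ij}={\cal F}_{ji}{\cal R}_{ij}{\cal F}_{ij}$, and reduce it, by means of the cocycle and intertwining identities in items (1) and (2), to the Yang-Baxter equation for ${\cal R}$, which holds by hypothesis.

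First I would substitute the defining expression ${\cal R}^F_{ij}={\cal F}_{ji}{\cal R}_{ij}{\cal F}_{ij}$ into both sides, so that the LHS becomes $({\cal F}_{21}{\cal R}_{12}{\cal F}_{12})({\cal F}_{31}{\cal R}_{13}{\cal F}_{13})({\cal F}_{32}{\cal R}_{23}{\cal F}_{23})$ and the RHS is the reversed product. The key observation is that consecutive two-legged factors such as ${\cal F}_{12}{\cal F}_{31}$ share a tensor slot and hence do not commute; however they are recognizable as slices of the three-legged elements ${\cal F}_{12,3}=(\Delta\otimes\id){\cal F}$ and ${\cal F}_{1,23}=(\id\otimes\Delta){\cal F}$, together with their coproduct-opposite counterparts ${\cal F}_{21,3}$, ${\cal F}_{1,32}$. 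Using the cocycle identity ${\cal F}_{23}{\cal F}_{1,23}={\cal F}_{12}{\cal F}_{12,3}$ from (1), and its flipped version (to be derived from (1) combined with (2)), each such pair is rewritten so that a single three-legged ${\cal F}$ sits adjacent to the ${\cal R}$-factor it will be intertwined with.

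The next step is to migrate the three-legged ${\cal F}$'s past the ${\cal R}_{12}$ and ${\cal R}_{23}$ factors via the intertwining relations of item (2), namely ${\cal R}_{23}{\cal F}_{1,23}={\cal F}_{1,32}{\cal R}_{23}$ and ${\cal R}_{12}{\cal F}_{12,3}={\cal F}_{21,3}{\cal R}_{12}$. Performing these moves in the correct order, each ${\cal F}$-factor is pushed outwards and the middle block collapses to ${\cal R}_{12}{\cal R}_{13}{\cal R}_{23}$. Running the mirror manipulations on the right-hand side produces the same outer ${\cal F}$-decoration around a central block ${\cal R}_{23}{\cal R}_{13}{\cal R}_{12}$, so the assumed YBE for ${\cal R}$ closes the argument.

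The main obstacle I anticipate is bookkeeping rather than conceptual: the intertwiners in (2) only act directly on ${\cal R}_{12}$ and ${\cal R}_{23}$, with no identity stated for ${\cal R}_{13}$, so the ordering of the manipulations must be chosen so that ${\cal R}_{13}$ is never moved across an ${\cal F}$ in isolation. Any apparent ${\cal R}_{13}$-pass is to be realized by combining an ${\cal R}_{12}$- and an ${\cal R}_{23}$-intertwining together with the YBE for ${\cal R}$ inserted at the appropriate moment. A secondary technical step is to verify the op-flipped cocycle identity ${\cal F}_{32}{\cal F}_{1,32}={\cal F}_{21}{\cal F}_{21,3}$ needed to treat factors on the outer edges; this is not among the hypotheses but should follow by conjugating (1) with a suitably placed ${\cal R}$ using (2).
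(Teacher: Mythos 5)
Your overall strategy coincides with the paper's: expand the twisted Yang--Baxter equation, use the cocycle identity (1) to assemble the two-legged twists into three-legged elements, intertwine those past the $\mathcal{R}$-factors via (2), and let the whole thing telescope onto the Yang--Baxter equation for $\mathcal{R}$. Where you diverge is precisely at the two technical points you flag, and there your proposed fixes are both unnecessary and, as described, not clearly workable. The paper resolves both at a stroke by observing that the leg-permutations $\pi_{\sigma}$ are algebra automorphisms of $\mathcal{A}^{\otimes 3}$, so applying them to the hypotheses (1) and (2) yields, for free, the whole family of identities $\mathcal{F}_{jik}\,\mathcal{R}_{ij}\,\mathcal{F}_{ijk}^{-1}=\mathcal{R}^F_{ij}$ and $\mathcal{F}_{ikj}\,\mathcal{R}_{jk}\,\mathcal{F}_{ijk}^{-1}=\mathcal{R}^F_{jk}$ for \emph{all} $i,j,k$, where $\mathcal{F}_{123}:=\mathcal{F}_{12}\mathcal{F}_{12,3}=\mathcal{F}_{23}\mathcal{F}_{1,23}$ and $\mathcal{F}_{\sigma(1)\sigma(2)\sigma(3)}:=\pi_{\sigma}(\mathcal{F}_{123})$. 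In particular the $\mathcal{R}_{13}$-intertwiner (e.g.\ $\mathcal{F}_{312}\,\mathcal{R}_{13}=\mathcal{R}^F_{13}\,\mathcal{F}_{132}$, coming from $\pi_{12}$ applied to the first relation in (2)) and the flipped cocycle identities (e.g.\ $\mathcal{F}_{32}\mathcal{F}_{1,32}=\mathcal{F}_{13}\mathcal{F}_{13,2}$, coming from $\pi_{23}$ applied to (1)) are automatic; no conjugation of (1) by $\mathcal{R}$ is needed, and no insertion of the YBE mid-computation is needed either. Your alternative plan of never passing a twist over $\mathcal{R}_{13}$ cannot actually be carried out: once $\mathcal{F}_{321}$ has been pushed past the outer factor $\mathcal{R}_{12}$ (resp.\ $\mathcal{R}_{23}$), the resulting three-legged element sits directly against $\mathcal{R}_{13}$, which occupies the middle slot of \emph{both} sides of the YBE, so an $\mathcal{R}_{13}$-intertwining identity is unavoidable; the permutation argument is the clean way to obtain it. Finally, note that your literal reading $\mathcal{R}^F_{ij}=\mathcal{F}_{ji}\mathcal{R}_{ij}\mathcal{F}_{ij}$ follows a typo in the statement: the telescoping only closes for $\mathcal{R}^F=\mathcal{F}^{(op)}\mathcal{R}\mathcal{F}^{-1}$ (this is the convention used in the paper's proof and in Theorem \ref{twist2}), since the computation produces
\begin{equation*}
\mathcal{F}_{jik}\,\mathcal{R}_{ij}\,\mathcal{F}_{ijk}^{-1}
=\mathcal{F}_{ji}\,\mathcal{F}_{ji,k}\,\mathcal{R}_{ij}\,\mathcal{F}_{ij,k}^{-1}\,\mathcal{F}_{ij}^{-1}
=\mathcal{F}_{ji}\,\mathcal{R}_{ij}\,\mathcal{F}_{ij}^{-1},
\end{equation*}
after which
$\mathcal{F}_{321}\mathcal{R}_{12}\mathcal{R}_{13}\mathcal{R}_{23}
=\mathcal{R}^F_{12}\mathcal{R}^F_{13}\mathcal{R}^F_{23}\mathcal{F}_{123}$ and likewise for the reversed product, and invertibility of $\mathcal{F}_{123}$ finishes the argument.
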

\begin{proof}

It is convenient to introduce some handy notation that can be used in the following.
First, let ${\cal F}_{123} := {\cal F}_{12}{\cal F}_{1,23} = {\cal F}_{23} {\cal F}_{1,23}.$
Let also $i,j,k \in \{1,2,3\},$ then ${\cal F}_{jik} = \pi_{ij}( {\cal F}_{ijk})$ and
${\cal F}_{ikj} = \pi_{jk}({\cal F}_{ijk}),$ where $\pi$ is the flip map. This notation 
describes all possible permutations of the indices $1,\ 2,\ 3$. 

The proof is quite straightforward, \cite{Drinfeld},
we just give a brief outline here. 
We first prove that ${\cal F}_{jik}{\cal R}_{ij}{\cal F}^{-1}_{ijk} = {\cal R}^F_{ij},$ indeed via condition (2) of the proposition the definition of ${\cal R}^{F}$ and the notation introduced above we have
\begin{equation}
{\cal F}_{jik}{\cal R}_{ij}{\cal F}^{-1}_{ijk} = {\cal F}_{ji} {\cal F}_{ji,k} {\cal R}_{ij} {\cal F}_{ijk} = {\cal F}_{ji}{\cal R}_{ij}{\cal F}_{ij,k}{\cal F}_{ijk}^{-1} = {\cal R}^{F}_{ij} {\cal F}_{ij}{\cal F}_{ij,k}{\cal F}^{-1}_{ijk} = {\cal R}^F_{ij}.
\end{equation}
Similarly, it is shown that ${\cal F}_{ikj}{\cal R}_{jk}{\cal F}^{-1}_{ijk} = {\cal R}^F_{jk}.$

Then from the YBE we have,
\begin{equation}
{\cal  F}_{321} {\cal R}_{12} {\cal R}_{13} {\cal R}_{23} ={\cal  F}_{321} 
{\cal R}_{23}{\cal R}_{13}{\cal R}_{12}\ \Rightarrow\  {\cal R}^F_{12} {\cal R}^F_{13} {\cal R}^F_{23}{\cal  F}_{123} 
=  {\cal R}^F_{23}{\cal R}^F_{13}{\cal R}^F_{12}{\cal  F}_{123}. \nonumber
\end{equation}
But ${\cal F}_{123}$ is invertible, hence ${\cal R}^F$ indeed satisfies the Yang-Baxter equation.
\end{proof}

Below we prove the main theorem on the set-theoretic Drinfel'd twist (see more details on the proof in \cite{Doikoutw, DoRySt}).
\begin{thm} \label{twist2} (Set-theoretic Drinfel'd twist \cite{Doikoutw, DoRySt}) 
Let ${\cal R} = \sum_{a\in X} h_a \otimes q_a\in {\cal A} \otimes {\cal A}$ 
be the universal rack ${\cal R}-$matrix. Let also $\hat {\cal A}$
be the decorated rack algebra, ${\cal F}\in \hat {\cal A} \otimes \hat  {\cal A},$  
such that ${\cal F} =\sum_{b\in X}h_b \otimes w_b^{-1}$ and
${\cal R}_{ij}^F := {\cal F}_{ji} {\cal R}_{ij} {\cal F}_{ij}^{-1},$ $i,j \in \{1,2,3\}.$
We also define:
\begin{eqnarray}
{\cal F}_{1,23} := \sum_{a\in X} h_a\otimes w_a^{-1} \otimes 
w_a^{-1}=,  \quad {\cal F}^*_{12,3} : =  \sum_{a,b \in X}h_a\otimes h_{\sigma_a(b)} 
\otimes w^{-1}_{b} w^{-1}_a.
\end{eqnarray}
Let also  for every
$a,b \in X,$ $~b\triangleright a = \sigma_b(\tau_{\sigma^{-1}_a(b)}(a)).$ 
Then, the following statements are true:
\begin{enumerate}
\item ${\cal F}_{12} {\cal F}^*_{12,3} ={\cal F}_{23} {\cal F}_{1,23} =: {\cal F}_{123}.$
\item For  $i,j,k \in \{1,2,3\}$:
(i) ${\cal F}_{ikj} {\cal R}_{jk} ={\cal R}_{jk}^F {\cal F}_{ijk}$
and 
(ii)  ${\cal F}_{jik} {\cal R}_{ij} ={\cal R}_{ij}^F {\cal F}_{ijk}.$
\end{enumerate}
That is,  ${\cal F}$ is an  admissible Drinfel'd twist.
\end{thm}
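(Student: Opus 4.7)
The plan is to verify both assertions by direct computation in $\hat{\cal A}^{\otimes 3}$, using the defining relations of the decorated rack algebra together with the stated identity $b\triangleright a = \sigma_b(\tau_{\sigma^{-1}_a(b)}(a))$.

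For part (1), I would expand both sides of ${\cal F}_{12}\,{\cal F}^*_{12,3}={\cal F}_{23}\,{\cal F}_{1,23}$ from their definitions. The left-hand side gives $\sum_{a,b,c\in X} h_a h_c\otimes w_a^{-1} h_{\sigma_c(b)}\otimes w_b^{-1}w_c^{-1}$; applying $h_a h_c=\delta_{a,c}h_a$ and then the commutation $w_a^{-1} h_{\sigma_a(b)}=h_b w_a^{-1}$, which is an immediate rewriting of $w_a h_b=h_{\sigma_a(b)}w_a$, collapses it to $\sum_{a,b} h_a\otimes h_b w_a^{-1}\otimes w_b^{-1}w_a^{-1}$. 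The right-hand side similarly reduces to $\sum_{a,b} h_b\otimes h_a w_b^{-1}\otimes w_a^{-1}w_b^{-1}$, and relabelling $a\leftrightarrow b$ matches the two.

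For part (2) there are six identities to check (three cyclic labels for each of (i) and (ii)). My strategy is to verify one representative case explicitly, say ${\cal F}_{132}\,{\cal R}_{23}={\cal R}^F_{23}\,{\cal F}_{123}$ corresponding to $(i,j,k)=(1,2,3)$, and to handle the remaining cases by analogous computations obtained by permuting tensor slots. Expanding both sides, I would push every $w_a^{\pm 1}$ factor past the $h$'s and $q$'s using $w_a^{-1} h_c=h_{\sigma_a^{-1}(c)}w_a^{-1}$ and $w_a^{-1} q_c=q_{\sigma_a^{-1}(c)}w_a^{-1}$, and then convert products $h_b q_c$ into $q_c h_{c\triangleright b}$ via $q_b h_{b\triangleright a}=h_a q_b$. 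The resulting two tensors should coincide precisely after the hypothesis $b\triangleright a=\sigma_b(\tau_{\sigma^{-1}_a(b)}(a))$ is invoked to identify the surviving labels on the $h$-factors.

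The hard part will be the index bookkeeping in step (2): the commutation relations of $\hat{\cal A}$ intertwine $\sigma$, $\tau$ and $\triangleright$ non-trivially, so aligning indices on the two sides of each twist condition is delicate. The crucial moment is exactly where the identity $b\triangleright a=\sigma_b(\tau_{\sigma^{-1}_a(b)}(a))$ must enter, reconciling an expression arising from $\sigma, \tau$ on one side with one arising from $\triangleright$ on the other. Once (1) and the six cases of (2) are established, Proposition \ref{prot} applies and ${\cal F}$ is admissible.
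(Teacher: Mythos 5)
Part (1) of your plan is fine and is essentially the paper's computation: both sides collapse to $\sum_{a,b\in X} h_a\otimes h_b\,w_a^{-1}\otimes w_b^{-1}w_a^{-1}$ using $h_ah_c=\delta_{a,c}h_a$ and $w_a^{-1}h_{\sigma_a(b)}=h_bw_a^{-1}$.

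Part (2) contains a genuine gap. The two families of identities, (i) ${\cal F}_{ikj}{\cal R}_{jk}={\cal R}^F_{jk}{\cal F}_{ijk}$ and (ii) ${\cal F}_{jik}{\cal R}_{ij}={\cal R}^F_{ij}{\cal F}_{ijk}$, are \emph{not} related to one another by permuting tensor slots: ${\cal R}=\sum_a h_a\otimes q_a$ is not symmetric, and the two factorizations of ${\cal F}_{123}$ established in part (1) are structurally different objects (${\cal F}_{1,23}=\sum_a h_a\otimes w_a^{-1}\otimes w_a^{-1}$ versus ${\cal F}^*_{12,3}=\sum_{a,b}h_a\otimes h_{\sigma_a(b)}\otimes w_b^{-1}w_a^{-1}$). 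The representative case you chose, ${\cal F}_{132}{\cal R}_{23}={\cal R}^F_{23}{\cal F}_{123}$, is precisely the \emph{easy} one: writing ${\cal F}_{132}={\cal F}_{32}{\cal F}_{1,32}$ and noting ${\cal F}_{1,32}={\cal F}_{1,23}$, it follows from the intertwining property $\Delta(w_a){\cal R}={\cal R}\Delta(w_a)$ of Proposition \ref{basica2b}, and the hypothesis $b\triangleright a=\sigma_b(\tau_{\sigma^{-1}_a(b)}(a))$ plays no role there. That hypothesis enters only in the other case, ${\cal F}_{213}{\cal R}_{12}={\cal R}^F_{12}{\cal F}_{123}$, which requires the \emph{other} factorization ${\cal F}_{213}={\cal F}_{21}{\cal F}^*_{21,3}$ and a separate computation showing ${\cal F}^*_{21,3}{\cal R}_{12}={\cal R}_{12}{\cal F}^*_{12,3}$: there one finds
\[
{\cal F}^*_{21,3}{\cal R}_{12}=\sum_{a,c} h_a\otimes q_a h_{a\triangleright c}\otimes \bigl(w_c\, w_{\sigma_c^{-1}(a)}\bigr)^{-1},\qquad
{\cal R}_{12}{\cal F}^*_{12,3}=\sum_{a,b} h_a\otimes q_a h_{\sigma_a(b)}\otimes (w_a w_b)^{-1},
\]
and matching these requires both the algebra relation $w_aw_b=w_{\sigma_a(b)}w_{\tau_b(a)}$ and the stated link between $\triangleright$ and $\sigma,\tau$. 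By verifying only the first case and claiming the rest ``by permuting tensor slots,'' you would skip exactly the computation where the theorem's key hypothesis is used; the argument as proposed therefore does not close. (As a lesser point: by Proposition \ref{prot} only these two identities, not six, are needed for admissibility, and the cocycle identity of part (1) is what lets you pick the convenient factorization of ${\cal F}_{123}$ in each case — a reduction your write-up does not exploit.)
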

\begin{proof}
The proof is straightforward based on the underlying algebra $\hat {\cal A}.$
\begin{enumerate}
\item Indeed, this is proved by a direct computation and use of the decorated rack algebra. 
In fact, ${\cal F}_{123} = \sum_{a,b\in X} h_a \otimes h_b w_a^{-1} \otimes w_b^{-1} w_a^{-1}.$
\item Given the notation introduced in the proof of Proposition \ref{prot} it  suffices to show  that
${\cal F}_{132} {\cal R}_{23} = {\cal R}^F_{23} {\cal F}_{123}$ and ${\cal F}_{213} 
{\cal R}_{12} = {\cal R}^F_{12} {\cal F}_{123}.$ \\
\noindent (i) 
Due to the fact that for all $a \in X,$
$ \Delta(w_a){\cal R} =  {\cal R} \Delta(w_a)$ 
(see Proposition \ref{basica2b})
we arrive at ${\cal F}_{1,32} {\cal R}_{23} = {\cal R}_{23} F_{1,23},$ then
\[{\cal F}_{132} {\cal R}_{23} = {\cal F}_{32} {\cal F}_{1,32} {\cal R}_{23} ={\cal F}_{32} 
{\cal R}_{23} {\cal F}_{1,23} = {\cal R}_{23}^F{\cal F}_{123}. \]
\noindent (ii) By means of the relations 
of the decorated rack algebra $\hat {\cal A}$ we compute:
\[ {\cal F}^*_{21,3}{\cal R}_{12} = \sum_{a,c\in X} h_a \otimes q_a h_{a\triangleright c} \otimes 
 (w_c  w_{\sigma^{-1}_c(a)} )^{-1},\quad {\cal R}_{12}{ \cal F}^*_{12,3}= \sum_{a,b \in X} 
h_a \otimes q_a h _{\sigma_a(b)} \otimes (w_aw_b)^{-1}.\] 
Due to the fact that $~b\triangleright a = \sigma_b(\tau_{\sigma^{-1}_a(b)}(a))$ and $w_a w_b = w_{\sigma_a(b)} w_{ \tau_b(a)}$
we conclude that ${\cal F}^*_{21,3}{\cal R}_{12} = 
{\cal R}_{12}{ \cal F}^*_{12,3}$ and consequently (recall ${\cal F}_{213} = {\cal F}_{21} {\cal F}^*_{21,3}$)
\[{\cal F}_{213} {\cal R}_{12} = {\cal F}_{21} {\cal F}^*_{21,3} {\cal R}_{12} 
={\cal F}_{21} {\cal R} _{12}{\cal F}^*_{12,3} = {\cal R}_{12} ^F{\cal F}_{123}. \hfill \qedhere
\]
\end{enumerate}
\end{proof}

\begin{cor} \label{twist22}  
 Let $\hat {\cal A}$
be the decorated rack algebra, ${\cal F}\in \hat {\cal A} \otimes \hat  {\cal A},$  
such that ${\cal F} =\sum_{b\in X}h_b \otimes w_b^{-1}$ and
${\cal R}_{ij}^F := {\cal F}_{ji} {\cal F}_{ij}^{-1},$ $i,j \in \{1,2,3\}.$
We also define:
\begin{eqnarray}
{\cal F}_{1,23} := \sum_{a\in X} h_a\otimes w_a^{-1} \otimes 
w_a^{-1}=,  \quad {\cal F}^*_{12,3} : =  \sum_{a,b \in X}h_a\otimes h_{\sigma_a(b)} 
\otimes w^{-1}_{b} w^{-1}_a.
\end{eqnarray}
Let also  for every
$a,b \in X,$ $\sigma_{\sigma_a(b)}(\tau_b(a)) =a.$ 
Then, the following statements are true:
\begin{enumerate}
\item ${\cal F}_{12} {\cal F}^*_{12,3} ={\cal F}_{23} {\cal F}_{1,23} =: {\cal F}_{123}.$
\item For  $i,j,k \in \{1,2,3\}$:
(i) ${\cal F}_{ikj}  ={\cal R}_{jk}^F {\cal F}_{ijk}$
and 
(ii)  ${\cal F}_{jik}  ={\cal R}_{ij}^F {\cal F}_{ijk}.$
\item ${\cal R}_{12}^F {\cal R}_{21}^F = 1_{\hat {\cal A}^{\otimes 2}},$ i.e. ${\cal R^F}$ is reversible.
\end{enumerate}
\end{cor}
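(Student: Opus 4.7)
The plan is to regard Corollary~\ref{twist22} as the specialization of Theorem~\ref{twist2} obtained by replacing the universal rack ${\cal R}$-matrix with the trivial Yang-Baxter solution $1_{\hat{\cal A}}\otimes 1_{\hat{\cal A}}$; under this replacement the twisted element collapses to ${\cal R}^F_{ij}={\cal F}_{ji}{\cal F}_{ij}^{-1}$ as in the statement. Each of the three assertions is then an identity inside the decorated rack algebra $\hat{\cal A}$, proved by direct manipulation of the generators using the relations of Definition~\ref{setalgd1}, chiefly $h_ah_b=\delta_{a,b}h_a$, $w_ah_b=h_{\sigma_a(b)}w_a$ and $w_aw_b=w_{\sigma_a(b)}w_{\tau_b(a)}$.

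For part (1), I would run exactly the same direct computation as in part (1) of Theorem~\ref{twist2}, which only involves the multiplicative structure of $\hat{\cal A}$ and does not invoke the hypothesis $\sigma_{\sigma_a(b)}(\tau_b(a))=a$; both sides equal $\sum_{a,b}h_a\otimes h_b w_a^{-1}\otimes w_b^{-1}w_a^{-1}$. Part (3) is immediate: by definition ${\cal R}^F_{12}{\cal R}^F_{21}={\cal F}_{21}{\cal F}_{12}^{-1}{\cal F}_{12}{\cal F}_{21}^{-1}=1_{\hat{\cal A}}^{\otimes 2}$, so reversibility of ${\cal R}^F$ follows with no further algebraic input.

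The substance is in part (2), and the strategy is to exploit the two factorizations of ${\cal F}_{ijk}$ from (1). For (i), substituting the two factorizations makes the ${\cal F}_{jk}$ factor of ${\cal R}^F_{jk}{\cal F}_{ijk}$ cancel, reducing the claim to the symmetry ${\cal F}_{i,jk}={\cal F}_{i,kj}$, which is manifest since ${\cal F}_{1,23}=\sum_a h_a\otimes w_a^{-1}\otimes w_a^{-1}$ is already symmetric in its last two tensor slots. The main obstacle is part (2)(ii): after the analogous cancellation the identity reduces to ${\cal F}^*_{ij,k}={\cal F}^*_{ji,k}$, and this is where the hypothesis must enter. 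I would reindex ${\cal F}^*_{12,3}=\sum_{a,b}h_a\otimes h_{\sigma_a(b)}\otimes w_b^{-1}w_a^{-1}$ via $\alpha=\sigma_a(b)$ (permissible since $\sigma_a$ is bijective) and rearrange the third slot using $w_aw_b=w_{\sigma_a(b)}w_{\tau_b(a)}$; the analogous reindexing applied to ${\cal F}^*_{21,3}$ produces an expression whose third slot instead involves $w_{\sigma_\alpha^{-1}(a)}^{-1}w_\alpha^{-1}$. Comparing the two expressions, the desired equality holds precisely when $\tau_b(a)=\sigma_{\sigma_a(b)}^{-1}(a)$, which is exactly the hypothesis of the corollary. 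Note that this is the specialization of the rack condition $b\triangleright a=\sigma_b(\tau_{\sigma_a^{-1}(b)}(a))$ used in Theorem~\ref{twist2} under the trivial rack action $b\triangleright a=a$, consistent with setting ${\cal R}=1_{\hat{\cal A}}^{\otimes 2}$.
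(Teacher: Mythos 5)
Your proposal is correct and follows essentially the same route as the paper, whose proof of the corollary is simply to cite Theorem \ref{twist2} (with reversibility holding by construction); you make the same specialization to the trivial rack action $b\triangleright a=a$ (equivalently ${\cal R}=1_{\hat{\cal A}}\otimes 1_{\hat{\cal A}}$) explicit, and you correctly locate the only place the hypothesis $\sigma_{\sigma_a(b)}(\tau_b(a))=a$ is needed, namely ${\cal F}^*_{21,3}={\cal F}^*_{12,3}$ in part (2)(ii).
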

\begin{proof}
   This is an immediate consequence of Theorem \ref{twist2}. ${\cal R}^F$ is reversible by construction.
\end{proof}

We are now going to examine the twisted ${\cal R}-$matrix as well 
as the twisted co-products (see also \cite{Doikoutw, DoRy22}). 
\begin{rem} 
(Twisted universal ${\cal R}$-matrix )We extract now the explicit 
expressions of the twisted universal ${\cal R}-$matrix and the twisted coproducts of the algebra. 
We recall the admissible twist ${\cal F} = \sum_{b\in X}h_b \otimes w_b^{-1}.$

\begin{itemize}

\item The twisted ${\cal R}-$matrix: 
\[{\cal R}^F = {\cal F}^{(op)} {\cal R} {\cal F}^{-1} = \sum_{a,b\in X}h_bw_a^{-1}  
 \otimes h_aq_{\sigma_a(b)} w_{\sigma_a(b)}.\]

\item The twisted coproducts: $\Delta_F(y) = {\cal F} \Delta(y) {\cal F^{-1}},$ $y\in \hat {\cal A}$  (in this Remark $\hat {\cal A}$ 
denotes the set-theoretic Yang-Baxter algebra) and we recall, for $a\in X,$
\[ \Delta(w_a) =w_a\otimes w_a,\quad \Delta(h_a) = 
\sum_{b,c\in X} h_b \otimes h_c\big |_{b\bullet c =a}, \quad \Delta(q_a) =q_a \otimes q_a.\]
Then, the twisted coproducts read as:
$~\Delta_F(w_a) = \sum_{b \in X} h_{\sigma_a(b)} w_a \otimes w _{\tau_b(a)},$\\
$\Delta_F(h_a) = \sum_{b \in X} h_b \otimes w_b^{-1} h_c w_b\big |_{b\bullet c =a},$ $~\Delta_F(q_a) =\sum_{b \in X} q_a h_{a\triangleright b} \otimes  w_b^{-1} q_a 
w_{a\triangleright b},$
\end{itemize}
and it immediately follows that ${\cal R}^F \Delta_F(Y) = \Delta_F^{(op)}(Y) {\cal R}^F,$ $Y \in \hat {\cal A}.$
\end{rem}

\begin{rem} \label{remfu2b}  
{\bf Fundamental representation $\&$ the set-theoretic solution:}\\ 
Let $\rho: \hat  {\cal A} \to \mbox{End}(V),$ such that
\begin{equation}
q_a \mapsto \sum_{x \in X} e_{x, a \triangleright x}, \quad h_a\mapsto e_{a,a}, \quad  
w_a \mapsto \sum_{b \in X} e_{\sigma_a(b),b},\label{repbb}
\end{equation}
then ${\cal F} \mapsto F := \sum_{a,b \in X} e_{a,a} \otimes e_{\sigma_a(b),b}$ and ${\cal R}^F \mapsto r^F:= 
\sum_{a,b\in X} e_{b,\sigma_a(b)} \otimes e_{a, \tau_b(a)},$
where we recall that for all $a,b,c \in X,$ $\sigma_{\sigma_{a}(b)}(\sigma_{\tau_{b}(a)}(c)) = \sigma_a(\sigma_b(c)$ and $\tau_{b}(a):=\sigma_{\sigma_a(b)}^{-1}(\sigma_a(b) 
\triangleright a)$ and $(X, \triangleright)$ is a rack (see also \cite{Sol, LebVen, DoRy22}). This is the linearized version of the general set-theoretic 
solution of the Yang-Baxter equation.

Recall, ${\cal P} = \sum_{a,b \in X} e_{a,b} \otimes e_{b,a}$ is the permutation (flip) operator, 
then the solution of the braid equation is the linearized set-theoretic solution,
 $\check r^F = {\cal P} r^F = \sum_{a,b\in X} e_{a,\sigma_a(b)} \otimes e_{b, \tau_b(a)}.$

 In the special case, where the set-theoretic solution of the braid equation is invertible, i.e. $(\check r^{F})^2 =\mbox{id},$ then $\sigma_{\sigma_a(b)}(\tau_b(a)) = a,$ which leads to $a\triangleright b = b$ for all $a,b \in X$ (see also Remark \ref{remfu}). That is to say that all involutive set-theoretic $\check r$-matrices are coming from the permutation operator via the set-theoretic twist, i.e. $\check  r^F = F {\cal P} F^{-1}$ (and $r^F = F^{(op)} F^{-1}$) where recall ${\cal P}$ is the permutation operator and  $F =\sum_{a,b\in X} e_{a,a} \otimes e_{\sigma_{a(b), b}},$ such that for all $a,b,c \in X,$ $\sigma_a(\sigma_b(c)) =\sigma_{\sigma_a(b)}(\sigma_{\tau_b(a)}(c))$ and $\sigma_{\sigma_a(b)}(\tau_b(a)) = a$ (see also Corollary \ref{twist22} and \cite{DoRySt, DoikouYang}). 
 \end{rem}

\section{Solutions from Drinfel'd twists}
\noindent The main aim of this section is the derivation of general invertible solutions of the set-theoretic Yang-Baxter equation via an admissible Drinfel'd twist.

From the analysis of Section 5 we conclude that any admissible set-theoretic twist satisfies two fundamental conditions, for all $a,b, c \in X$,
$\sigma_a(\sigma_b(c)) = \sigma_{\sigma_{a}(b)}(\sigma_{\tau_b(a)}(c)),$ and $\sigma_{\sigma_a(b)}(\tau_b(a)) = \sigma_a(b) \triangleright a$ and $(X, \triangleright)$ is a rack (see also Theorem \ref{twist2}).
It is thus convenient to introduce an alternative definition for the admissible set-theoretic twist at the fundamental representation (see also Remark \ref{remfu2b} and a similar definition first introduced in \cite{DoRySt}).
\begin{defn} \label{admi2}
Let $(X, \triangleright)$ be a rack and define $F :=
\sum_{x,y\in X} e_{x,x} \otimes  e_{\sigma_x(y), y},$ such that $\sigma_x:X \to X$ is bijective. 
$F$ is called an admissible set-theoretic Drinfel'd twist if for all $a,b, c \in X$ (see also Theorem \ref{twist2})
\begin{enumerate}[{(a)}]
\item $\sigma_a(\sigma_b(c)) = \sigma_{\sigma_{a}(b)}(\sigma_{\tau_b(a)}(c)).$
\item $ \sigma_{\sigma_a(b)}(\tau_b(a)) = \sigma_a(b) \triangleright a.$
\end{enumerate}
\end{defn}
Notice that in the involutive case the second condition above becomes $\sigma_{\sigma_a(b)}(\tau_b(a)) =  a$ for all $a,b \in X.$

\subsection{Involutive case}

\noindent We first focus on the systematic derivation of involutive, set-theoretic solutions of the braid equation by exploiting the existence of an admissible Drinfel'd twist.

The following useful proposition can be now formulated (see also \cite{Ru07, Ru19, DoRy23}).
\begin{pro} \label{p1} Let $(X, \circ, 1)$ be a group and let $\sigma_a, \tau_b: X \to X,$ such that for all $a,b \in X,$ $\sigma_a$ is a bijection, $a\circ b = \sigma_a(b) \circ \tau_{b}(a)$ and $\sigma_{\sigma_a(b)}(\tau_b(a)) =a.$ 
Moreover, define $+: X\times X \to X,$ such that $a + b := a\circ \sigma^{-1}_a(b)$ 
and assume that $+$ is associative and for all $a,b,c \in X,$ $a\circ (b+c) = a \circ b -a +a \circ c$.
Then for all $a,b,c\in X,$
\begin{enumerate}
    \item $(X,+,\circ)$ is a brace and $\sigma_a(b) = -a + a\circ b.$ 
    \item  $\sigma_a(\sigma_b(c)) = \sigma_{\sigma_a(b)}(\sigma_{\tau_{b}(a}(c)).$ 
\end{enumerate}
    \end{pro}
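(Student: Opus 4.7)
The plan is to establish the brace axioms step by step, using the distributivity assumption to pin down the additive identity, the multiplicative structure of $(X,\circ)$ together with bijectivity of $\sigma_a$ to produce additive inverses, and finally the involutivity hypothesis $\sigma_{\sigma_a(b)}(\tau_b(a))=a$ to force commutativity of $+$. Once the brace structure is in hand, part 2 reduces to a short computation.

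I would start with the additive identity. Assuming $(X,+)$ has some neutral element $0$ (implicit in the notation $-a$ appearing in the distributivity hypothesis), I plug $c=0$ into $a\circ(b+c)=a\circ b-a+a\circ c$ to get $a\circ b=a\circ b-a+a\circ 0$, which forces $a\circ 0=a$ and hence $0=1$. From the defining relation $a+b=a\circ\sigma_a^{-1}(b)$, the identity $a+0=a$ then gives $\sigma_a^{-1}(0)=1$, i.e.\ $\sigma_a(1)=1$ for every $a\in X$. Substituting $b\mapsto\sigma_a(b)$ in the defining relation yields $a+\sigma_a(b)=a\circ b$, and subtracting $a$ on the left proves the formula $\sigma_a(b)=-a+a\circ b$ (once additive inverses are shown to exist).

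For additive inverses I would set $-b:=\sigma_b(b^{-1})$, where $b^{-1}$ is the $\circ$-inverse. The direct computation $b+(-b)=b\circ\sigma_b^{-1}(\sigma_b(b^{-1}))=b\circ b^{-1}=1=0$ handles the right inverse. For the left inverse, I would compute $(-b)+b=\sigma_b(b^{-1})\circ\sigma^{-1}_{\sigma_b(b^{-1})}(b)$. The relation $b\circ b^{-1}=\sigma_b(b^{-1})\circ\tau_{b^{-1}}(b)=1$ forces $\tau_{b^{-1}}(b)=\sigma_b(b^{-1})^{-1}$, and then the involutivity hypothesis $\sigma_{\sigma_b(b^{-1})}(\tau_{b^{-1}}(b))=b$ gives $\sigma^{-1}_{\sigma_b(b^{-1})}(b)=\sigma_b(b^{-1})^{-1}$, so $(-b)+b=1=0$. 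Together with the assumed associativity this shows $(X,+)$ is a group.

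The main obstacle is abelianness. My approach is to combine $a\circ b=\sigma_a(b)\circ\tau_b(a)$ with $\sigma_{\sigma_a(b)}(\tau_b(a))=a$ to extract $\tau_b(a)=\sigma_{\sigma_a(b)}^{-1}(a)$, and then use $\sigma_x^{-1}(z)=x^{-1}\circ(x+z)$ (which follows from $\sigma_x(y)=-x+x\circ y$) to rewrite $a\circ b=\sigma_a(b)\circ\sigma_a(b)^{-1}\circ(\sigma_a(b)+a)=\sigma_a(b)+a=(-a+a\circ b)+a$. This yields $-a+a\circ b+a=a\circ b$, equivalently $a\circ b+a=a+a\circ b$. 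Since $b\mapsto a\circ b$ is a bijection on $X$, $a$ commutes (additively) with every element, and since $a$ was arbitrary, $(X,+)$ is abelian. Combined with the assumed distributivity this shows $(X,+,\circ)$ is a brace, completing part~1.

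For part~2 I would use the formula $\sigma_a(b)=-a+a\circ b$ and the distributivity to compute
\begin{equation*}
\sigma_a(\sigma_b(c))=-a+a\circ(-b+b\circ c)=-a+a\circ(-b)-a+a\circ b\circ c.
\end{equation*}
From $a=a\circ 0=a\circ(b+(-b))=a\circ b-a+a\circ(-b)$ and abelianness I get $a\circ(-b)=2a-a\circ b$, hence $\sigma_a(\sigma_b(c))=-a\circ b+a\circ b\circ c=\sigma_{a\circ b}(c)$. Applying the same identity to the right-hand side and using $a\circ b=\sigma_a(b)\circ\tau_b(a)$ gives $\sigma_{\sigma_a(b)}(\sigma_{\tau_b(a)}(c))=\sigma_{\sigma_a(b)\circ\tau_b(a)}(c)=\sigma_{a\circ b}(c)$, and the two sides agree.
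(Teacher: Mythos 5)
Your proof is correct and follows essentially the same route as the paper: abelianness of $+$ is extracted from combining $a\circ b=\sigma_a(b)\circ\tau_b(a)$ with $\sigma_{\sigma_a(b)}(\tau_b(a))=a$, the formula $\sigma_a(b)=-a+a\circ b$ comes from the definition of $+$, and part 2 reduces to $\sigma_a(\sigma_b(c))=\sigma_{a\circ b}(c)=\sigma_{\sigma_a(b)\circ\tau_b(a)}(c)$ exactly as in the paper's Proposition \ref{pp0}. The only difference is that you explicitly verify the additive identity and inverses (via $-b=\sigma_b(b^{-1})$), a step the paper delegates to the reference \cite{DoRy23}; this is a welcome addition rather than a deviation.
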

\begin{proof}

$ $

\begin{enumerate}    

\item For the proof of $(X,+)$ being a group we refer the interested reader to \cite{DoRy23} for a step by step constructive approach of the algebraic structure. Then due to distributivity condition $(X,+, \circ)$ is a skew brace (see the original works on (skew) braces \cite{Ru05, Ru07, Ru19, GuaVen}). 
    To prove that $(X,+, \circ)$ is a brace we need to show that $(X,+)$ is an abelian group. 
    
    Indeed, from condition $\sigma_{\sigma_a(b)}(\tau_b(a)) =a$ and the structure group condition 
$a\circ b = \sigma_a(b) \circ \tau_b(a)$ we obtain for all $a,b \in X,$
\begin{equation}
\sigma_{a}(b)^{-1} \circ a \circ b = \sigma_{\sigma_a(b)}^{-1}(a)\ \Rightarrow\ a \circ \sigma_a^{-1}(b) = b \circ \sigma_b^{-1}(a)\ \Rightarrow\ a+b = b+a, \label{1} \nonumber
\end{equation}  
i.e. $(X,+)$ is abelian. We used  in the last part of the proof above the definition $a+b : = a \circ \sigma_a^{-1}(b).$ 

Moreover, for all $a,b \in X$
    \begin{equation}
        a\circ \sigma^{-1}_a(b) =  a+b \ \Rightarrow \ \sigma_a^{-1}(b) = a^{-1}\circ (a +b) \nonumber
    \end{equation}
and 
\begin{eqnarray}
&& \sigma^{-1}_{a}(\sigma_a(b)) = b\ \Rightarrow\ \
a^{-1} \circ (a + \sigma_a(b)) = b\ \Rightarrow \ \sigma_a(b) = -a +a \circ b. \qquad \qquad \qquad
 \hfill \nonumber
\end{eqnarray}

\item Condition (2) is condition (\ref{C1}) and its proof is given in the proof of Proposition \ref{pp0}. 
\hfill \qedhere
\end{enumerate}
    \end{proof}
    
 Notice that in Proposition \ref{p1} we assume a non-standard  distributivity 
 condition as the usual one does not apply (see for more details \cite{DoRy23}). Indeed, 
 let us assume that the usual distributivity condition holds, then 
 \begin{equation}
 a \circ (1+0) =a\ \Rightarrow \   a \circ 1 + a\circ 0 = a \ \Rightarrow\  a\circ 0 = 0 \ 
 \Rightarrow\ a = a \circ 0^{-1} = 1, ~~~\forall a \in X. \nonumber
 \end{equation}
 The latter statement is false given that we consider sets with not just the unit element $1.$ 
In general, when a non-empty set $X$ is equipped with two groups operation $\circ$ and $+,$ then the non-standard distributivity condition  of Proposition \ref{p1} (or slight variations of it \cite{DoRy23}) applies.

We conclude from Proposition \ref{p1} that the maps $\sigma_{a}(b)$ and $\tau_b(a)$  
provide an involutive solution $\check r (a, b) = (\sigma_{a}(b),\tau_b(a))$ of the set-theoretic braid equation 
(Rump's solution, Proposition \ref{pp0}).

    \subsection{Non-involutive case}
    
\noindent We recall from Section 4 (see also \cite{DoRySt}) that  invertible, non-involutive, set-theoretic solutions, which are the main focus of this subsection, are constructed from the rack-quandle solutions via an admissible Drinfel'd twist.

The following proposition is useful in describing general set-theoretic solutions.
\begin{pro} \label{p2}
Let $\sigma_a, \tau_b: X \to X,$ such that for all $a,b\in X,$ $\sigma_a$ is a bijection and $\tau_{b}(a) = \sigma^{-1}_{\sigma_a(b)}(\sigma_a(b)\triangleright a).$  Moreover, let $(X, \circ,1)$ be a group, such that for all $a,b \in X,$  
$a\circ b = \sigma_a(b) \circ \tau_{b}(a)$ and define $\bullet: X\times X \to X,$ such that $a \bullet b := a\circ \sigma^{-1}_a(b)\circ \xi,$ $\xi \in X$ is a fixed element.
Then, for all $a,b \in X,$  
\begin{enumerate}
\item  (a) $a \bullet b = b\bullet (b \triangleright a)$\\ (b)  $a\bullet \sigma_a(b) =  a\circ b\circ \xi.$ 
    \item Assume that $(X,\bullet)$ is a group and set $a\bullet b =: a+b$ and $\xi =1.$
    \begin{enumerate} 
    \item Then, $~ b\triangleright a =-b+  a + b,$ where $-b$ is the inverse of $b$ in $(X,+)$ (conjugate quandle),
    and $\sigma_a(b) = -a + a\circ b.$
    \item If $(X, +, \circ)$ is a skew brace, then  $\sigma_a(\sigma_b(c)) = \sigma_{\sigma_a(b)}(\sigma_{\tau_{b}(a{)}}(c))$ {and $\sigma_a(b\triangleright c) = \sigma_a(b) \triangleright \sigma_a(c).$} 
    \end{enumerate}
    \end{enumerate}
    \end{pro}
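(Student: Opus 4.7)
The plan is to dispose of the four assertions in order, letting part (1) carry the computational weight so that part (2) reduces to short algebraic manipulations in the additive group, paralleling the arguments already used in Proposition \ref{pp0}.

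I first handle (1)(b), which is a one-line substitution: by definition $a\bullet \sigma_a(b) = a\circ\sigma_a^{-1}(\sigma_a(b))\circ\xi = a\circ b\circ \xi$. For (1)(a), I rewrite both sides using the structure group identity $y\circ z = \sigma_y(z)\circ\tau_z(y)$ applied in reverse: taking $y=a$, $z=\sigma_a^{-1}(b)$ gives $a\circ\sigma_a^{-1}(b) = b\circ\tau_{\sigma_a^{-1}(b)}(a)$, and similarly taking $y=b$, $z=\sigma_b^{-1}(b\triangleright a)$ gives $b\circ\sigma_b^{-1}(b\triangleright a) = (b\triangleright a)\circ \tau_{\sigma_b^{-1}(b\triangleright a)}(b)$. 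Hence after cancellation the claim reduces to the single identity
\[
\tau_{\sigma_a^{-1}(b)}(a) \;=\; \sigma_b^{-1}(b\triangleright a),
\]
which follows from the hypothesis $\tau_b(a)=\sigma_{\sigma_a(b)}^{-1}(\sigma_a(b)\triangleright a)$ by substituting $b\mapsto \sigma_a^{-1}(b)$. This is the single load-bearing computation for part (1).

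For part (2)(a) I specialise $\xi=1$ and denote the group operation $\bullet$ additively. From (1)(a), $a+b=b+(b\triangleright a)$, so solving in $(X,+)$ yields $b\triangleright a = -b+a+b$, i.e.\ the conjugate-quandle formula. From (1)(b), $a+\sigma_a(b) = a\circ b$, giving $\sigma_a(b) = -a + a\circ b$. Thus the two stated formulas drop out immediately.

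Part (2)(b) is then proved exactly as in Proposition \ref{pp0}. Using the skew-brace distributivity in its three-term form $a\circ(x-y+z) = a\circ x - a\circ y + a\circ z$ (the equivalence already recorded in the paper), I expand
\[
\sigma_a(\sigma_b(c)) \;=\; -a + a\circ(-b+b\circ c) \;=\; -a\circ b + a\circ b\circ c \;=\; \sigma_{a\circ b}(c),
\]
and substituting $a\circ b = \sigma_a(b)\circ\tau_b(a)$ combined with the same identity (read right-to-left, now applied to the pair $\sigma_a(b),\tau_b(a)$) gives $\sigma_{a\circ b}(c) = \sigma_{\sigma_a(b)}(\sigma_{\tau_b(a)}(c))$, which is (C1). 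For the second identity, the skew-brace distributivity shows that $\sigma_a$ is a homomorphism of $(X,+)$, namely $\sigma_a(b+c) = -a + a\circ(b+c) = -a + a\circ b - a + a\circ c = \sigma_a(b) + \sigma_a(c)$, and in particular $\sigma_a(-b) = -\sigma_a(b)$. Applying this to $b\triangleright c = -b+c+b$ yields $\sigma_a(b\triangleright c) = -\sigma_a(b) + \sigma_a(c) + \sigma_a(b) = \sigma_a(b)\triangleright \sigma_a(c)$. The only conceptual subtlety anywhere in the argument is recognising that the conjugate-quandle structure on $(X,+)$ produced by part (1) is preserved by $\sigma_a$ precisely because the skew-brace axiom forces $\sigma_a$ to be an additive homomorphism; once that observation is in hand every step is a direct computation.
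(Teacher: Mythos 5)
Your proof is correct and follows essentially the same route as the paper's: part (1) by direct substitution together with the structure-group identity $a\circ\sigma_a^{-1}(b)=b\circ\tau_{\sigma_a^{-1}(b)}(a)$ and the substitution $b\mapsto\sigma_a^{-1}(b)$ in the defining relation for $\tau$, part (2)(a) by solving in $(X,+)$, and part (2)(b) by the computation $\sigma_a(\sigma_b(c))=\sigma_{a\circ b}(c)$ already carried out in Proposition \ref{pp0}. You are in fact slightly more complete than the paper, which for (2)(b) only cites Proposition \ref{pp0} for condition (\ref{C1}) and never explicitly verifies $\sigma_a(b\triangleright c)=\sigma_a(b)\triangleright\sigma_a(c)$; your observation that the skew-brace axiom makes $\sigma_a$ an additive homomorphism (so that it preserves the conjugation quandle $-b+c+b$) supplies exactly that missing step.
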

    \begin{proof}

$ $
    \begin{enumerate}
        \item  (a) Using the definitions  $a\bullet b = a \circ \sigma^{-1}_a(b)\circ \xi$ and $b\triangleright a =\sigma_b(\tau_{\sigma_a^{-1}(b)}(a))$ 
       for all $a,b \in X,$ we compute
        \begin{equation}
        b \bullet (b \triangleright a) = b \bullet  \sigma_b(\tau_{\sigma_a^{-1}(b)}(a)) =    
        b \circ \sigma^{-1}_b( \sigma_b(\tau_{\sigma_a^{-1}(b)}(a)))\circ \xi = b \circ \tau_{\sigma_a^{-1}(b)}(a))\circ \xi .    \nonumber
        \end{equation}
   But due to the condition  $a\circ b = \sigma_a(b) \circ \tau_{b}(a)$ we conclude that 
   $a \bullet b = b\bullet (b \triangleright a).$

\noindent (b) From the definition of $a\bullet b$, and the fact that $\sigma_a$ is bijection:
\begin{equation}
a\bullet \sigma_a(b) = a\circ \sigma_a^{-1}(\sigma_a(b))\circ \xi  = a \circ b \circ \xi. \nonumber
\end{equation}

\item (a) The first part follows immediately from (1) (a). From (1) (b)
 we immediately conclude that $\sigma_a(b) = -a +a \circ b.$

\noindent (b) This is condition (\ref{C1}) and it is shown in the proof of Proposal \ref{pp0}.
\hfill \qedhere
\end{enumerate}
\end{proof}

Notice that the binary operation $\bullet$ such that it satisfies condition (1) (a) in Proposition \ref{p2} is not uniquely defined (see \cite{DoRySt}  for a more detailed discussion). However, based on the definition of the operation $\bullet$ given in Proposition \ref{p2} we provide below a classification of non-involutive set-theoretic solutions given a specific rack/quandle.

In what follows we assume the existence of the map $\sigma_a: X \to X$ being a bijection and $(X, \circ)$ 
is a group.
\begin{enumerate}
\item {\bf The conjugate quandle.} 
This case corresponds to Part (2) of Proposition \ref{p2}. Recall that
$\sigma_a$ satisfies condition (a) of Definition \ref{admi2} and provides a solution to the Yang-Baxter equation. 
We also confirm that condition (b) of Definition \ref{admi2} is equivalent to $ a\circ b = \sigma_a(b) \circ \tau_b(a).$ 
This corresponds to the Guarnieri-Vendramin solution \cite{GuaVen}.

\item {\bf The affine quandle.} 
We generalize the definition of an affine quandle as follows. Let $f: X \to X$ be a a bijection and  
\begin{equation} 
f(-a+b+c) = -f(a) + f(b) + f(c) \label{qaf}
\end{equation}
for all $a,b,c \in X.$ Define also $\triangleright: X \times X \to X,$ such that 
$b\triangleright a = -f(b) + f(a) +b$ then $(X, \triangleright)$ is called an affine quandle.  
Note that, due to condition (\ref{qaf}) it is shown that $(X, \triangleright)$ is a quandle.
We define for all $a,b \in X$ $a \bullet b := f(a) + b,$ and conclude that $f(a) + b = f(b) +  
(b \triangleright a).$ Then we obtain from the definition of $\sigma^{-1}_a$ of Proposition \ref{p2}, $\sigma^{-1}_a(b) = a^{-1}\circ (f(a) +b)\circ \xi^{-1} $ 
and consequently $\sigma_a(b) =-f(a) + a \circ  b \circ \xi.$  Note that $\sigma_a$ satisfies condition (a) of Definition \ref{admi2} if and only if 
\begin{equation}
    a\circ f(b) - a + f(a) \circ \xi^{-1} = \sigma_a(b)\circ f(\tau_b(a)) - \sigma_{a}(b) + f( \sigma_a(b))\circ \xi^{-1}. \label{cc2}
    \end{equation}
    For instance, let $(X,+,\circ)$ be a skew brace and consider $f(a) := a \circ z - z,$ $z \in X$ is a fixed element and $\xi =1,$ then conditions (\ref{qaf}) and (\ref{cc2}) are satisfied.
    We also confirm that condition (b) of Definition \ref{admi2} is equivalent to $a\circ b = \sigma_a(b) \circ \tau_b(a),$ (see also \cite{DoRy22, DoRySt}).

\item {\bf The core quandle.} Recall the core quandle.  Recall that $(X, +)$ is a group and we define for all $a,b \in X,$ $b \triangleright a  = b -a +b.$ 
We also define for all $a,b \in X,$ $a\bullet b := -a+b,$ then $-a+b = -b + (b\triangleright a).$ 
Then according to Proposition \ref{p2} for $\xi =1$, $\sigma_a(b) =a + a \circ b.$  
{We confirm that $\sigma_a$ satisfies condition (a) and (b) of Definition \ref{admi2}. Notice, in particular that $\sigma_a(\sigma_b(c)) = \sigma_{a\circ b}(c),$ for all $a, b,c \in X,$ if and only if $(X,+,\circ)$ is a brace, i.e. $(X,+)$ is abelian.}
\end{enumerate}
And with this we conclude our discussion on the derivation of generic solutions of the set-theoretic Yang-Baxter equation from admissible twists. A more exhaustive analysis of admissible twists and general set-theoretic solutions in accordance to Proposition \ref{p2} will be presented in a forthcoming publication.

\subsection*{Acknowledgments}
\noindent 

\noindent I am indebted to G. Papamikos for sharing Figure 2 on the 3D consistency condition. I am grateful for the support 
from the Matrix Institute, University of Melbourne, and in particular the program ``Mathematics and Physics of Integrability'', 
1-19 July 2024 in Creswick, 
where part of this work was completed. Funding from the EPSRC research grant EP/V008129/1 is a also acknowledged.

\end{document}